\PassOptionsToPackage{expansion=false}{microtype}
\documentclass{lmcs}

\usepackage{amsmath,amssymb,mathtools}
\usepackage{booktabs,array}

\makeatletter
\def\endfront@text{}
\def\enddoc@text{}
\makeatother

\newcommand{\Sym}{\mathrm{Sym}}
\newcommand{\Symfin}{\mathrm{Sym}_{\mathrm{fin}}}
\newcommand{\Fix}{\mathrm{Fix}}
\newcommand{\supp}{\mathrm{supp}}
\newcommand{\Inv}{\mathrm{Inv}}
\newcommand{\Aut}{\mathrm{Aut}}
\newcommand{\Rel}{\mathrm{Rel}}
\newcommand{\fs}{\mathrm{fs}}
\newcommand{\Nom}{\mathbf{Nom}}
\newcommand{\FSS}{\mathbf{FSS}}
\newcommand{\Set}{\mathbf{Set}}
\newcommand{\Sub}{\mathrm{Sub}}
\newcommand{\Sch}{\mathbf{Sch}}
\newcommand{\Pfs}{\mathcal{P}_{\fs}}
\newcommand{\Hom}{\mathrm{Hom}}
\newcommand{\Vfs}{V_{\fs}}
\newcommand{\FD}{\mathsf{FD}}

\begin{document}

\title[Finite Dependence and Invariance Hierarchies]
{Finite Dependence and Invariance Hierarchies for Finitely Supported Structures}

\author[Gabriel Ciobanu]{Gabriel Ciobanu\lmcsorcid{0000-0002-8166-9456}}
\address{Romanian Academy and A.I.Cuza University, Ia\c{s}i, Romania}
\email{gabriel.ciobanu@iit.academiaromana-is.ro}
\ACMCCS{Theory of computation~Automata over infinite objects;
Theory of computation~Categorical semantics;
Theory of computation~Logic and verification}
\amsclass{03E25, 18B25, 20B27, 68Q45}
\keywords{Finitely supported structures, nominal sets, data symmetries,
finite dependence, invariant relations, least supports, freshness,
orbit-finiteness, permutation groups}

\begin{abstract}
Finite support indicates that the dependence is limited, but not how each finite context controls what can be observed. For a data symmetry $G\leq\Sym(A)$ and a finitely supported $G$-set $X$, we study the finite-dependence profile
$X^S=\{x:\Fix_G(S)\text{ fixes }x\}$, indexed by finite contexts
$S\subseteq A$. For relations on $A^n$ this yields Boolean algebras
$B^{(n)}_{G,S}=\Inv_n(\Fix_G(S))$, forming the invariance hierarchy.
The hierarchy separates several nominal principles. Its meet law is the
relation-algebra form of the Boja\'nczyk--Klin--Lasota least-support
criterion, while level injectivity is fungibility. The independent join
law governs composition across unions of contexts; it holds for locally
oligomorphic automorphism groups of ultrahomogeneous structures in purely
relational languages of arity at most two, but can fail for unary
observations over ternary data. Freshness remains valid sort by sort, and
its unrestricted some/any form characterizes full equality symmetry among
closed groups. With least supports, orbit-finiteness implies finite context
levels and a uniform support bound; under local oligomorphicity the converse
also holds.
Pointwise closure does not change the profile, and standard atomic-site
presentations show that suitable ultrahomogeneous structures with the same
age have equivalent action topoi, even though their material FSS universes
may retain different external cardinal data. Contextual orientability gives
a complementary pointed invariant whose threshold is the least support size
of choice from unordered pairs. For countable $\omega$-categorical
structures with degenerate algebraic closure, the meet law is equivalent to
weak elimination of imaginaries.
\end{abstract}

\maketitle

\section{Introduction}

Finite support is one of the most effective abstractions for computation
with names and infinite data. An element $x$ of a $G$-set is finitely
supported when there is a finite set $S$ of atoms such that every symmetry
fixing $S$ pointwise also fixes $x$. In classical nominal sets the atoms
carry only equality; in automata over atoms and related formalisms the
symmetry may preserve order, graph structure, equivalence classes, or other
relations. In the material theory of finitely supported structures (FSS),
the same support mechanism is used inside a set-theoretic universe with
atoms. These viewpoints share the same elementary support calculus, but
they emphasize different questions: nominal methods focus on syntax and
machines, data symmetries on structured infinite alphabets, and FSS on the
set-theoretic consequences of finite dependence.

The purpose of this paper is to isolate the common structural object behind
these uses of support. Rather than indicating only that an element has
\emph{some} finite support, we retain all finite contexts that support it.
For a finitely supported $G$-set $X$ and finite $S\subseteq A$ put
\[
   X^S=\{x\in X:\pi x=x\text{ for all }\pi\in\Fix_G(S)\}.
\]
Thus $X^S$ is the part of $X$ visible from the context $S$, and
$X=\bigcup_{S\in[A]^{<\omega}}X^S$. We call
$S\mapsto X^S$ the \emph{finite-dependence profile} of $X$. For the
powerset $G$-set of $A^n$ the profile becomes
\[
   B^{(n)}_{G,S}=\Inv_n(\Fix_G(S)),
\]
the Boolean algebra of $n$-ary relations supported by $S$. The family
$S\mapsto B^{(n)}_{G,S}$, over all finite contexts and arities, is the
\emph{invariance hierarchy}. It is the principal object studied here.

Finite support says that dependence is finite; the finite-dependence
profile shows how each finite context controls what can be observed. For a
fixed data symmetry $(A,G)$ this profile is canonically determined by the
action and preserved by equivariant isomorphism. We therefore treat it as a
structural invariant of the presented $G$-action. When invariance under an
arbitrary equivalence of action categories is intended, the additional
pointed or slice data will be stated explicitly. This distinction lets
intersection, union, freshness, finiteness, categorical asymmetry, and
material choice be compared in one language without being~conflated.

This point of view is compatible with, rather than opposed to, the general
nominal-$G$-set framework of Boja\'nczyk, Klin and Lasota
\cite{BKL14}. Their theory starts with an arbitrary data symmetry
$(D,G)$ and later isolates additional hypotheses, notably least supports and
fungibility, when finite representations are required. The present paper
does not re-prove those hypotheses under stronger assumptions. Instead it
places them inside the finite-context profile and then studies independent
phenomena carried by the same hierarchy. The material FSS viewpoint
\cite{AC20,AC22} contributes a second distinction: categorical finite-support semantics can make presentations with different
external cardinal data equivalent as abstract action topoi, whereas the
material universe retains the cardinality of the chosen atom sort.

\subsection{Contributions and provenance}\label{subsec:claims}

The results fall into five connected groups. We state the provenance here
because some parts of the support theory are established results that are
being reorganized rather than reproved as new criteria.

\begin{enumerate}[label=\textup{(\arabic*)},leftmargin=2.4em,itemsep=.45em]
\item \textbf{Context algebra: intersection and union.}
For finite contexts $S,T$, the meet identity
\[
 B^{(n)}_{G,S}\cap B^{(n)}_{G,T}=B^{(n)}_{G,S\cap T}
\]
is the relation-algebra form of the least-support criterion of
Boja\'nczyk--Klin--Lasota: their subgroup criterion and local orbit test hold for arbitrary data symmetries \cite[Def.~4.11, Fact~9.2,
Thm.~9.3]{BKL14}. We use them as calibration, and show additionally that
fungibility is exactly injectivity of the context-level map. By contrast,
the formally dual join identity
\[
 B^{(n)}_{G,S}\vee B^{(n)}_{G,T}=B^{(n)}_{G,S\cup T}
\]
is independent of least supports and asks whether every observation over a
combined context decomposes into observations over its two parts. It holds
for locally oligomorphic automorphism groups of ultrahomogeneous
structures in purely relational languages of arity at most two, while the generic $3$-uniform hypergraph and the betweenness
reduct exhibit genuinely mixed-context information.

\item \textbf{Freshness and state-space finiteness.}
The orbits of $\Fix_G(S)$ on $A\setminus S$ are the distinct freshness sorts
visible from $S$. A supported predicate is constant on each sort, so the
nominal \textsf{some/any} rule is always valid sort by sort; its unrestricted
one-sort form characterizes $\Sym(A)$ among closed groups. On arbitrary
state spaces the same profile separates two ingredients of orbit-finiteness:
finiteness of every level $X^S$ and a global bound on least-support size.
For symmetries with least supports, orbit-finiteness implies both
properties; under local oligomorphicity the converse holds as~well.

\item \textbf{Categorical asymmetry after naming parameters.}
For closed $G$, a context level has an intrinsic slice interpretation, so one
can ask which relational structures remain visible after the parameters in
$S$ are named. We develop orientability as one such invariant. Its
orbital/double-coset criterion isolates the swap obstruction, ordered
symmetries give positive examples, and contextual orientability measures how
many parameters must be named before the fresh atoms admit an invariant
tournament. The resulting threshold is exactly the least support size of a
choice function on unordered~pairs.

\item \textbf{Closure, cardinality, and the material/categorical split.}
The finite-dependence profile is unchanged when $G$ is replaced by its
pointwise closure, giving
$\FSS_G(A)\cong\FSS_{\overline G}(A)$ with the same support relation. The
standard atomic-site presentation then computes suitable closed action
categories from the age of an ultrahomogeneous presentation. In particular,
pure equality atoms of every infinite cardinality yield equivalent action
topoi. The material FSS universes remain distinguishable by their atom sort,
external cardinality, and support-sensitive choice phenomena. The
topos-theoretic presentation is standard
\cite{MacLaneMoerdijk,Caramello16}; the direct closure-invariance and material
comparison are included to fix precisely which information is structural and
which is presentation-dependent.

\item \textbf{Model-theoretic bridge.}
For countable $\omega$-categorical structures with degenerate algebraic
closure, the meet law is equivalent to weak elimination of imaginaries.
This identifies the least-support axis of the hierarchy with a standard
model-theoretic property while leaving the join axis independent. The
examples therefore separate support intersection, mixed-context dependence,
and categorical asymmetry rather than treating them as manifestations of one
undifferentiated notion of symmetry.
\end{enumerate}

The same finite-context profile supports these different interpretations, while the hypotheses and provenance of each remain explicit. Context composition and categorical orientability also admit more specialized developments; the present synthesis isolates the common finite-dependence mechanism and the boundaries between the resulting principles.

\subsection{Organization}
Section~\ref{sec:prelim} develops finite-dependence profiles and the
invariant-relation hierarchy. Sections~\ref{sec:regime}--\ref{sec:restrict} separate closure, cardinality and symmetry. Section~\ref{sec:meet}
calibrates the hierarchy against least supports and fungibility;
Section~\ref{sec:fresh} treats freshness and orbit-finiteness.
Section~\ref{sec:join} studies composition across unions of contexts, and Section~\ref{sec:orient} develops orientability and its contextual
threshold. Section~\ref{sec:modeltheory} gives the model-theoretic
correspondence, while Section~\ref{sec:modelling} collects the computational and foundational interpretations. The final section presents some open problems.

The paper therefore proceeds from a single support calculus to increasingly structured questions: first what finite contexts determine, then how context levels interact, and finally which categorical or material information survives changes of presentation.

\section{The support calculus} \label{sec:prelim}

Throughout, $A$ is an infinite set of \emph{atoms} and $G\leq\Sym(A)$ a subgroup, the \emph{symmetry parameter}. For finite $S\subseteq A$, $\Fix_G(S)=\{\pi\in G:\pi(a)=a\text{ for all }a\in S\}$ is the pointwise stabilizer. A $G$-set is a set $X$ with a left $G$-action; a finite $S\subseteq A$ \emph{supports} $x\in X$ if every
$\pi\in\Fix_G(S)$ fixes $x$, and $x$ is \emph{finitely supported} if some finite set supports it.

No purity assumption is imposed on the data represented by $A$. The set
$A$ is the underlying atom reservoir, while observable structure is encoded
by the chosen subgroup~$G$. In the principal structured examples one starts
from a relational structure $M$ with domain~$A$ and takes
$G=\Aut(M)$; for example $\Aut(\mathbb Q,<)$, the automorphism group of the
random graph, or the automorphism group of an equivalence relation. The
context lattice is always the ordinary lattice $[A]^{<\omega}$ of finite
subsets of the underlying atom set; changing the relational structure changes its stabilizers and hence the finite-dependence profile. Pure equality atoms are the special case $G=\Sym(A)$.

\begin{conv}\label{conv:standard}
Support is always understood in the standard, stabilizer-theoretic sense above. We do not use transitive-closure surrogates; in particular no special conventions are attached to small groups, and for the trivial group every element of every $G$-set is $\emptyset$-supported, so that $\FSS_{\{\mathrm{id}\}}(A)$ degenerates to the category of all sets: the symmetry parameter is informative only in the presence of symmetry.
\end{conv}

\noindent
We use ZFC as the surrounding metatheory. Thus ordinary set-indexed
choices used to construct external invariant relations are available.
No internal choice principle is assumed in either $\FSS_G(A)$ or the
material universe introduced in Section~\ref{sec:material}.

\begin{defi}\label{def:cat}
$\FSS_G(A)$ is the category whose objects are $G$-sets in which every
element is finitely supported, and whose morphisms are the
$G$-equivariant functions. We write $\Nom$ for
$\FSS_{\Symfin(\omega)}(\omega)$, equivalently (by the closure
invariance of Theorem~\ref{thm:closure} below, since
$\overline{\Symfin(\omega)}=\Sym(\omega)$) for
$\FSS_{\Sym(\omega)}(\omega)$; this is the category of nominal sets of \cite{Pitts13}.
\end{defi}

For a $G$-set $X$, write $\Pfs(X)$ for the set of subsets of $X$ that are
finitely supported under the image action. In particular,
\[
  \Pfs(A^n)
  =\bigcup_{S\in[A]^{<\omega}} B^{(n)}_{G,S},
  \qquad
  B^{(n)}_{G,S}=\bigl(\Pfs(A^n)\bigr)^S.
\]
Thus the invariance hierarchy is not an auxiliary collection of Boolean
algebras: it is exactly the finite-context filtration of the finitely
supported powerset familiar from the material FSS framework.

For a $G$-set $X$ and a finite context $S\subseteq A$, define the
\emph{$S$-level} of $X$ by
\[
   X^S:=\{x\in X: \pi x=x\text{ for every }\pi\in\Fix_G(S)\}.
\]
Thus $X^S$ is exactly the set of elements supported by $S$. When
$X\in\FSS_G(A)$, finite support is equivalently the exhaustion
\[
    X=\bigcup_{S\in[A]^{<\omega}}X^S.
\]
We call the monotone family $\FD_X:S\mapsto X^S$ the
\emph{finite-dependence profile} of $X$. It is not an additional
structure placed on $X$: it is recovered functorially from the $G$-action.
For $X=\mathcal P(A^n)$ the profile is the invariance hierarchy
$S\mapsto B^{(n)}_{G,S}$ introduced below. This terminology will let us
state freshness, finiteness and context-composition results at the same
level of abstraction.

The elementary facts of the support calculus are presented in the
generality of an arbitrary group action. We include their short proof
because later arguments use the support bounds explicitly.

\begin{lem}\label{lem:basic}
Let $X,Y$ be $G$-sets.
\begin{enumerate}[label=\textup{(\roman*)}]
\item If $S$ supports $x$ and $S\subseteq T$ with $T$ finite, then $T$ supports $x$.
\item If $S$ supports $x$ and $T$ supports $y$, then $S\cup T$
supports $(x,y)$.
\item If $f:X\to Y$ is equivariant and $S$ supports $x$, then $S$
supports $f(x)$.
\item A subset $U\subseteq X$ supported by $S$ is a union of
$\Fix_G(S)$-orbits.
\item $\pi\cdot x$ is supported by $\pi(S)$ whenever $x$ is supported
by $S$.
\end{enumerate}
\end{lem}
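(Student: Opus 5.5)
Each item follows directly from the definition of support via pointwise stabilizers, so the plan is to unfold that definition item by item. The only group-theoretic facts needed are monotonicity of $\Fix_G$ and its conjugation behaviour. Specifically, $S\subseteq T$ gives $\Fix_G(T)\subseteq\Fix_G(S)$, and $\Fix_G(S\cup T)=\Fix_G(S)\cap\Fix_G(T)$.

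For (i), any $\pi\in\Fix_G(T)$ lies in $\Fix_G(S)$ and so fixes $x$. For (ii), the action on $X\times Y$ is componentwise. A permutation $\pi\in\Fix_G(S\cup T)$ lies in both stabilizers, so $\pi(x,y)=(\pi x,\pi y)=(x,y)$. For (iii), if $\pi\in\Fix_G(S)$ then equivariance gives $\pi f(x)=f(\pi x)=f(x)$. For (iv), I read ``$U$ supported by $S$'' with respect to the image action $\pi\cdot U=\{\pi u:u\in U\}$. Then $\pi U=U$ for all $\pi\in\Fix_G(S)$, so $u\in U$ implies $\pi u\in U$. Hence $U$ is closed under the $\Fix_G(S)$-action and is therefore the union of the $\Fix_G(S)$-orbits of its elements.

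Item (v) is the only step needing a small computation, namely the conjugation identity
\[
\Fix_G(\pi(S))=\pi\,\Fix_G(S)\,\pi^{-1}.
\]
To check it, take $\sigma\in\Fix_G(\pi(S))$ and $a\in S$. Then $\pi^{-1}\sigma\pi(a)=\pi^{-1}\pi(a)=a$, so $\pi^{-1}\sigma\pi\in\Fix_G(S)$; the reverse inclusion is symmetric. This uses $\pi\in G$, which keeps the conjugate inside $G$. Now let $\sigma\in\Fix_G(\pi(S))$ and write $\sigma=\pi\tau\pi^{-1}$ with $\tau\in\Fix_G(S)$. Then $\sigma(\pi x)=\pi\tau x=\pi x$, since $S$ supports $x$.

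I expect no real obstacle. The only point to state carefully is that in (iv) support refers to the induced action on subsets, and that in (v) $\pi$ ranges over $G$, so the conjugation stays within the group.
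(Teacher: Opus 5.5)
Your proposal is correct and follows the paper's proof essentially item by item. It uses monotonicity of $\Fix_G$ for (i)--(ii), equivariance for (iii), closure of $U$ under $\Fix_G(S)$ for (iv), and conjugation by $\pi$ for (v). The only cosmetic difference is in (v): you prove the full identity $\Fix_G(\pi(S))=\pi\,\Fix_G(S)\,\pi^{-1}$, whereas the paper uses just the inclusion $\pi^{-1}\Fix_G(\pi(S))\,\pi\subseteq\Fix_G(S)$, which is all the argument needs.
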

\begin{proof}
(i) Since $\Fix_G(T)\subseteq\Fix_G(S)$, every permutation fixing $T$
fixes $x$. For (ii), a permutation fixing $S\cup T$ fixes both
coordinates. For (iii), if $\pi\in\Fix_G(S)$, then
$\pi f(x)=f(\pi x)=f(x)$. For (iv), if $u\in U$ and
$\pi\in\Fix_G(S)$, support of $U$ gives $\pi u\in U$, so the entire
$\Fix_G(S)$-orbit of $u$ lies in $U$. Finally, if
$\rho\in\Fix_G(\pi S)$, then $\pi^{-1}\rho\pi\in\Fix_G(S)$ and
$\rho(\pi x)=\pi(\pi^{-1}\rho\pi)x=\pi x$, which proves (v).
\end{proof}

In profile notation, Lemma~\ref{lem:basic} says in particular that
$S\subseteq T$ implies $X^S\subseteq X^T$, that an equivariant map
$f:X\to Y$ sends $X^S$ into $Y^S$, and that
$\pi(X^S)=X^{\pi S}$. These elementary compatibilities are the reason
finite contexts can be treated as a lattice of observable dependence.

\begin{lem}\label{lem:subobj}
For $X\in\FSS_G(A)$, the subobjects of $X$ are (represented by) the
$G$-invariant subsets of $X$, ordered by inclusion. In particular, an
\emph{atom} of $\FSS_G(A)$ (an object with exactly two subobjects) is the same thing as a single nonempty $G$-orbit of finitely
supported~elements.
\end{lem}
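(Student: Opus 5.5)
The plan is to realise a monomorphism of $\FSS_G(A)$ as an invariant subset and, conversely, to show every invariant subset arises this way. After that, atoms are read off from the lattice of invariant subsets.

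First, given a monomorphism $m:Y\to X$ in $\FSS_G(A)$, I will show $m$ is injective. If $m(y_1)=m(y_2)$ with $y_1\neq y_2$, test against a one-point object. The terminal object $1=\{*\}$ carries the trivial action and is $\emptyset$-supported, but maps $1\to Y$ only pick out equivariant (that is, $G$-fixed) points, so $1$ is not enough. Instead I use the kernel pair $K=\{(y,y')\in Y\times Y: m(y)=m(y')\}$. It is $G$-invariant because $m$ is equivariant, and every element is finitely supported by Lemma~\ref{lem:basic}(ii). Hence $K$ is an object of $\FSS_G(A)$, and the two projections $p_1,p_2:K\to Y$ are equivariant with $m p_1=m p_2$. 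Since $m$ is mono, $p_1=p_2$, so $y_1=y_2$. Thus $m$ is injective, and its image $m(Y)$ is $G$-invariant by equivariance. Moreover $m$ is isomorphic over $X$ to the inclusion $m(Y)\hookrightarrow X$, because the inverse bijection $m(Y)\to Y$ is equivariant. Conversely, any $G$-invariant $U\subseteq X$ inherits the action, and its elements stay finitely supported since support is computed in $X$. So $U$ is an object and the inclusion is mono. Two inclusions $U\hookrightarrow X$, $V\hookrightarrow X$ factor through each other exactly when $U\subseteq V$, so subobjects are ordered by inclusion. This gives the first claim.

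For the atom statement, the subobjects of $X$ are the unions of $G$-orbits of $X$, since invariant subsets are exactly such unions. There are exactly two subobjects precisely when $\emptyset\neq X$ and the only invariant subsets are $\emptyset$ and $X$. That happens iff $X$ consists of a single nonempty orbit: if $x\in X$, then $Gx$ is a nonempty invariant subset, so it must equal $X$. Every element of that orbit is finitely supported because $X\in\FSS_G(A)$; conversely, the orbit of a finitely supported element consists of finitely supported elements by Lemma~\ref{lem:basic}(v).

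The only step needing care is the injectivity of monos. One cannot naively test with $1$, because global elements are only the fixed points. The kernel-pair object $K$ circumvents this, and the key check is that $K$ lies in $\FSS_G(A)$, which Lemma~\ref{lem:basic}(ii) supplies.
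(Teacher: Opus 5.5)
Your proof is correct and follows the same route as the paper: a subobject is identified with the $G$-invariant image of a monomorphism, invariant subsets include equivariantly, and atoms are then read off as single nonempty orbits. You add two things the paper leaves implicit: the kernel-pair argument that monomorphisms of $\FSS_G(A)$ are injective (the paper simply says ``an equivariant injection''), and the explicit derivation of the atom clause. One phrase needs fixing: ``factor through each other exactly when $U\subseteq V$'' should say that $U\hookrightarrow X$ factors through $V\hookrightarrow X$ exactly when $U\subseteq V$.
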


\begin{proof}
An equivariant injection $Z\to X$ has $G$-invariant image, and its
elements are finitely supported in $X$; conversely a $G$-invariant
subset of $X$ is an object of $\FSS_G(A)$ and includes equivariantly.
Two monomorphisms represent the same subobject iff they have the same~image.
\end{proof}

For $n\geq1$, $\Rel^{(n)}_{\fs}(A)$ is the set of finitely
supported subsets of $A^n$ (with the coordinatewise action), and for
finite $S\subseteq A$, $B^{(n)}_{G,S} = \{R\subseteq A^n:\ S\text{ supports }R\} = \Inv_n\bigl(\Fix_G(S)\bigr)$ is a Boolean subalgebra of $\mathcal P(A^n)$; the displayed identity is
immediate from Lemma~\ref{lem:basic}(iv). For a family $\mathcal R$
of $n$-ary relations, $\Aut_G(\mathcal R)=\{\pi\in G:\pi\cdot R=R\text{ for all } R\in\mathcal R\}$. The following two lemmas indicate the elementary interface between subgroups and supported relation algebras that underlies this interpretation. Both are standard consequences of the subgroup--relation Galois connection
(short arguments are included to keep the paper self-contained).

\begin{lem}[Galois connection and stabilizer recovery]
\label{lem:galois}
Fix $n\geq1$.
\begin{enumerate}[label=\textup{(\roman*)}]
\item The assignments $H\mapsto\Inv_n(H)$ and
$\mathcal R\mapsto\Aut_G(\mathcal R)$ form an antitone Galois
connection between subgroups of $G$ and families of $n$-ary relations: $\mathcal R\subseteq\Inv_n(H)$ if and only if
$H\leq\Aut_G(\mathcal R)$.
\item For every finite $S\subseteq A$,
$\Aut_G\bigl(B^{(n)}_{G,S}\bigr)=\Fix_G(S)$.
\end{enumerate}
Consequently, the level map $\ell_n:S\mapsto B^{(n)}_{G,S}$ is
monotone, and it is injective whenever $S\mapsto\Fix_G(S)$ is
injective.
\end{lem}
\begin{proof}
For (i), both sides assert that every element of $H$ fixes every relation in $\mathcal R$ setwise. For (ii), the inclusion $\supseteq$ is the
definition of $B^{(n)}_{G,S}$. For $\subseteq$, fix $s\in S$ and 
$R_s:=\{\bar a\in A^n: a_1=s\}$. Every $\pi\in\Fix_G(S)$ fixes
$R_s$, so $R_s\in B^{(n)}_{G,S}$; and for arbitrary $\pi\in G$ one has
$\pi\cdot R_s=R_{\pi(s)}$, which equals $R_s$ only if $\pi(s)=s$.
Hence any $\pi$ preserving every member of $B^{(n)}_{G,S}$ fixes $S$
pointwise. Monotonicity of $\ell_n$ follows from antitonicity of
$\Inv_n$; and if $B^{(n)}_{G,S}=B^{(n)}_{G,T}$, then applying (ii) gives $\Fix_G(S)=\Fix_G(T)$. Thus equality of
levels forces equality of pointwise stabilizers; whenever the map
$S\mapsto\Fix_G(S)$ is injective, it also forces $S=T$.
\end{proof}

\begin{lem}[Level injectivity and fungibility]\label{lem:fungible}
Let $G\leq\Sym(A)$. The following are equivalent.
\begin{enumerate}[label=\textup{(\roman*)}]
\item The map $S\mapsto\Fix_G(S)$ is injective on finite subsets of $A$.
\item The level map $\ell_n:S\mapsto B^{(n)}_{G,S}$ is injective for some, equivalently for every, $n\geq1$.
\item Every finite $C\subseteq A$ is fungible in the sense of \cite[Def.~9.6]{BKL14}: for each $c\in C$ there is $\pi\in G$ such that $\pi(c)\neq c$ and $\pi$ fixes $C\setminus\{c\}$ pointwise.
\end{enumerate}
\end{lem}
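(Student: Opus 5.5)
The plan is to prove (i)$\Leftrightarrow$(ii) using Lemma~\ref{lem:galois}, and (i)$\Leftrightarrow$(iii) by a direct stabilizer argument.

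\textbf{(i)$\Leftrightarrow$(ii).} Fix any $n\geq1$. The last clause of Lemma~\ref{lem:galois} already gives that (i) implies injectivity of $\ell_n$, and it gives this for every $n$. Conversely, suppose $\ell_n$ is injective for some $n$, and let $\Fix_G(S)=\Fix_G(T)$. Then the definition $B^{(n)}_{G,S}=\Inv_n(\Fix_G(S))$ yields $B^{(n)}_{G,S}=B^{(n)}_{G,T}$, so $S=T$. Hence ``some $n$'' implies (i), which implies ``every $n$'', and the two readings of (ii) agree.

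\textbf{(iii)$\Rightarrow$(i).} Suppose $S\neq T$ are finite, say $c\in S\setminus T$. Put $C=T\cup\{c\}$. Fungibility of $C$ at $c$ gives $\pi\in G$ with $\pi(c)\neq c$ and $\pi$ fixing $C\setminus\{c\}=T$ pointwise. Then $\pi\in\Fix_G(T)\setminus\Fix_G(S)$, so the stabilizers differ.

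\textbf{(i)$\Rightarrow$(iii).} Let $C$ be finite and $c\in C$, and set $D=C\setminus\{c\}$. Since $D\neq C$, injectivity gives $\Fix_G(D)\neq\Fix_G(C)$. Because $D\subseteq C$, we have $\Fix_G(C)\subseteq\Fix_G(D)$, so the inclusion is strict. Choose $\pi\in\Fix_G(D)\setminus\Fix_G(C)$. This $\pi$ fixes $D$ pointwise but does not fix $C$ pointwise, so it must move the only remaining point: $\pi(c)\neq c$. This is exactly the required witness.

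There is no real obstacle here. The only point needing care is using monotonicity ($S\subseteq T$ implies $\Fix_G(T)\subseteq\Fix_G(S)$) to make the inclusion strict in (i)$\Rightarrow$(iii), and choosing $C=T\cup\{c\}$ in the converse so that fungibility is applied to a single finite set.
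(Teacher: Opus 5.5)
Your proof is correct and follows the paper's argument: (i)$\Leftrightarrow$(ii) rests on the stabilizer recovery of Lemma~\ref{lem:galois}, and (i)$\Rightarrow$(iii) compares $\Fix_G(C\setminus\{c\})$ with $\Fix_G(C)$. The only difference is in (iii)$\Rightarrow$(i): you apply fungibility directly to $T\cup\{c\}$, whereas the paper applies it to $S\cup T$ through a chain of stabilizer inclusions, so your version is a slightly more direct form of the same idea.
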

\begin{proof}
If $\Fix_G(S)=\Fix_G(T)$, then $B^{(n)}_{G,S}=B^{(n)}_{G,T}$ for every $n$. Conversely, equality of two levels at any fixed arity implies equality of the corresponding pointwise stabilizers by Lemma~\ref{lem:galois}(ii). This proves (i)$\Leftrightarrow$(ii).

If a finite $C$ is not fungible, choose $c\in C$ such that every element of $\Fix_G(C\setminus\{c\})$ fixes~$c$. Then $\Fix_G(C\setminus\{c\})=\Fix_G(C)$, so (i) fails. Thus (i) implies (iii). Conversely suppose (iii) and $\Fix_G(S)=\Fix_G(T)$. If $S\neq T$, interchange $S,T$ if necessary and choose $c\in T\setminus S$. Put $U=S\cup T$ and $V=U\setminus\{c\}$. Since $S\subseteq V\subseteq U$,
\[
  \Fix_G(U)\leq\Fix_G(V)\leq\Fix_G(S).
\]
But $\Fix_G(U)=\Fix_G(S)\cap\Fix_G(T)=\Fix_G(S)$, so all three groups are equal. Fungibility of~$U$ at~$c$, however, supplies an element of $\Fix_G(V)\setminus\Fix_G(U)$, a contradiction. Hence $S=T$, proving (i).
\end{proof}

Thus the injectivity condition on context levels is not an auxiliary regularity assumption: it is exactly the fungibility condition isolated in \cite{BKL14}. In the standard homogeneous examples below it can be checked directly from the extension property; when it fails, distinct finite contexts collapse to the same level of the hierarchy.

\begin{lem}[Finite atomicity]\label{lem:atomic}
If $\Fix_G(S)$ has finitely many orbits on $A^n$, then
$B^{(n)}_{G,S}$ is a finite atomic Boolean algebra whose atoms are
exactly those orbits.
\end{lem}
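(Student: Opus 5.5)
The plan is to identify $B^{(n)}_{G,S}$ with the Boolean algebra of unions of $\Fix_G(S)$-orbits on $A^n$, and then use that a set algebra generated by a finite partition is finite and atomic, with the blocks as atoms.

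\emph{Unions of orbits.} By Lemma~\ref{lem:basic}(iv), every $R\in B^{(n)}_{G,S}$ is a union of $\Fix_G(S)$-orbits. Conversely, let $R$ be a union of such orbits and $\pi\in\Fix_G(S)$. Each orbit $O$ satisfies $\pi\cdot O=O$, so $\pi\cdot R=R$, and $S$ supports $R$. Hence the map $\mathcal U\mapsto\bigcup\mathcal U$ from subsets $\mathcal U$ of the orbit set $\Omega=A^n/\Fix_G(S)$ onto $B^{(n)}_{G,S}$ is surjective.

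\emph{Isomorphism.} The orbits are nonempty and pairwise disjoint, so the map $\mathcal U\mapsto\bigcup\mathcal U$ is injective. It preserves unions, and complements are preserved because $A^n=\bigcup\Omega$. It is therefore a Boolean isomorphism $\mathcal P(\Omega)\cong B^{(n)}_{G,S}$, with the Boolean operations on $B^{(n)}_{G,S}$ taken from $\mathcal P(A^n)$.

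\emph{Conclusion.} When $|\Omega|=k<\infty$, the algebra $\mathcal P(\Omega)$ is finite with $2^k$ elements and atomic. Its atoms are the singletons $\{O\}$, which correspond to the orbits $O$ themselves. Directly, each orbit $O$ is a nonzero element, and any nonzero element of $B^{(n)}_{G,S}$ below $O$ is a nonempty union of orbits inside $O$, so it equals $O$. Conversely, any atom is a nonempty union of orbits and therefore contains some orbit, so it equals that orbit.

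No step is a real obstacle. The only point needing care is the converse direction in the first step, namely that unions of orbits are supported. This is immediate because $\Fix_G(S)$ acts on each of its own orbits. Finiteness of the number of orbits is used only at the last step to obtain finiteness; atomicity would in fact hold without it.
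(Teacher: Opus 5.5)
Your proof is correct and follows the paper's argument: both use Lemma~\ref{lem:basic}(iv) to identify $B^{(n)}_{G,S}$ with the unions of $\Fix_G(S)$-orbits on $A^n$, and then read off that it is the full powerset of the finite orbit set, with the orbits as its atoms. You also spell out two points the paper leaves implicit: the converse that every union of orbits is $S$-supported, and the explicit Boolean isomorphism with $\mathcal P(\Omega)$.
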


\begin{proof}
By Lemma~\ref{lem:basic}(iv), applied to the coordinatewise action on
$A^n$, a relation is $S$-supported if and only if it is a union of
$\Fix_G(S)$-orbits on $A^n$; with finitely many orbits,
$B^{(n)}_{G,S}$ is therefore the full powerset of the (finite) set of
orbits.
\end{proof}

By Lemma~\ref{lem:fungible}, injectivity of the level map may equivalently be read as fungibility of the data symmetry. We invoke it below only in examples where fungibility is immediate from the available extension or homogeneity property; the BKL examples presented in Proposition~\ref{prop:bklpathologies} show that it is independent of least supports.

Finally, we say $G$ is \emph{locally oligomorphic} if $\Fix_G(S)$ has
finitely many orbits on $A^n$ for every finite $S$ and every $n$; under this hypothesis each $B^{(n)}_{G,S}$ is, by Lemma~\ref{lem:atomic}, a finite atomic Boolean algebra whose atoms are the $\Fix_G(S)$-orbits on $A^n$. We note that $\Sym(A)$ is locally oligomorphic for \emph{every} infinite $A$: the orbits of
$\Fix_{\Sym(A)}(S)$ on $A^n$ are classified by equality types with
parameters in $S$, a finite list independent of $|A|$. This
cardinal-independence of the orbit calculus is the combinatorial germ
of the cardinality-invariance result below.

\section{Closure invariance and the age presentation}
\label{sec:regime}

\emph{Logical dependencies.} The meet-law characterization of
Section~\ref{sec:meet} does not depend on this section or on
Section~\ref{sec:material}; a reader whose interest is that theorem may
proceed directly. What is established here is used in the categorical
comparison results of Section~\ref{sec:orient} and in fixing the scope of
the model-selection criteria of Section~\ref{sec:modelling}.

This section establishes the cardinality and topology baseline on which
the later comparison of nominal principles rests. The route is
modular: a finitary-dependence lemma valid for every symmetry
parameter; a closure-invariance theorem reducing every $\FSS_G(A)$ and its finite-context observable hierarchy to
the case of a closed group; a self-contained presentation of the
closed case as sheaves on a site of tuple orbits; and the computation
of that site from the age of an ultrahomogeneous structure. Cardinality
invariance for pure sets follows as a corollary, with a proof independent of
the cardinality of $A$ at every step.

\subsection{Finitary dependence and closure invariance}

\begin{lem}[Finitary dependence]\label{lem:dependence}
Let $G\leq\Sym(A)$, let $X$ be a $G$-set, and let $x\in X$ be
supported by the finite set $S$. If $\pi,\pi'\in G$ agree on $S$,
then $\pi\cdot x=\pi'\cdot x$.
\end{lem}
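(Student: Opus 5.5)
The plan is to reduce the comparison of $\pi$ and $\pi'$ to a single element of the pointwise stabilizer of $S$. Put $\sigma:=\pi^{-1}\pi'\in G$. For every $a\in S$ we have $\pi'(a)=\pi(a)$, hence $\sigma(a)=\pi^{-1}(\pi'(a))=\pi^{-1}(\pi(a))=a$. Therefore $\sigma\in\Fix_G(S)$. Because $S$ supports $x$, this gives $\sigma\cdot x=x$.

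Next we transport this identity along $\pi$, using only the axioms of a left action:
\[
  \pi'\cdot x=(\pi\sigma)\cdot x=\pi\cdot(\sigma\cdot x)=\pi\cdot x .
\]
This is exactly the claimed equality.

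Equivalently, one can argue with the opposite quotient $\pi'\pi^{-1}$. It fixes $\pi(S)$ pointwise, and by Lemma~\ref{lem:basic}(v) the set $\pi(S)$ supports $\pi\cdot x$. I will use the first route, because it needs nothing beyond the definition of support.

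There is no real obstacle here. The only point needing care is the order of composition: one must form $\pi^{-1}\pi'$ rather than $\pi'\pi^{-1}$, so that the result fixes $S$ itself and not its image $\pi(S)$. No hypothesis on $G$ is used: neither closedness nor oligomorphicity. This matches the lemma being stated for an arbitrary symmetry parameter.
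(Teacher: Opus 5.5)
Your proof is correct and is essentially the paper's argument. The paper takes the right quotient $\pi'\pi^{-1}$, which fixes $\pi(S)$ pointwise, and applies Lemma~\ref{lem:basic}(v); you take the left quotient $\pi^{-1}\pi'\in\Fix_G(S)$ directly, which just inlines the conjugation step that proves that lemma.
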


\begin{proof}
$\pi'\pi^{-1}\in G$ fixes $\pi(S)$ pointwise: for $s\in S$,
$(\pi'\pi^{-1})(\pi s)=\pi'(s)=\pi(s)$. By
Lemma~\ref{lem:basic}(v), $\pi(S)$ supports $\pi\cdot x$; thus
$\pi'\cdot x=(\pi'\pi^{-1})\cdot(\pi\cdot x)=\pi\cdot x$.
\end{proof}

Recall that $\Sym(A)$ carries the topology of pointwise convergence,
with the pointwise stabilizers $\Fix_{\Sym(A)}(S)$, $S$ finite, a
neighbourhood basis of the identity; for $G\leq\Sym(A)$,~$\overline G$ denotes its closure, such that $\bar\pi\in\overline G$ iff for every finite $F\subseteq A$ some $\pi\in G$ agrees with $\bar\pi$ on~$F$.

\begin{thm}[Closure invariance]\label{thm:closure}
For every $G\leq\Sym(A)$ there is an isomorphism of categories
$\FSS_{\overline G}(A)\cong\FSS_G(A)$
which is the identity on underlying sets and functions. A finite set
supports an element for the $G$-action if and only if it supports it for
the extended $\overline G$-action. Hence every category in the
family has a canonical presentation by a closed symmetry parameter.
\end{thm}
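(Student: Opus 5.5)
The plan is to extend each finitely supported $G$-action uniquely to a $\overline G$-action, using Lemma~\ref{lem:dependence} as the engine. Let $X\in\FSS_G(A)$, $x\in X$ supported by a finite $S$, and $\bar\pi\in\overline G$. By the description of the closure there is $\pi\in G$ agreeing with $\bar\pi$ on $S$. Set $\bar\pi\cdot x:=\pi\cdot x$. Lemma~\ref{lem:dependence} shows this does not depend on the choice of $\pi$ for fixed $S$. Independence of $S$ follows by passing through $S\cup S'$: if $S$ and $S'$ both support $x$, choose $\pi\in G$ agreeing with $\bar\pi$ on $S\cup S'$. This $\pi$ is a legitimate choice for both $S$ and $S'$, so the two definitions coincide. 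The extension clearly restricts to the given action on $G$.

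Next I check the action axioms and the support statement. The identity acts trivially, since we may take $\pi=\mathrm{id}$. For composition, take $\bar\sigma,\bar\pi\in\overline G$ and $x$ supported by $S$. Choose $\pi\in G$ agreeing with $\bar\pi$ on $S$; then $\pi x$ is supported by $\pi S=\bar\pi S$ (Lemma~\ref{lem:basic}(v)). Choose $\sigma\in G$ agreeing with $\bar\sigma$ on $\bar\pi S$. Then $\sigma\pi$ agrees with $\bar\sigma\bar\pi$ on $S$, so
\[
(\bar\sigma\bar\pi)x=\sigma\pi x=\bar\sigma(\bar\pi x).
\]
For supports, one direction is immediate: $\Fix_G(S)\subseteq\Fix_{\overline G}(S)$. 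Conversely, suppose $S$ supports $x$ for $G$ and $\bar\pi\in\Fix_{\overline G}(S)$. Then the approximating $\pi$ agrees with the identity on $S$, so $\pi\in\Fix_G(S)$ and $\bar\pi x=\pi x=x$. Hence the two support relations coincide, and every object of $\FSS_G(A)$ becomes an object of $\FSS_{\overline G}(A)$.

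On morphisms, a $G$-equivariant $f:X\to Y$ is automatically $\overline G$-equivariant. Given $x$ supported by $S$, the set $S$ also supports $f(x)$ (Lemma~\ref{lem:basic}(iii)). So the same $\pi$ approximating $\bar\pi$ on $S$ computes both $\bar\pi x$ and $\bar\pi f(x)$, giving
\[
f(\bar\pi x)=f(\pi x)=\pi f(x)=\bar\pi f(x).
\]
Conversely, restriction along $G\le\overline G$ sends $\FSS_{\overline G}(A)$ into $\FSS_G(A)$. To do so it must preserve finite support, which holds because $\Fix_G(S)\subseteq\Fix_{\overline G}(S)$.

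The remaining point, and the step I expect to need the most care, is that extension and restriction are mutually inverse. Restricting an extension gives back the original action by construction. For the other direction, an arbitrary finitely supported $\overline G$-action must coincide with the extension of its restriction. Lemma~\ref{lem:dependence} applied to the group $\overline G$ gives this: $\bar\pi$ and its approximant $\pi$ agree on a support $S$ of $x$, so $\bar\pi x=\pi x$. Thus the functors are identity on underlying sets and functions and are inverse isomorphisms, and the canonical closed presentation is immediate.
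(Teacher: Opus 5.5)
Your proof is correct and follows the paper's argument step for step. You define $\bar\pi x:=\pi x$ by approximating on a support, use Lemma~\ref{lem:dependence} and the $S\cup S'$ trick for well-definedness, verify the action law via the support $\pi S=\bar\pi S$ of $\pi x$, identify the two support relations, transfer equivariance with Lemma~\ref{lem:basic}(iii), and get uniqueness because a $\overline G$-action agrees with its $G$-approximants on a support. In that final step, $S$ should be taken to be a support for the given $\overline G$-action (which is automatically a $G$-support). This is legitimate only because you already showed the extension does not depend on the chosen support; the paper handles the same point by approximating on $S\cup T$.
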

\begin{proof}
Restriction along $G\leq\overline G$ sends every finitely supported
$\overline G$-set and every equivariant map to the corresponding
$G$-object and $G$-map. It remains to extend a finitely supported
$G$-action uniquely to $\overline G$.

Let $X$ be a finitely supported $G$-set. If $\bar\pi\in\overline G$
and $x\in X$ has finite support $S$, choose $\pi\in G$ agreeing with
$\bar\pi$ on $S$ and define $\bar\pi\cdot x:=\pi\cdot x$.
If $\pi'$ is another such approximation, Lemma~\ref{lem:dependence}
gives $\pi x=\pi'x$. If $T$ is another support, choose an approximation
on $S\cup T$ and apply the same lemma twice; thus the value is also
independent of the chosen support.

The identity acts trivially by choosing the identity approximation.
For composition, let~$S$ support $x$, choose $\sigma\in G$ agreeing
with $\bar\sigma$ on $S$, and choose $\pi\in G$ agreeing with
$\bar\pi$ on $\sigma(S)=\bar\sigma(S)$. Then
$\bar\pi(\bar\sigma x)=\pi(\sigma x)=(\pi\sigma)x$,
and $\pi\sigma$ agrees with $\bar\pi\bar\sigma$ on $S$; hence the
right-hand side is $(\bar\pi\bar\sigma)x$. This proves the action
law, and the action extends the original one because an element of
$G$ may be chosen as its own approximation.

The extension is unique among finitely supported $\overline G$-actions
extending the given $G$-action. Indeed, for any such extension let
$T$ support $x$, and choose $\pi\in G$ agreeing with $\bar\pi$ on
$S\cup T$. Then $\pi^{-1}\bar\pi$ fixes $T$ pointwise, so the other
extension sends $x$ to $\pi x$; because $\pi$ also agrees with
$\bar\pi$ on $S$, this is the value constructed above.

If $S$ supports $x$ for $G$ and
$\bar\pi\in\Fix_{\overline G}(S)$, choose the approximation
$\pi\in\Fix_G(S)$; then $\bar\pi x=\pi x=x$. Conversely, support for
$\overline G$ implies support for the subgroup $G$. Finally, if
$f:X\to Y$ is $G$-equivariant and $S$ supports $x$, choose
$\pi\in G$ agreeing with $\bar\pi$ on $S$. By
Lemma~\ref{lem:basic}(iii), $S$ also supports $f(x)$, and thus
$f(\bar\pi x)=f(\pi x)=\pi f(x)=\bar\pi f(x)$.
Therefore the extension is functorial, and restriction and extension are
mutually inverse.
\end{proof}

\begin{cor}[Closure invariance of finite-context observables]
\label{cor:closure-hierarchy}
For every finite $S\subseteq A$ and every $n\geq1$,
\[
   B^{(n)}_{G,S}=B^{(n)}_{\overline G,S}.
\]
Consequently $G$ and $\overline G$ have the same finite-tuple orbits,
and local oligomorphicity, the meet law, the join law, and the
least-support property depend only on the closure of the symmetry
parameter.
\end{cor}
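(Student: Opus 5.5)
The plan is to read the corollary off Theorem~\ref{thm:closure}, applied to the single $G$-set $\mathcal P(A^n)$ with its coordinatewise image action. That set need not be finitely supported as a whole. So I will work with the $G$-set $\Pfs(A^n)=\Rel^{(n)}_{\fs}(A)$, which is an object of $\FSS_G(A)$.

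The first step is to check that the extended $\overline G$-action on $\Pfs(A^n)$ given by the theorem is the ordinary image action of $\overline G\leq\Sym(A)$. Take $R$ supported by $S$ and $\bar\pi\in\overline G$, and choose $\pi\in G$ agreeing with $\bar\pi$ on $S$. Then $\rho:=\pi^{-1}\bar\pi$ fixes $S$ pointwise and lies in $\overline G$.

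I need $\rho\cdot R=R$, and here I argue directly with tuples instead of invoking the theorem. Let $\bar a\in A^n$ and put $F=S\cup\{a_1,\dots,a_n\}$. Some $\sigma\in G$ agrees with $\rho$ on $F$; then $\sigma\in\Fix_G(S)$. Hence $\bar a\in R$ iff $\sigma\bar a\in R$, and $\sigma\bar a=\rho\bar a$. Thus $\rho R\subseteq R$. The same argument applied to $\rho^{-1}$, which also fixes $S$ and lies in $\overline G$, gives $\rho R=R$. Therefore $\bar\pi R=\pi R$, so the image action agrees with the extended one.

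This computation already gives $B^{(n)}_{G,S}\subseteq B^{(n)}_{\overline G,S}$: every $\bar\pi\in\Fix_{\overline G}(S)$ fixes $R$. The reverse inclusion is trivial, since $\Fix_G(S)\leq\Fix_{\overline G}(S)$.

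The consequences then follow in turn.

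\begin{itemize}
\item \emph{Orbits.} I first show that the $\Fix_G(S)$- and $\Fix_{\overline G}(S)$-orbits on $A^n$ coincide, using the same approximation argument. An element of $\Fix_{\overline G}(S)$ moves $\bar a$ exactly as some element of $\Fix_G(S)$ does, namely one approximating it on $S\cup\bar a$. Taking $S=\emptyset$ gives equality of the finite-tuple orbits of $G$ and $\overline G$.
\item \emph{Local oligomorphicity.} This is a condition on the number of these orbits, so it transfers immediately.
\item \emph{Meet and join laws.} These are identities among the algebras $B^{(n)}_{\cdot,S}$, all of which coincide for $G$ and $\overline G$.
\item \emph{Least supports.} By the theorem, the support relation on every object is identical for $G$ and $\overline G$. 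Hence least supports exist for one iff for the other.
\end{itemize}

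The only real obstacle is that $\mathcal P(A^n)$ may fall outside $\FSS_G(A)$, so the theorem does not literally apply to it. The direct tuplewise approximation above avoids this. It uses only that each membership test $\bar a\in R$ involves finitely many atoms.
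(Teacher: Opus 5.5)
Your proof is correct and is essentially the paper's argument: approximate an element of $\Fix_{\overline G}(S)$ by an element of $\Fix_G(S)$ on $S$ together with the entries of each tuple, read off $\bar a\in R\iff\bar g\bar a\in R$, and then transfer the remaining properties. That transfer goes through the equal orbit partitions, the equal algebras, and the support relation of Theorem~\ref{thm:closure}. The preliminary check that the extended action on $\Pfs(A^n)$ is the image action is harmless but unnecessary. As you yourself note, applying the tuplewise argument directly to $\bar\pi\in\Fix_{\overline G}(S)$ (taking $\pi=\mathrm{id}$) already gives the inclusion.
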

\begin{proof}
The closure of $\Fix_G(S)$ in $\Sym(A)$ is
$\Fix_{\overline G}(S)$: if $\bar g$ belongs to the latter and
$F\subseteq A$ is finite, approximate $\bar g$ on $F\cup S$ by an
element of $G$, which then fixes $S$ pointwise. If
$R\in B^{(n)}_{G,S}$ and $\bar g\in\Fix_{\overline G}(S)$, then
for each tuple $\bar a\in A^n$ choose such an approximation agreeing
with $\bar g$ on $S$ and on the entries of $\bar a$. Invariance of
$R$ under the approximation gives
$\bar a\in R$ if and only if $\bar g\bar a\in R$. Thus
$R\in B^{(n)}_{\overline G,S}$; the reverse inclusion is immediate.
The same approximation argument gives equality of the finite-tuple
orbit partitions. The remaining properties are expressed entirely
through these orbit partitions, the algebras $B^{(n)}_{G,S}$, or the
support relation, which is identical by Theorem~\ref{thm:closure}.
\end{proof}

\begin{cor}\label{cor:cont}
For closed $G\leq\Sym(A)$, an action of $G$ on a discrete set $X$ is
continuous if and only if every element of $X$ is finitely supported;
thus $\FSS_G(A)$ is the category $\mathrm{Cont}(G)$ of continuous
discrete $G$-sets, and, for arbitrary $G$,
$\FSS_G(A)\cong\mathrm{Cont}(\overline G)$.
\end{cor}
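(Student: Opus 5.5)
We argue the two implications separately for closed $G$, with discrete topology on $X$. Then we transport the result to arbitrary $G$ along Theorem~\ref{thm:closure}. Recall that the action map $G\times X\to X$ is continuous for discrete $X$ exactly when every point stabilizer $G_x=\{\pi\in G:\pi x=x\}$ is open in $G$. In that case each fibre $\{(\pi,y):\pi y=x\}$ is a union of open sets of the form $\pi G_y\times\{y\}$.

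\emph{Continuity implies finite support.} Suppose the action is continuous. Then $G_x$ is an open subgroup containing the identity. The sets $\Fix_G(S)$, for finite $S$, form a neighbourhood basis of the identity in the subspace topology on $G$. Hence $\Fix_G(S)\subseteq G_x$ for some finite $S$, which says precisely that $S$ supports $x$.

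\emph{Finite support implies continuity.} Suppose every $x$ is supported by a finite $S_x$. Then $\Fix_G(S_x)\leq G_x$. Since $\Fix_G(S_x)$ is open in $G$, the subgroup $G_x$ is a union of cosets of an open subgroup, and so it is open. To verify continuity of the action at a point $(\pi,x)$, take the open neighbourhood $\pi\Fix_G(S_x)\times\{x\}$. Every element of it maps to $\pi x$. Alternatively, Lemma~\ref{lem:dependence} gives this directly: any $\pi'$ agreeing with $\pi$ on $S_x$ satisfies $\pi'x=\pi x$. Closedness of $G$ is not actually used here. It enters only in identifying the result with $\mathrm{Cont}(G)$ as the standard object attached to a closed subgroup.

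\emph{Identification of the categories.} The objects of $\FSS_G(A)$ and of $\mathrm{Cont}(G)$ now coincide. In both categories the morphisms are the $G$-equivariant functions, so the two categories are equal. For arbitrary $G$, Theorem~\ref{thm:closure} gives $\FSS_G(A)\cong\FSS_{\overline G}(A)$. The group $\overline G$ is closed, so the first part yields $\FSS_{\overline G}(A)=\mathrm{Cont}(\overline G)$. Composing the two gives the claim.

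The main point of care is the topology. Openness must be taken relative to $G$ with the subspace topology from $\Sym(A)$, so that the basic neighbourhoods are exactly the groups $\Fix_G(S)$. Once this is fixed, the remaining steps are immediate.
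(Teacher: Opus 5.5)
Your argument is correct and follows the paper's proof: for discrete $X$, continuity amounts to openness of every stabilizer, which is the same as containing some $\Fix_G(S)$, i.e.\ finite support, and the case of arbitrary $G$ is then transported along Theorem~\ref{thm:closure}. The only differences are cosmetic: you check continuity directly at $(\pi,x)$ instead of at $(e,x)$ followed by translation, and your remark that closedness is not used in the first equivalence is accurate.
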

\begin{proof}
Continuity of $G\times X\to X$ at $(e,x)$, $X$ discrete, means the
stabilizer of $x$ is open, i.e.\ contains some $\Fix_G(S)$, i.e.\ $x$
is finitely supported; continuity everywhere follows by translation.
The rest is Theorem~\ref{thm:closure}.
\end{proof}

\subsection{The tuple-orbit site}

Fix a closed $G\leq\Sym(A)$. For a finite tuple
$\bar a=(a_1,\dots,a_k)$ of atoms ($k\geq0$) let
$O_{\bar a}:=G\cdot\bar a\subseteq A^k$ denote its orbit, an object of
$\FSS_G(A)$ with the coordinatewise action; $O_{()}$ is the terminal
object. Let $\mathbb T_G$ be the full subcategory of $\FSS_G(A)$ on
the objects $O_{\bar a}$; it is the \emph{orbit category} of the
transitive continuous $G$-sets of the form $G/\Fix_G(\bar a)$. Every
morphism between transitive $G$-sets is a surjection determined by its value at one base point, and $O_{\bar d}\to O_{\bar c}$ exists (sending $\bar d\mapsto\lambda\bar c$) exactly when
$\Fix_G(\bar d)\subseteq\Fix_G(\lambda\bar c)$. The distinguished
\emph{selections} $\sigma:O_{\bar d}\to O_{\bar c}$,
$\pi\bar d\mapsto\pi\bar c$, defined whenever the entries of $\bar c$
occur among those of $\bar d$, will index the age computation below.

\begin{lem}[Ore condition]\label{lem:ore}
Every cospan $f:O_{\bar c}\to O_{\bar b}\leftarrow O_{\bar e}:g$ in
$\mathbb T_G$ can be completed to~a commutative square from some
$O_{\bar p}$. Consequently the \emph{atomic topology}
$J_{\mathrm{at}}$, whose covering sieves are the nonempty sieves, is a Grothendieck topology on $\mathbb T_G$.
\end{lem}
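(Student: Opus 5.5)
The plan is to complete the cospan by concatenating tuples, then read the Grothendieck axioms for $J_{\mathrm{at}}$ off the resulting squares.

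\emph{Completing the cospan.} Morphisms of $\mathbb T_G$ are determined by base points. Suppose $f(\bar c)=\lambda\bar b$ and $g(\bar e)=\mu\bar b$ with $\lambda,\mu\in G$. Replacing $\bar e$ by the tuple $\lambda\mu^{-1}\bar e$, which lies in the same orbit $O_{\bar e}$, we may assume $g(\bar e)=\lambda\bar b=f(\bar c)$, because $g(\lambda\mu^{-1}\bar e)=\lambda\mu^{-1}\mu\bar b$. Now let $\bar p=\bar c\,\bar e$ be the concatenation, and take the two selections
\[
\sigma_1:O_{\bar p}\to O_{\bar c},\quad \pi\bar p\mapsto\pi\bar c,
\qquad
\sigma_2:O_{\bar p}\to O_{\bar e},\quad \pi\bar p\mapsto\pi\bar e .
\]
These are well defined because $\Fix_G(\bar p)\subseteq\Fix_G(\bar c)\cap\Fix_G(\bar e)$. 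For commutativity, we have $f\sigma_1(\pi\bar p)=\pi f(\bar c)=\pi\lambda\bar b$ and $g\sigma_2(\pi\bar p)=\pi g(\bar e)=\pi\lambda\bar b$ by equivariance, so the square commutes. The only point requiring care is the base-point alignment: this is where transitivity of the orbits is used. Closedness of $G$ is not even needed for this step.

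\emph{Atomic topology.} Maximal sieves are nonempty, so they cover. For stability under pullback, let $S$ be a nonempty sieve on $O_{\bar b}$ containing some $f:O_{\bar c}\to O_{\bar b}$, and let $g:O_{\bar e}\to O_{\bar b}$ be arbitrary. The Ore square gives $h:O_{\bar p}\to O_{\bar e}$ with $gh=f\sigma_1\in S$, so $g^*S\ni h$ is nonempty.

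For transitivity, suppose $S$ is nonempty and $R$ is a sieve such that $f^*R$ is nonempty for every $f\in S$. Pick any $f\in S$ and any $k\in f^*R$. Then $fk\in R$, so $R$ is nonempty. Hence $J_{\mathrm{at}}$ is a Grothendieck topology.
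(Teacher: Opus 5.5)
Your proof is correct and is essentially the paper's argument made concrete: the base-point alignment $\bar e\mapsto\lambda\mu^{-1}\bar e$ explicitly produces the point $(\bar c,\lambda\mu^{-1}\bar e)$ of the pullback that the paper gets from surjectivity, and your concatenated tuple with its two selections is a specific instance of the paper's supporting tuple $\bar p$, its map $O_{\bar p}\to P$, and the two projections. Your verification of the Grothendieck-topology axioms matches the paper's step for step.
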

\begin{proof}
The pullback $P=\{(w,w'):f(w)=g(w')\}$ in $G$-sets is nonempty because $f,g$ are surjective. Pick $(w,w')\in P$ and a tuple $\bar p$ whose
entries support both $w$ and $w'$ (Lemma~\ref{lem:basic}(ii)); then $\pi\bar p\mapsto(\pi w,\pi w')$ is a well-defined morphism $O_{\bar p}\to P$ by Lemma~\ref{lem:dependence}, and composing with the projections completes the square. The Ore condition makes nonempty sieves stable under pullback: if a sieve on $O_{\bar b}$ contains an arrow $f$ and $g$ is any arrow into $O_{\bar b}$, an Ore square provides an arrow in the pullback sieve. The maximal sieve is nonempty. Finally, if a sieve $R$ is locally nonempty along a nonempty covering sieve, choose one covering arrow $f$ and one arrow in $f^*R$; their composite belongs to $R$, so $R$ is nonempty. These verifications give the Grothendieck-topology axioms \cite{MacLaneMoerdijk}.
\end{proof}

The site $(\mathbb T_G,J_{\mathrm{at}})$ is atomic, so its sheaves may be tested by the single-morphism descent criterion
\cite{MacLaneMoerdijk}; we use this only through the standard
presentation theorem invoked~below.

\begin{thm}[Site presentation]\label{thm:site}
For closed $G\leq\Sym(A)$, the functor
\[
\Phi:\FSS_G(A)\longrightarrow
\mathrm{Sh}(\mathbb T_G,J_{\mathrm{at}}),\qquad
\Phi(X)=\Hom_{\FSS_G(A)}(-,X)\restriction\mathbb T_G,
\]
is an equivalence of categories.
\end{thm}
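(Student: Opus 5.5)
We first identify $\Hom(O_{\bar c},X)$ with a context level, and then build the inverse equivalence by a colimit over points. The key identification is as follows. For an object $X$ and a tuple $\bar c$, an equivariant map $O_{\bar c}\to X$ is determined by its value at $\bar c$. That value is supported by the entries of $\bar c$, and it is in fact fixed by the whole of $\Fix_G(\bar c)$. Conversely, every $x\in X$ fixed by $\Fix_G(\bar c)$ gives a well-defined map $\pi\bar c\mapsto\pi x$, by Lemma~\ref{lem:dependence} applied to the finite support given by the entries of $\bar c$. Hence
\[
\Phi(X)(O_{\bar c})\cong X^{\{\bar c\}},
\]
the level of $X$ at the context formed by the entries of $\bar c$. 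Under this identification, restriction along a morphism $O_{\bar d}\to O_{\bar c}$, $\bar d\mapsto\lambda\bar c$, becomes $x\mapsto\lambda x$.

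The plan then has three steps.

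\textbf{Step 1: $\Phi(X)$ is a sheaf.} We check the single-morphism descent criterion for the atomic topology. Let $f:O_{\bar d}\to O_{\bar c}$ be a morphism, and let $y\in X^{\bar d}$ be compatible with $f$, in the sense that its two pullbacks along any pair $g,h$ with $fg=fh$ agree. We must show that $y$ descends uniquely.

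\emph{Uniqueness.} This holds because $f$ is surjective, so restriction along $f$ is injective.

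\emph{Existence.} We take $f(\bar d)=\lambda\bar c$. It suffices to see that $\lambda^{-1}y$ is fixed by $\Fix_G(\bar c)$. For $\rho\in\Fix_G(\lambda\bar c)$, the two maps $\bar d\mapsto\bar d$ and $\bar d\mapsto\rho\bar d$ from $O_{\bar d}$ to $O_{\bar d}$ have equal composites with $f$. Compatibility then gives $\rho y=y$. This is the step where closedness of $G$ is not yet needed.

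\textbf{Step 2: full faithfulness.} For faithfulness, every element $x$ lies in some level $X^S$ by finite support, so $x$ is the image of $\bar c\in O_{\bar c}$ for an enumeration $\bar c$ of $S$. Two maps agreeing on all such generalized elements therefore agree. For fullness, a natural transformation $\alpha:\Phi X\to\Phi Y$ gives maps $X^{\bar c}\to Y^{\bar c}$. These are compatible with restriction along selections, so they glue to a single function $X\to Y$: the values on two supports agree via a common refinement, for instance the concatenated tuple. Naturality along the maps $\bar c\mapsto\pi\bar c$ gives equivariance.

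\textbf{Step 3: essential surjectivity (the main obstacle).} Given a sheaf $F$, we set
\[
X=\Bigl(\coprod_{\bar c}F(O_{\bar c})\Bigr)\Big/\!\sim,
\]
where $\sim$ identifies an element with its pullbacks along selections. The action of $\pi$ sends an element of $F(O_{\bar c})$ to the same element regarded at the tuple $\pi\bar c$; concretely, this uses the isomorphism $O_{\bar c}=O_{\pi\bar c}$ and restriction along $\pi\bar c\mapsto\bar c$. We must then verify three things.
\begin{enumerate}
\item $\sim$ is an equivalence relation. This uses the Ore condition (Lemma~\ref{lem:ore}) together with the fact that restrictions along surjections are injective for a sheaf, which is the separatedness part of Step 1's criterion.
\item The action is well defined and finitely supported, with the entries of $\bar c$ as a support.
\item $\Phi(X)\cong F$, i.e.\ $X^{\bar c}\cong F(O_{\bar c})$.
\end{enumerate}
The delicate inclusion in item 3 is $X^{\bar c}\subseteq F(O_{\bar c})$. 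An element represented at a larger tuple $\bar d\supseteq\bar c$ and fixed by $\Fix_G(\bar c)$ must descend along the selection $O_{\bar d}\to O_{\bar c}$. Here the sheaf condition is used, and closedness of $G$ enters. The stabilizer of the representative in $F(O_{\bar d})$ must be recognized as containing $\Fix_G(\bar c)$ in a way matching the compatibility condition, and the identification of the morphisms $O_{\bar d}\to O_{\bar d}$ over $O_{\bar c}$ with $\Fix_G(\bar c)$ modulo $\Fix_G(\bar d)$ requires that the homomorphisms of transitive sets are realized by elements of $G$. Alternatively, this step can be cited: by Corollary~\ref{cor:cont}, $\FSS_G(A)=\mathrm{Cont}(G)$, and the standard theorem \cite{MacLaneMoerdijk,Caramello16} gives $\mathrm{Cont}(G)\simeq\mathrm{Sh}(\mathbb T_G,J_{\mathrm{at}})$ for a closed subgroup of $\Sym(A)$, since the groups $\Fix_G(\bar c)$ form a basis of open subgroups. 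We would present the explicit construction and fall back on this citation for the descent bookkeeping.
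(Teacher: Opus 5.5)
Your route is different from the paper's. The paper does not build the equivalence by hand. It identifies $\FSS_G(A)$ with $\mathrm{Cont}(G)$ and cites the classical presentation by sheaves on the category $\mathcal O_G$ of all transitive continuous $G$-sets. It then passes to $\mathbb T_G$ by the Comparison Lemma, since each $G/U$ is covered by some $O_{\bar a}$ with $\Fix_G(\bar a)\le U$.

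Your fallback citation of the basis version is legitimate: the family $\Fix_G(\bar c)$ is closed under conjugation and finite intersections, and Remark~\ref{rem:selfcont} points to the same variant. However, your explicit argument contains a step that fails as written.

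\textbf{Step~1.} You test compatibility against the pair $\bar d\mapsto\bar d$, $\bar d\mapsto\rho\bar d$ of endomorphisms of $O_{\bar d}$. The second map is a morphism only if $\Fix_G(\bar d)\le\Fix_G(\rho\bar d)$, which is false in general. Take $G=\Sym(A)$, $\bar d=(a)$, $\bar c=()$. The object $O_{(a)}=A$ has the identity as its only endomorphism, so your test gives nothing. Yet descent along $A\to 1$ must force $y$ to be $\emptyset$-supported.

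The repair is to use a larger source. Take $E=O_{\bar d\,\rho\bar d}$ with the two selections $g,h$ onto the blocks $\bar d$ and $\rho\bar d$. Then $fh=fg$ because $\rho$ fixes $\lambda\bar c$. The two restrictions of $y$ are $y$ and $\rho y$, so compatibility gives $\rho y=y$.

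\textbf{Step~3.} The same misidentification drives your diagnosis there. Endomorphisms of $O_{\bar d}$ over $O_{\bar c}$ are not $\Fix_G(\bar c)$ modulo $\Fix_G(\bar d)$. For $\Sym(A)$ they are only those induced by the finitely many permutations of the entries of $\bar d$ fixing those of $\bar c$. Moreover, a morphism into $O_{\bar c}=G\bar c$ is always given by some $\lambda\in G$, so closedness does not enter here.

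With concatenations the delicate inclusion is routine. Let $u\in F(O_{\bar d})$ have class $[\bar d,u]\in X$ fixed by $\Fix_G(\bar c)$, and let $s:O_{\bar d}\to O_{\bar c}$ be the selection.
\begin{enumerate}
\item Pairs $g,h:O_{\bar e}\to O_{\bar d}$ with $sg=sh$ send $\bar e$ to $\mu\bar d$ and $\mu\rho\bar d$ with $\rho\in\Fix_G(\bar c)$.
\item So $F(g)u$ and $F(h)u$ represent $\mu[\bar d,u]=\mu\rho[\bar d,u]$ in $X$ (compare both with their restrictions to $\bar e\,\mu\bar d$).
\item The map $F(O_{\bar e})\to X$ is injective. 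The selection preorder is directed under concatenation, and restriction along any morphism is injective for a $J_{\mathrm{at}}$-sheaf.
\end{enumerate}
Similarly, the action should simply re-read $u\in F(O_{\bar c})=F(O_{\pi\bar c})$ at the base point $\pi\bar c$; a morphism $\pi\bar c\mapsto\bar c$ generally does not exist.

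Once repaired, your argument is self-contained and never uses closedness, which is consistent with Theorem~\ref{thm:closure}. It also exhibits $\Phi(X)(O_{\bar c})=X^{\bar c}$, tying the site directly to the finite-dependence profile. The paper's citation route is shorter but black-boxes exactly these verifications.
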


\begin{proof}
By Corollary~\ref{cor:cont}, $\FSS_G(A)=\mathrm{Cont}(G)$ is the topos of continuous discrete $G$-sets. The result is then the classical presentation of $\mathrm{Cont}(G)$ as an atomic sheaf topos; we recall the argument for completeness, and refer to \cite[III.9]{MacLaneMoerdijk} and, for the version closest to the form used below, to Caramello's topological Galois theory \cite[Prop.~2.3 and Cor.~3.7]{Caramello16}. For a topological group $G$ whose open subgroups form a neighbourhood basis of the identity (here the pointwise stabilizers $\Fix_G(S)$, $S$ finite) the
category $\mathrm{Cont}(G)$ is a Grothendieck topos, and it is presented by its full subcategory of \emph{transitive} continuous objects equipped with the atomic topology: writing $\mathcal O_G$ for the category whose objects are the transitive continuous $G$-sets $G/U$ ($U$ open) and whose morphisms are the equivariant maps,
\[
\mathrm{Cont}(G)\;\simeq\;
\mathrm{Sh}\bigl(\mathcal O_G,\,J_{\mathrm{at}}\bigr),
\qquad X\longmapsto\Hom(-,X)\restriction\mathcal O_G,
\]
by \cite[III.9, Thm.~1]{MacLaneMoerdijk}. No separability, second-countability, metrizability, or Polishness hypothesis is used in that theorem: it is stated for an arbitrary topological group. Thus,
\[
   \mathrm{Cont}(G)
   \simeq
   \mathrm{Sh}\bigl(\mathcal O_G,J_{\mathrm{at}}\bigr),
\]
where $\mathcal O_G$ is the category of transitive continuous $G$-sets $G/U$, with $U$ ranging over the open subgroups of $G$, and $J_{\mathrm{at}}$ is the atomic topology.

It remains to replace the full orbit category $\mathcal O_G$ by the finite-tuple orbit category $\mathbb T_G$. The inclusion
$\mathbb T_G\hookrightarrow\mathcal O_G$ is full and faithful. Moreover, every object of $\mathcal O_G$ is covered by an object of $\mathbb T_G$.  Indeed, if $G/U$ is a transitive continuous $G$-set, then $U$ is open. Since the pointwise stabilizers of finite tuples form a neighbourhood basis of the identity, there is a finite tuple $\bar a$ such that $\Fix_G(\bar a)\leq U$. Consequently the canonical map
\[
   O_{\bar a}
   \cong G/\Fix_G(\bar a)
   \longrightarrow G/U,
   \qquad
   g\Fix_G(\bar a)\longmapsto gU,
\]
is a well-defined surjective $G$-map, hence a cover for the atomic topology.

Therefore $\mathbb T_G$ is $J_{\mathrm{at}}$-dense in $\mathcal O_G$. By the Comparison Lemma \cite[VII.10, Cor.~3]{MacLaneMoerdijk}, restriction along $\mathbb T_G\hookrightarrow\mathcal O_G$ induces an equivalence
\[
   \mathrm{Sh}\bigl(\mathcal O_G,J_{\mathrm{at}}\bigr)
   \simeq
   \mathrm{Sh}\bigl(\mathbb T_G,J_{\mathrm{at}}\bigr).
\]
Combining the two equivalences gives
\[
   \FSS_G(A)
   =\mathrm{Cont}(G)
   \simeq
   \mathrm{Sh}\bigl(\mathbb T_G,J_{\mathrm{at}}\bigr).
\]
Under this composite equivalence, an object $X$ is sent to the restriction of the representable presheaf $\Hom_{\FSS_G(A)}(-,X)$ to $\mathbb T_G$, which is precisely the functor $\Phi$ appearing in the statement.
\end{proof}

\begin{rem}[Scope of the external topos-theoretic input] \label{rem:selfcont}
The only non-elementary input in Theorem~\ref{thm:site} is the classical atomic comparison theorem presenting $\mathrm{Cont}(G)$ by its transitive continuous actions. The form used here is \cite[III.9, Thm.~1]{MacLaneMoerdijk}, which is stated for an \emph{arbitrary topological group} $G$ (no separability, second-countability, metrizability, or Polishness hypothesis is imposed). The variant \cite[III.9, Thm.~2]{MacLaneMoerdijk} likewise applies to an arbitrary cofinal system of open subgroups.

Consequently the comparison theorem applies without change to $\Sym(A)$ with the topology of pointwise convergence when $A$ is uncountable.  Although this group is then generally non-Polish and non-second-countable, the finite point stabilizers $\Fix_{\Sym(A)}(S)$, $S\in[A]^{<\omega}$, still form a neighbourhood basis of the identity and a cofinal system among the open subgroups in the sense required above.

All statements specific to finite support are proved in the present paper: closure invariance, the identification with continuous discrete
actions, the cofinality of the finite-tuple orbits, and the Ore condition. We do not reproduce the general comparison theorem, since
doing so would amount to redeveloping a standard part of atomic-topos theory; the verification above indicates every hypothesis needed for its application. In particular, countability enters only in later model-theoretic results, such as the $\omega$-categorical weak-elimination correspondence, and not in the closure, continuous-action, or atomic-site results of this section.
\end{rem}

\subsection{Ultrahomogeneous structures and age invariance}

\begin{defi}\label{def:generic}
A relational structure $M$ (of any infinite cardinality) is
\emph{ultrahomogeneous} if every isomorphism between finite
substructures extends to an automorphism. Its \emph{age}
$\mathcal A(M)$ is the class of finite structures embeddable in $M$.
We say $M$ has \emph{trivial stabilizer-dcl} if for every finite tuple
$\bar a$, the set of atoms fixed by every element of
$\Fix_{\Aut(M)}(\bar a)$ is exactly the set of entries of $\bar a$.
This is a group-theoretic fixed-point condition. In the countable
$\omega$-categorical setting used later in Section~\ref{sec:modeltheory},
it agrees with $\mathrm{dcl}(\bar a)=\bar a$. It should not be
conflated with \emph{degenerate algebraic closure}, the stronger condition
$\mathrm{acl}(\bar a)=\bar a$: definable closure corresponds to
singleton stabilizer orbits, whereas algebraic closure corresponds to
finite stabilizer orbits.
\end{defi}

Automorphism groups of ultrahomogeneous relational structures are closed; in the
countable relational settings used here, the structures considered are
Fra\"iss\'e limits (see, e.g.,
\cite{Macpherson11}); the pure set of any
infinite cardinality is ultrahomogeneous with trivial stabilizer-dcl, since a
bijection between finite subsets extends by any bijection of the
(equinumerous) complements, and a transposition avoiding $\bar a$
moves any prescribed atom outside $\bar a$. We present two automatic
consequences of ultrahomogeneity: the age has the amalgamation property
(amalgamate inside $M$, moving one factor by an automorphism), and $M$
has the \emph{extension property}: every embedding $c\hookrightarrow
d$ in $\mathcal A(M)$ and every embedding $c\hookrightarrow M$ have a
common extension $d\hookrightarrow M$ (embed $d$ anywhere in $M$, then
move its copy of $c$ onto the given one by ultrahomogeneity).

\begin{lem}\label{lem:sitecomp}
Let $M$ be ultrahomogeneous with trivial stabilizer-dcl and put $G=\Aut(M)$.
Let $\mathcal A(M)$ denote the category of finite induced
substructures of $M$ and embeddings. Then
\[
   \mathbb T_G\simeq\mathcal A(M)^{\mathrm{op}},
\]
and the equivalence identifies the atomic topologies. Consequently
the tuple-orbit site depends, up to equivalence, only on the age of
$M$.
\end{lem}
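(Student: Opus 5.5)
I will construct a functor $F:\mathcal A(M)^{\mathrm{op}}\to\mathbb T_G$ and show that it is full, faithful and essentially surjective. Choose for each finite induced substructure $c\subseteq M$ an enumeration $\bar c$ of its universe, and set $F(c)=O_{\bar c}$. An embedding $e:c\hookrightarrow d$ is sent to the map $O_{\bar d}\to O_{\bar c}$ given by $\pi\bar d\mapsto\pi\,e(\bar c)$; this is a selection composed with a relabelling.

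\textbf{Well-definedness and functoriality.} Every entry of $e(\bar c)$ is an entry of $\bar d$, so $\Fix_G(\bar d)\subseteq\Fix_G(e(\bar c))$. The map is therefore well defined, by Lemma~\ref{lem:dependence} or by the criterion recalled before Lemma~\ref{lem:ore}. Its target is $O_{\bar c}$: the restriction $e$ is an isomorphism $c\to e(c)$ between finite substructures, so by ultrahomogeneity it extends to some $\sigma\in G$, and then $e(\bar c)=\sigma\bar c\in O_{\bar c}$. Contravariant functoriality is immediate from the formula.

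\textbf{Essential surjectivity.} Let $\bar a$ be any tuple and let $c$ be the substructure on its set of entries. Then $\Fix_G(\bar a)=\Fix_G(\bar c)$, so $O_{\bar a}\cong G/\Fix_G(\bar a)=G/\Fix_G(\bar c)\cong O_{\bar c}$. Repetitions and reorderings of entries do not change the point stabilizer.

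\textbf{Fullness and faithfulness.} This is the main step, and it is where trivial stabilizer-dcl is used. A morphism $O_{\bar d}\to O_{\bar c}$ is determined by the image $\bar b=\lambda\bar c$ of $\bar d$, subject to $\Fix_G(\bar d)\subseteq\Fix_G(\bar b)$. Each entry of $\bar b$ is then fixed by all of $\Fix_G(\bar d)$. By trivial stabilizer-dcl each such entry is an entry of $\bar d$, so the entries of $\bar b$ lie in $d$.

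Since $\bar b=\lambda\bar c$ with $\lambda\in G$, the map $\bar c\mapsto\bar b$ is an embedding $e:c\hookrightarrow d$ whose image is induced in $M$, and hence induced in $d$. Then $F(e)$ is the given morphism, which proves fullness. Faithfulness holds because two embeddings inducing the same map send $\bar d$ to the same tuple, so they agree on $\bar c$, which enumerates $c$.

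\textbf{Topologies.} Every morphism in $\mathbb T_G$ is surjective, and under the equivalence the nonempty sieves on one side correspond exactly to the nonempty cosieves on the other. So the atomic topology $J_{\mathrm{at}}$ on $\mathbb T_G$ corresponds to the atomic topology on $\mathcal A(M)^{\mathrm{op}}$. The latter is a Grothendieck topology by amalgamation, which plays the role of the Ore condition (Lemma~\ref{lem:ore}).

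Finally, the category $\mathcal A(M)$ is equivalent to the category of all finite structures in the age together with embeddings, since every member of the age is isomorphic to a finite induced substructure of $M$. This category depends only on the age, which gives the last sentence of the statement.
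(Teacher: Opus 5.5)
Your proposal is correct and follows essentially the same route as the paper. You reduce every tuple orbit to the orbit of an enumeration of a finite induced substructure, use trivial stabilizer-dcl to force the entries of $\lambda\bar c$ into $\bar d$ so that morphisms $O_{\bar d}\to O_{\bar c}$ correspond contravariantly and bijectively to embeddings $c\hookrightarrow d$ (with ultrahomogeneity giving the converse), and transport the atomic topology across the equivalence, with amalgamation ensuring it is a Grothendieck topology. The difference is only packaging: you check that an explicit functor is full, faithful and essentially surjective where the paper gives the two inverse assignments, and the functoriality you call immediate needs the routine remark that an automorphism extending $e'$ agrees with $e'$ on $e(c)$.
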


\begin{proof}
Every tuple has the same pointwise stabilizer as the tuple obtained by
deleting repeated entries. Hence every object of $\mathbb T_G$ is
isomorphic to the orbit $O_{\bar c}$ of an injective enumeration of a
finite induced substructure $C\subseteq M$.

Choose an injective enumeration $\bar c$ for each finite induced
substructure $C\subseteq M$. The object $O_{\bar c}$ depends, up to
isomorphism, only on $C$: if $f:C\to D$ is an isomorphism, then by
ultrahomogeneity $f$ extends to some $g\in G$, and conjugation by $g$
identifies the stabilizers of $C$ and $D$, hence identifies the
corresponding transitive $G$-sets.

More generally, let $C,D\subseteq M$ be finite, with chosen enumerations
$\bar c,\bar d$. A morphism $O_{\bar d}\to O_{\bar c}$ is determined by
its value at $\bar d$, say $\lambda\bar c$, and exists exactly when
\[
   \Fix_G(\bar d)\leq\Fix_G(\lambda\bar c).
\]
Trivial stabilizer-dcl implies that every entry of $\lambda\bar c$ occurs
among the entries of $\bar d$; since $\lambda$ is an automorphism, the
coordinate map $c_i\mapsto\lambda c_i$ therefore defines an embedding
$C\hookrightarrow D$. Conversely, if $f:C\hookrightarrow D$ is an
embedding, ultrahomogeneity extends $f$ to some $g\in G$. The assignment
\[
       h\bar d\longmapsto hg\bar c
\]
is then a well-defined $G$-map $O_{\bar d}\to O_{\bar c}$: every element
fixing $\bar d$ fixes the tuple $g\bar c$, whose entries lie in $D$.
The resulting morphism is independent of the chosen extension of $f$,
because any two extensions agree on $C$. These constructions are inverse
and reverse composition, yielding the stated equivalence.

The age has amalgamation: realize two finite extensions in $M$ and use
ultrahomogeneity to identify their common part. Therefore the atomic
topology is defined on both sides, and an equivalence of categories
preserves nonempty sieves and hence identifies the two topologies.
\end{proof}

\begin{rem}[Motivating the trivial stabilizer-dcl]
Without trivial stabilizer-dcl, a morphism out of $O_{\bar d}$ may send
the target tuple into $\mathrm{dcl}(\bar d)$ without placing its
entries in $\bar d$ itself. The ordinary age then no longer determines
the orbit site. The hypothesis is precisely what reduces the
categorical presentation from finite definable closures to finite
substructures.
\end{rem}

\begin{thm}[Age invariance]\label{thm:ageblind}
Let $M$ and $M'$ be ultrahomogeneous relational structures with trivial
stabilizer-dcl and the same age, of arbitrary infinite cardinalities.
Then
\[
\FSS_{\Aut(M)}(M)\;\simeq\;
\mathrm{Sh}\bigl(\mathcal A(M)^{\mathrm{op}},
J_{\mathrm{at}}\bigr)\;\simeq\;\FSS_{\Aut(M')}(M').
\]
More generally, for arbitrary $G\leq\Sym(A)$, the category
$\FSS_G(A)$ is equivalent to the sheaves on the tuple-orbit site of
$\overline G$; in particular it depends only on $\overline G$.
\end{thm}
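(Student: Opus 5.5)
The proof is a composition of results already established; nothing new has to be constructed. Since $\Aut(M)$ is closed (automorphism groups of relational structures are closed in $\Sym(M)$), I will apply Theorem~\ref{thm:site} directly to $G=\Aut(M)$ with atom set the domain of $M$. This gives
\[
\FSS_{\Aut(M)}(M)\simeq\mathrm{Sh}(\mathbb T_{\Aut(M)},J_{\mathrm{at}}).
\]
Next I invoke Lemma~\ref{lem:sitecomp}, whose hypotheses (ultrahomogeneity and trivial stabilizer-dcl) are exactly the ones assumed. It provides an equivalence $\mathbb T_{\Aut(M)}\simeq\mathcal A(M)^{\mathrm{op}}$ that identifies the atomic topologies.

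An equivalence of sites $(\mathcal C,J)\simeq(\mathcal D,K)$, meaning a categorical equivalence that carries covering sieves exactly onto covering sieves, induces an equivalence of sheaf categories. The reason is that precomposition with a functor and with its pseudo-inverse preserves the sheaf condition, since covering sieves correspond. This yields the first displayed equivalence.

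For $M'$ I run the same argument. It remains to identify $\mathcal A(M)^{\mathrm{op}}$ with $\mathcal A(M')^{\mathrm{op}}$, and this step needs the most care. Here $\mathcal A(M)$ means the category of finite induced substructures of $M$ with embeddings, which is a specific small category, not the abstract class. Equal ages give a correspondence: every finite substructure of $M$ is isomorphic to one of $M'$ and conversely. I will define a functor by choosing, for each finite $C\subseteq M$, an isomorphic substructure $C'\subseteq M'$ together with an isomorphism. Embeddings are then transported by conjugation with the chosen isomorphisms. This functor is fully faithful because embeddings between isomorphic copies correspond bijectively. It is essentially surjective because the ages coincide.

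On both sides the atomic topology consists of the nonempty sieves. Amalgamation holds in both ages by the proof of Lemma~\ref{lem:sitecomp}, so both topologies are well-defined. Any equivalence preserves and reflects nonemptiness of sieves. So the sites are equivalent and the second equivalence follows. Cardinality plays no role, consistent with Remark~\ref{rem:selfcont}: Theorem~\ref{thm:site} needs no countability.

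For the general statement I combine two results. Theorem~\ref{thm:closure} gives an isomorphism $\FSS_G(A)\cong\FSS_{\overline G}(A)$. Theorem~\ref{thm:site} applies to the closed group $\overline G$ and gives $\FSS_{\overline G}(A)\simeq\mathrm{Sh}(\mathbb T_{\overline G},J_{\mathrm{at}})$. The site $\mathbb T_{\overline G}$ is defined purely from $\overline G$, so the resulting category depends only on $\overline G$.

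The only delicate point is the site-level transport: sheaf categories must be invariant under equivalences of sites, not merely under isomorphisms. I would handle this by the direct precomposition argument above, using that $J_{\mathrm{at}}$ is intrinsic to the category. Alternatively, the Comparison Lemma applies, since a dense full subcategory meeting every isomorphism class suffices.
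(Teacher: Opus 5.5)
Your proposal is correct and follows essentially the same route as the paper: Theorem~\ref{thm:site} plus Lemma~\ref{lem:sitecomp} for each structure, then an equivalence of the two finite-substructure categories that preserves nonempty sieves, and closure invariance plus the site theorem for the general clause. The only cosmetic difference is that you build the equivalence $\mathcal A(M)\simeq\mathcal A(M')$ directly by transporting embeddings along chosen isomorphisms and get sheaf invariance by precomposition, whereas the paper passes through a skeleton of the common age and cites the Comparison Lemma.
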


\begin{proof}
For each structure, Theorem~\ref{thm:site} and Lemma~\ref{lem:sitecomp}
give a presentation by sheaves on the opposite of its finite-substructure
category with the atomic topology. If $M$ and $M'$ have the same age, the
two finite-substructure categories are each equivalent to a skeleton of that
common age (with embeddings); these equivalences preserve nonempty sieves and
therefore the atomic topology. The Comparison Lemma then identifies the two
sheaf toposes. The final assertion for arbitrary $G$ follows from
Theorem~\ref{thm:closure} and Theorem~\ref{thm:site} applied to
$\overline G$.
\end{proof}

\begin{cor}[Pure-atom cardinality invariance]\label{thm:regime}
For every infinite $A$,
\[\FSS_{\Sym(A)}(A)\simeq\Sch\simeq\Nom.\]
where $\Sch$ is the Schanuel topos, the sheaves for the atomic topology on the opposite of the category of finite sets and injections. These are equivalences of Grothendieck topoi. Consequently they preserve finite limits, exponentials, the subobject classifier, and power objects up to
canonical isomorphism. They do not preserve a chosen external
underlying-set functor or external cardinalities, as
Section~\ref{sec:material} makes explicit.
\end{cor}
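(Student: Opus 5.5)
The corollary follows by specialising Theorem~\ref{thm:ageblind} to pure sets and then reading off what an equivalence of toposes does and does not preserve. Let $A$ be infinite and regard it as a structure in the empty relational language; its automorphism group is $\Sym(A)$. As recorded after Definition~\ref{def:generic}, $A$ is ultrahomogeneous with trivial stabilizer-dcl. Its age is the class of all finite sets, and its embeddings are exactly the injections. So the category $\mathcal A(A)$ of finite induced substructures is equivalent to the skeleton $\mathbb I$ of finite cardinals and injections. Lemma~\ref{lem:sitecomp} then gives $\mathbb T_{\Sym(A)}\simeq\mathbb I^{\mathrm{op}}$, and this equivalence identifies the atomic topologies.

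The first step is to combine this with Theorem~\ref{thm:site}, which is available because $\Sym(A)$ is closed. This yields $\FSS_{\Sym(A)}(A)\simeq\mathrm{Sh}(\mathbb I^{\mathrm{op}},J_{\mathrm{at}})=\Sch$. Here I rely on Remark~\ref{rem:selfcont}: the comparison theorem needs no countability assumption, so the argument works for every infinite cardinality. The same argument with $A=\omega$ gives $\FSS_{\Sym(\omega)}(\omega)\simeq\Sch$. Definition~\ref{def:cat} together with Theorem~\ref{thm:closure} (using $\overline{\Symfin(\omega)}=\Sym(\omega)$) identifies $\FSS_{\Sym(\omega)}(\omega)$ with $\Nom$, which gives the second equivalence. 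Equivalently, one may apply Theorem~\ref{thm:ageblind} directly to $A$ and $\omega$, which have the same age.

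The second step is to justify the preservation claims. An equivalence of categories preserves and reflects all limits and colimits. It also preserves every structure characterised by a universal property, and does so up to canonical isomorphism. This covers finite limits, exponentials (right adjoints to products), the subobject classifier (which represents the subobject functor, using Lemma~\ref{lem:subobj}), and power objects. So the listed structure transfers.

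The last claim is negative: the equivalences do not preserve external cardinalities. To establish it I would give a witness. The object $A$ itself, as the orbit $O_{(a)}$, corresponds to $\omega$ under the equivalence, yet their underlying sets have different cardinality whenever $|A|>\omega$. Hence no equivalence commutes with both underlying-set functors up to bijection on every object.

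The only point needing care is the identification of $\mathcal A(A)$ with finite sets and injections. Since the language is empty, this is immediate. Beyond that, the argument is routine assembly of earlier results, and the fuller material distinctions are deferred to Section~\ref{sec:material}.
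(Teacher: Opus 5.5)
Your proposal is correct and follows essentially the paper's route. The paper's proof is exactly the specialization of Theorem~\ref{thm:ageblind} (that is, Theorem~\ref{thm:site} plus Lemma~\ref{lem:sitecomp}) to the pure set, whose age is the finite sets with injections. There are two differences. You rederive the countable identification with $\Sch$, and with $\Nom$ via Definition~\ref{def:cat} and Theorem~\ref{thm:closure}, instead of citing it as classical. You also give an explicit cardinality witness rather than deferring to Section~\ref{sec:material}.

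One small overreach: your witness only shows that the \emph{constructed} equivalence changes the cardinality of the atom object. The sentence ``no equivalence commutes with both underlying-set functors'' therefore needs the observation used in Corollary~\ref{cor:orbitblind}: any equivalence preserves atoms, and every atom of $\Nom$ is a quotient of a countable tuple orbit. The statement itself only requires the weaker claim, which you did prove.
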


\begin{proof}
The pure set of cardinality $|A|$ is ultrahomogeneous with trivial
stabilizer-dcl and age the finite sets with injections; apply
Theorem~\ref{thm:ageblind} with $M'$ the countable pure set, for which the identification with $\Sch$ and $\Nom$ is classical~\cite{Johnstone02,Pitts13}.
\end{proof}

\begin{rem}[Attribution]\label{rem:knownness}
We are careful about what is new here. The equivalence
$\mathrm{Cont}(G)\simeq\mathrm{Sh}(\text{open-subgroup site})$ for a
topological group whose open subgroups form a neighbourhood basis is
classical \cite{MacLaneMoerdijk}. More to the point, Caramello's
topological Galois theory~\cite{Caramello16} proves, for a category
$\mathcal C$ of finite structures with a
$\mathcal C$-ultrahomogeneous object~$u$ of its ind-completion, the
equivalence
$\mathrm{Sh}(\mathcal C^{\mathrm{op}},J_{\mathrm{at}})\simeq
\mathrm{Cont}(\Aut(u))$. This is Theorem~\ref{thm:site} together with
Lemma~\ref{lem:sitecomp} when $\mathcal C=\mathcal A(M)$ and $u=M$;
the countable pure-set identification with nominal sets and the
Schanuel topos is also classical \cite{Johnstone02,Pitts13}.
Accordingly, Theorem~\ref{thm:ageblind} is a direct specialization,
not a new topos equivalence: its right-hand side depends on the age,
not on the chosen ultrahomogeneous model. The representation of
orbit-finite sets over homogeneous atoms is developed in \cite{BKL14}.

\noindent
The role of the present section is therefore delimited and of a different character:
\begin{enumerate}[label=\textup{(\roman*)},leftmargin=2.4em]
\item to present, with a direct support-calculus proof, the \emph{closure-invariance statement} (Theorem~\ref{thm:closure}) for an arbitrary $G\leq\Sym(A)$, thereby reducing the two-parameter family $\FSS_G(A)$ to the closed case without adding closedness, oligomorphicity, or $\omega$-categoricity hypotheses;
\item the consequence, made explicit here, that the resulting equivalence is \emph{uniform in $|A|$}, so that the cardinality parameter is categorically inert (Corollary~\ref{thm:regime}); the statement also has genuinely uncountable instances whenever ultrahomogeneous models with the same age exist; for example, the pure set is available at every infinite cardinality, and saturated dense linear orders at cardinalities where such saturated models exist yield further instances;
\item the resulting clarification: raising the cardinality of the atom set, with full symmetry, does not by itself change the category. Any difference must be material (Section~\ref{sec:material}) or carried by the symmetry group (Sections~\ref{sec:restrict}--\ref{sec:join}), and this is what directs the rest of the~paper.
\end{enumerate}
Item (iii) is the reason the section is here: it is what sends the rest of the paper to the symmetry parameter, where Sections~\ref{sec:meet}--\ref{sec:orient} do their work.
\end{rem}

\begin{cor}\label{cor:orbitblind}
Every object of $\FSS_{\Sym(A)}(A)$ with a single orbit of uncountable cardinality corresponds, under the equivalence of Corollary~\ref{thm:regime}, to a single-orbit nominal set of at most countable cardinality. Orbit cardinality is not a categorical invariant of finitely supported sets.
\end{cor}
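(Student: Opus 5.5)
We need an explicit single-orbit object of $\FSS_{\Sym(A)}(A)$ of uncountable cardinality, and then an identification of its image under the equivalence of Corollary~\ref{thm:regime}. Assume $A$ is uncountable; otherwise the first sentence is vacuous. The natural witness is the atom set $A$ itself: it is a single $\Sym(A)$-orbit, every $a$ is supported by $\{a\}$, and $|A|>\aleph_0$. More generally, for a finite tuple $\bar a$ we may take the orbit $O_{\bar a}$, which has cardinality $|A|$.

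The main step is to compute the image. I would trace the chain
\[
\FSS_{\Sym(A)}(A)\simeq\mathrm{Sh}(\mathbb T_{\Sym(A)},J_{\mathrm{at}})\simeq\mathrm{Sh}(\mathcal A^{\mathrm{op}},J_{\mathrm{at}})\simeq\mathrm{Sh}(\mathbb T_{\Sym(\omega)},J_{\mathrm{at}})\simeq\Nom
\]
from Theorem~\ref{thm:site}, Lemma~\ref{lem:sitecomp} and Theorem~\ref{thm:ageblind}. Here $\mathcal A$ is the category of finite sets and injections.

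Under $\Phi$, a tuple orbit $O_{\bar a}$ is sent to the restriction of the representable functor, that is, to the sheaf represented by the object of $\mathbb T_{\Sym(A)}$ corresponding to the finite set $\{a_1,\dots,a_k\}$. Representables are sheaves for the atomic topology, because in the orbit category every morphism is epi and descends effectively; this is the content of Theorem~\ref{thm:site}. Transporting along the equivalence of sites $\mathbb T_{\Sym(A)}\simeq\mathcal A^{\mathrm{op}}\simeq\mathbb T_{\Sym(\omega)}$ sends this representable to the representable at the same finite set $[k]$. On the countable side, that representable corresponds to the orbit $O_{\bar b}\subseteq\omega^k$ of an injective $k$-tuple. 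For $k=1$ this is $\omega$ itself: it is a single orbit and is countable.

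For an arbitrary single-orbit object $X$ I would argue abstractly, using two facts. First, an equivalence of categories preserves subobject lattices, so by Lemma~\ref{lem:subobj} atoms go to atoms, and hence single orbits go to single orbits. Second, every single-orbit nominal set is a quotient $\Sym(\omega)/U$ of some $O_{\bar b}$ by Theorem~\ref{thm:site}'s covering argument. Since $O_{\bar b}\subseteq\omega^k$ is countable, the image is at most countable. In $\Nom$, every element is finitely supported, so it is determined by its support and finitely many orbit data, and so again lies in a countable set. This gives the bound without needing to identify the image exactly.

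The conclusion is then immediate. The object $A\in\FSS_{\Sym(A)}(A)$ has cardinality $|A|>\aleph_0$, while its equivalent $\omega\in\Nom$ is countable. So equivalent objects have different external cardinalities, and orbit cardinality is not invariant under equivalence of categories.

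The main obstacle is making precise that an abstract equivalence sends the uncountable orbit to a \emph{specific} countable object. I would handle this either by tracking representables explicitly, as above, or, more robustly, by the atom-preservation argument together with countability of all single-orbit nominal sets. The latter avoids computing the composite equivalence exactly.
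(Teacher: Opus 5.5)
Your second, ``more robust'' route is the paper's proof: single orbits are the atoms of the subobject lattice (Lemma~\ref{lem:subobj}) and so are preserved by any equivalence, and every single-orbit nominal set is a quotient of a finite-tuple orbit in $\omega^k$, so it is at most countable. Your tracking of representables, which sends $A$ to $\omega$, and your explicit choice of $A$ as an uncountable witness are correct additions that the paper leaves implicit and does not need.
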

\begin{proof}
Single-orbit objects are the atoms of the subobject lattice
(Lemma~\ref{lem:subobj}) and are thus preserved by any
equivalence. Every single-orbit nominal set is a quotient of the
orbit of a finite tuple of atoms: if $x$ is supported by a finite tuple
$\bar a$, the equivariant map $\pi\bar a\mapsto\pi x$ is surjective.
Over a countable atom set that tuple orbit, and hence its quotient, is
at most countable. The source orbit is uncountable, so the underlying
cardinalities~differ.
\end{proof}

\begin{rem}[Failure of naive reconstruction]\label{rem:reconfail}
Corollary~\ref{thm:regime} exhibits, for uncountable~$B$, the
non-isomorphic topological groups $\Sym(\omega)$ and $\Sym(B)$ (one
Polish, the other not) with equivalent toposes of continuous discrete
actions. Therefore, the topos $\mathrm{Cont}(G)$ does not determine the
topological group $G$, even among closed permutation groups that are
oligomorphic on their own domains: reconstruction questions must be
normalized to a fixed countable domain, as in Question~\ref{q:reconstruction}. In the Galois-theoretic interpretation, the
two groups are automorphism groups of two different \emph{points} of the same atomic topos, and Theorem~\ref{thm:ageblind} says that the topos itself sees only their common age.
\end{rem}

\section{Material separation}\label{sec:material}

The previous section showed that the cardinality of the atom set is invisible to the category. It does not follow that the choice of cardinality is without content, and this section shows where the content went. An equivalence of categories is free to move objects to non-isomorphic sets; what it cannot preserve, and what distinguishes the frameworks, is exactly the external, set-theoretic data that a material presentation retains.
This dichotomy (that abstract categorical equivalence is blind to external cardinalities while material presentations are not) is what restricts the focus of the remainder of the paper to the symmetry parameter.

\paragraph{Categorical and material notation.}
We reserve $\FSS_G(A)$ for the category of finitely supported $G$-sets and equivariant maps. The corresponding material universe is denoted by $\Vfs^G(A)$: it consists of the hereditarily finitely supported members of the cumulative ZFA hierarchy generated by $A$, with the canonical extension of the $G$-action. Statements about equivalence concern $\FSS_G(A)$; statements about cardinality, choice, and isomorphism of models concern~$\Vfs^G(A)$.

All assertions about $\Vfs^G(A)$ are external assertions in the surrounding
ZFC metatheory. We regard $\Vfs^G(A)$ as the usual hereditarily finitely
supported class model built over the atom sort $A$; when two such universes
are compared as ZFA structures, the atom predicate is part of the signature.
No categorical equivalence below is taken to identify these material models
or their chosen atom sorts.

The equivalence of Corollary~\ref{thm:regime} is an equivalence of abstract categories; it does not commute with the underlying-set functor, and the finitely supported universes over atom sets of different cardinalities are materially distinct. Three invariants make this precise.

\begin{prop}\label{prop:fscofinite}
For $G=\Sym(A)$, a subset $B\subseteq A$ is finitely supported if and
only if it is finite or cofinite.
\end{prop}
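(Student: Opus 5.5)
The plan is to prove the two directions separately, working directly from the stabilizer definition of support.

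For the forward-easy direction, suppose $B$ is finite. Then $S=B$ supports $B$, since every $\pi\in\Fix_{\Sym(A)}(B)$ fixes each element of $B$ and hence fixes $B$ setwise. If instead $B$ is cofinite, then $A\setminus B$ is finite and supported by $S=A\setminus B$. The complement map $\mathcal P(A)\to\mathcal P(A)$ is equivariant, so Lemma~\ref{lem:basic}(iii) shows that the same $S$ supports $B$. Alternatively, a permutation fixing $A\setminus B$ pointwise must map $B$ into itself.

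For the converse, suppose $B$ is supported by a finite $S$. By Lemma~\ref{lem:basic}(iv), $B$ is a union of $\Fix_{\Sym(A)}(S)$-orbits on $A$. The key step is to identify these orbits: they are the singletons $\{s\}$ for $s\in S$, together with the single set $A\setminus S$. Indeed, for distinct $a,b\notin S$ the transposition $(a\,b)$ lies in $\Fix_{\Sym(A)}(S)$ and moves $a$ to $b$, while every element of $S$ is fixed. This is exactly the local orbit description of $\Sym(A)$ noted after Lemma~\ref{lem:atomic}. It follows that either $B\cap(A\setminus S)=\emptyset$, in which case $B\subseteq S$ is finite, or $A\setminus S\subseteq B$, in which case $A\setminus B\subseteq S$ and $B$ is cofinite.

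There is no serious obstacle here. The only point needing care is the transitivity of $\Fix_{\Sym(A)}(S)$ on $A\setminus S$, and the transposition argument settles it because the full symmetric group contains all transpositions. I would also note in passing that $A\setminus S$ is nonempty because $A$ is infinite, although the dichotomy holds regardless.
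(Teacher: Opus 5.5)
Your proof is correct and takes the same approach as the paper: finite and cofinite sets are supported by $B$ and $A\setminus B$ respectively, and conversely an $S$-supported $B$ is a union of $\Fix_{\Sym(A)}(S)$-orbits (Lemma~\ref{lem:basic}(iv)), so transitivity of $\Fix_{\Sym(A)}(S)$ on $A\setminus S$ forces $B\setminus S$ to be either $\emptyset$ or $A\setminus S$. The paper asserts that transitivity without proof, and your transposition argument supplies the missing justification.
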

\begin{proof}
Finite and cofinite sets are supported by $B$, resp.\ $A\smallsetminus B$. Conversely if $S$ supports~$B$ then $B\smallsetminus S$ is a union of $\Fix_{\Sym(A)}(S)$-orbits on $A\smallsetminus S$ (Lemma~\ref{lem:basic}(iv));
$\Fix_{\Sym(A)}(S)$ is transitive on $A\smallsetminus S$, so $B\smallsetminus
S\in\{\emptyset,A\smallsetminus S\}$.
\end{proof}

\begin{thm}[Material Separation]\label{thm:material}
Let $A$ be infinite and $G=\Sym(A)$.
\begin{enumerate}[label=\textup{(\alph*)}]
\item $|\Pfs(A)|=|A|$; more generally,
$|\Rel^{(n)}_{\fs}(A)|=|A|$ for every $n\geq1$; hence
$|\Pfs(A)|<2^{|A|}$.
\item For every infinite $A$, the set $[A]^2$ of unordered pairs of
atoms admits no finitely supported choice function. In particular,
the canonical epimorphism
\[
 p:E_2:=\{(P,a)\in[A]^2\times A:a\in P\}\longrightarrow[A]^2
\]
has no section in $\FSS_{\Sym(A)}(A)$. The absence of an equivariant
section is categorical and therefore transfers along the equivalences
of Corollary~\ref{thm:regime}; the stronger absence of every finitely
supported choice function is proved directly and uniformly in $|A|$.
\item If $|A|>\aleph_0$, then the material universes
$\Vfs^{\Sym(\omega)}(\omega)$ and $\Vfs^{\Sym(A)}(A)$ are not
isomorphic as ZFA structures with their atom predicate. Already their
atom sorts have different external cardinalities; equivalently, the
definable rank-one object of finitely supported subsets of the atom
sort has external cardinality $\aleph_0$ in the first presentation and
$|A|$ in the second.
\end{enumerate}
\end{thm}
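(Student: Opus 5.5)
The plan is to prove the three parts separately, each with an elementary support-counting or orbit argument, using Proposition~\ref{prop:fscofinite} and Lemma~\ref{lem:basic} as the main tools.

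\textbf{Part (a).} I count supported relations level by level. By Lemma~\ref{lem:atomic} and local oligomorphicity of $\Sym(A)$, each $B^{(n)}_{\Sym(A),S}$ is a finite Boolean algebra: its atoms are the equality types over $S$, and there are finitely many of them. Every finitely supported relation lies in some level, so $\Rel^{(n)}_{\fs}(A)$ is a union over $[A]^{<\omega}$ of finite sets. Since $|[A]^{<\omega}|=|A|$, this gives $|\Rel^{(n)}_{\fs}(A)|\le|A|$. For the lower bound, the singletons $\{a\}$ (or the relations $R_a$ from Lemma~\ref{lem:galois}) already give $|A|$ distinct supported relations. Cantor's theorem then yields $|A|<2^{|A|}$. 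The case $n=1$ is also immediate from Proposition~\ref{prop:fscofinite}.

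\textbf{Part (b).} This is the main step. Suppose $f:[A]^2\to A$ is a choice function with finite support $S$. I pick two distinct atoms $a,b\notin S$; this is possible because $A$ is infinite. The transposition $\tau=(a\,b)$ lies in $\Fix_{\Sym(A)}(S)$, so $\tau$ fixes $f$, meaning $f(\tau P)=\tau f(P)$ for every $P$. Applying this to $P=\{a,b\}$, which $\tau$ fixes, gives $f(\{a,b\})=\tau f(\{a,b\})$. But $f(\{a,b\})\in\{a,b\}$, and $\tau$ swaps $a$ and $b$, so $\tau$ moves the chosen value. This is a contradiction.

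An equivariant section $s$ of $p$ would give the $\emptyset$-supported choice function $P\mapsto\pi_2 s(P)$, using Lemma~\ref{lem:basic}(iii). So $p$ has no section. The transfer statement holds because an equivalence of categories preserves epimorphisms and the existence of sections, and $p$, $E_2$, $[A]^2$ correspond under Corollary~\ref{thm:regime} to the analogous objects (the equivalence preserves the atom object up to isomorphism via the site). I will just note this briefly; the direct uniform argument above is the real content.

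\textbf{Part (c).} An isomorphism of ZFA structures that respects the atom predicate restricts to a bijection between the atom sorts. So it would force $\aleph_0=|A|$, which fails when $|A|>\aleph_0$. The rank-one reformulation follows from part (a): the set of finitely supported subsets of the atom sort has cardinality $|A|$, which is $\aleph_0$ for $\omega$. This set is defined in the same way in both universes, so an isomorphism would restrict to a bijection between them. The only care needed is to stress that the isomorphism preserves membership and the atom predicate, so that it carries $A$ onto $\omega$ and $\Pfs$ onto $\Pfs$.
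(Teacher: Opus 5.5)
Your proposal is correct and follows the paper's proof essentially step for step. For (a), you bound the finitely many relations supported by each finite context (the equality types over $S$) and sum over the $|[A]^{<\omega}|=|A|$ contexts, then add the singletons and Cantor's theorem. For (b), you use the transposition of two atoms outside the support, which fixes $\{a,b\}$. For (c), you restrict an isomorphism preserving the atom predicate to the atom sorts, with (a) supplying the rank-one witness.
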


\begin{proof}
(a) For each finite $S\subseteq A$, the group $\Fix(S)$ has only
finitely many orbits on $A^n$, classified by equality patterns among
the coordinates and their incidences with the parameters in $S$.
Every relation supported by $S$ is a union of these finitely many
orbits. Hence only finitely many relations are supported by a fixed
$S$, and
\[
  |\Rel^{(n)}_{\fs}(A)|\leq |[A]^{<\omega}|=|A|.
\]
The reverse inequality follows from the $|A|$ singleton subsets of
$A^n$. The strict inequality is Cantor's theorem.

(b) Suppose $c:[A]^2\to A$ is a choice function supported by the
finite set $S$. Choose distinct $a,b\in A\setminus S$. The
transposition $\tau=(a\ b)$ fixes $S$ and the argument $\{a,b\}$.
Support of the function means
\[
  c(\tau\{a,b\})=\tau c(\{a,b\}),
\]
so $c(\{a,b\})$ would be fixed by $\tau$; however, neither admissible value is. Thus no finitely supported choice function exists. A section of $p$
in the category is an equivariant choice function, so it is excluded as
a special case. Equivalences preserve the existence of sections.

(c) An isomorphism of ZFA structures with the atom predicate restricts
to a bijection between the atom sorts. No such bijection exists when
their external cardinalities differ. Part (a) gives the stated
rank-one witness as well.
\end{proof}

\begin{rem}
Part (b) separates two claims that are easily conflated. The
nonexistence of an equivariant section is an internal categorical
statement and is transported by the cardinality-invariance
equivalence. The
nonexistence of every finitely supported choice function quantifies
also over maps with nonempty support; it is not itself a statement
about morphisms, but the same transposition argument proves it at each
infinite cardinality. Cardinal-sensitive information therefore lies
in the material presentation, whereas the basic choice obstruction is
uniform.
\end{rem}

\begin{rem}[Structural versus material]\label{rem:structuralism}
Corollary~\ref{thm:regime} and Theorem~\ref{thm:material} explain why the
material FSS universe over an uncountable atom reservoir is not made
redundant by categorical equivalence. At the categorical level,
$\FSS_{\Sym(A)}(A)$ is equivalent to the same Schanuel topos for every
infinite~$A$; consequently every statement invariant under equivalence of
these action categories is already determined in the countable
presentation. In particular, categorical questions such as existence of a
section of the choice epimorphism $E_2\to[A]^2$ are transported by the
equivalence.

The material universe $\Vfs^{\Sym(A)}(A)$ retains information that this
abstract action category forgets. The atom predicate has external
cardinality $|A|$, Theorem~\ref{thm:material}(a) gives
$|\Pfs(A)|=|A|$, and Theorem~\ref{thm:material}(b) strengthens the
categorical pair-choice obstruction by ruling out \emph{every} finitely
supported choice function, including those with nonempty support. Thus two
presentations may have equivalent finite-support action topoi while being
non-isomorphic as material ZFA universes. The two viewpoints are therefore
complementary: the categorical level isolates structural invariants of the
action, whereas the material FSS level retains external cardinal and
support-sensitive set-theoretic information.
\end{rem}

\section{Restriction of symmetry}\label{sec:restrict}

We now turn to the symmetry parameter, and begin with the most direct comparison: restrict the group and observe the categorical consequences.
Passing from $G$ to a subgroup~$H$ can only make support easier to achieve, so every $G$-set survives the passage; the question is how much of the categorical structure survives with it. The answer is a useful baseline for everything that follows, and it also shows why a baseline is not enough; the comparison functor is too weak to settle whether the two universes are genuinely different, which is what forces the invariant of Section~\ref{sec:orient}.

Restriction compares data symmetries on the same atom set. It is a
faithful change-of-symmetry functor, but not generally an embedding in the categorical sense: distinct $G$-actions may have the same
restriction, and infinite supported products can enlarge after the
group is reduced.

\begin{thm}[Restriction of symmetry]\label{thm:restrict}
Let $H\leq G\leq\Sym(A)$.
\begin{enumerate}[label=\textup{(\alph*)}]
\item Restriction of actions defines a functor
$\rho^G_H:\FSS_G(A)\to\FSS_H(A)$. Every finite $G$-support remains
an $H$-support; hence, whenever least supports exist,
\[
   \supp_H(x)\subseteq\supp_G(x).
\]
\item The functor is faithful and conservative. It preserves and
reflects finite limits and all small colimits.
\item In general it is neither full nor injective on objects, and it
need not preserve infinite products. On canonical finite powers,
\[
\Sub_{\FSS_H(A)}(A^n)=\Inv_n(H)
\supseteq \Inv_n(G)
=\rho^G_H\bigl(\Sub_{\FSS_G(A)}(A^n)\bigr).
\]
Equality at every finite arity is equivalent to
$\overline H=\overline G$ in the pointwise-convergence topology; in
particular, for closed $H$ and $G$ it is equivalent to $H=G$.
\item If $H\subsetneq G$ are closed subgroups of $\Sym(A)$, the
inclusion in (c) is strict at some finite arity.
\end{enumerate}
\end{thm}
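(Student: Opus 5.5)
Parts (a), (b) and (d) are formal; part (c) contains the one substantive step, the closure characterization. The plan is to handle them in the order (a), (b), (c), (d), working directly with underlying sets.

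\emph{Part (a).} $\rho^G_H$ keeps the underlying set, the function, and the action restricted to $H$. Since $\Fix_H(S)=\Fix_G(S)\cap H\subseteq\Fix_G(S)$, every finite $G$-support of $x$ is an $H$-support. Hence restricted objects lie in $\FSS_H(A)$, and $G$-equivariant maps are $H$-equivariant. If least supports exist, $\supp_G(x)$ is in particular an $H$-support of $x$. So the least $H$-support is contained in it.

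\emph{Part (b).} Faithfulness is immediate because functions are unchanged. For conservativity, take a $G$-map whose restriction is invertible in $\FSS_H(A)$; it is then bijective, and the set-theoretic inverse of a bijective $G$-equivariant map is $G$-equivariant. For preservation I will show that in each $\FSS_G(A)$ these constructions are computed on underlying sets:
\begin{itemize}
\item finite products are cartesian products with the diagonal action, finitely supported by Lemma~\ref{lem:basic}(ii);
\item equalizers are invariant subsets;
\item small coproducts are disjoint unions;
\item coequalizers are quotients by the generated invariant equivalence relation, and the class of an element is supported by any support of a representative, by Lemma~\ref{lem:basic}(iii).
\end{itemize}
All of these are independent of the group, so restriction preserves them. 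Reflection then follows in the standard way: a faithful conservative functor that preserves a class of (co)limits also reflects it, because the comparison map is sent to an isomorphism.

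\emph{Part (c), counterexamples.} Take $H=\{\mathrm{id}\}$ and $G=\Sym(A)$.
\begin{itemize}
\item Not full: the constant map $A\to A$ with value $a$ is $H$-equivariant but not $G$-equivariant.
\item Not injective on objects: the natural action of $G$ on $A$ and the trivial action of $G$ on $A$ are both finitely supported, and both restrict to the unique action of $H$ on $A$.
\item Infinite products not preserved: in $\FSS_G(A)$ the product $A^\omega$ consists only of the finitely supported sequences, so an injective sequence is excluded. In $\FSS_H(A)$, which is $\Set$, the product is the full $A^\omega$, so the restricted comparison map is not an isomorphism.
\end{itemize}
The displayed identity is Lemma~\ref{lem:subobj} applied to $A^n$: subobjects are exactly the invariant relations. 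Since $H\le G$, we get $\Inv_n(G)\subseteq\Inv_n(H)$.

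\emph{Part (c), closure characterization.} The main obstacle is showing that equality of invariant relations forces equality of closures.
\begin{itemize}
\item If $\overline H=\overline G$, then Corollary~\ref{cor:closure-hierarchy} with $S=\emptyset$ gives $\Inv_n(H)=\Inv_n(\overline H)=\Inv_n(\overline G)=\Inv_n(G)$.
\item Conversely, assume equality at every arity and let $g\in G$ and $F$ a finite set be given. Enumerate $F$ as a tuple $\bar a$. The $H$-orbit $H\bar a$ is $H$-invariant, hence $G$-invariant, so $g\bar a\in H\bar a$. This gives $h\in H$ agreeing with $g$ on $F$, so $g\in\overline H$.
\item Thus $G\subseteq\overline H\subseteq\overline G$, and hence $\overline H=\overline G$. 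For closed subgroups this reads $H=G$.
\end{itemize}

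\emph{Part (d).} If $H\subsetneq G$ are closed, then $\overline H\neq\overline G$. By (c) some arity $n$ has $\Inv_n(G)\subsetneq\Inv_n(H)$. Concretely, this is witnessed by the $H$-orbit of a tuple enumerating a finite set on which some $g\in G$ disagrees with every element of $H$.
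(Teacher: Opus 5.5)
Your proposal is correct and follows essentially the paper's proof: the same support inclusion, the same computation of finite limits and colimits on underlying sets, the same $H=\{1\}$, $G=\Sym(A)$ counterexamples, and the same tuple-orbit argument for the closure criterion. The differences are cosmetic. You obtain reflection from the standard fact that a conservative functor preserving (co)limits that exist in its domain also reflects them, whereas the paper checks directly that the set-theoretic mediating map is equivariant. For the closure criterion you prove $G\subseteq\overline H$ directly from $H\le G$, whereas the paper first shows that two closed groups with the same invariants coincide and then passes to closures.
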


\begin{proof}
Since
$\Fix_H(S)=H\cap\Fix_G(S)\subseteq\Fix_G(S)$, every $G$-support is
an $H$-support. Faithfulness and conservativity follow because the
underlying functions are unchanged and a bijective equivariant map
has an equivariant inverse.

Finite limits in either category are the set-theoretic finite limits
with coordinatewise action; a tuple has the union of the finitely many
component supports. Colimits are the set-theoretic colimits with the
induced action; each element is represented by an element of one
component and inherits its support. Hence restriction preserves these
constructions. It also reflects them: if a cone or cocone becomes a
finite limit or a colimit after restriction, its underlying set cone or
cocone has the corresponding universal property in $\Set$. The unique
set-theoretic mediating map is $G$-equivariant because the structure
maps of the original diagram are $G$-equivariant, so the same universal
property holds before restriction.

The failures are concrete. If $H=\{1\}$ and $G=\Sym(A)$, the natural
$G$-action on $A$ and the trivial $G$-action on the same underlying set
restrict to the same $H$-object, so the functor is not injective on
objects. Every set map is $H$-equivariant, whereas most maps
$A\to A$ are not $G$-equivariant, so the functor is not full. The
product of an infinite family of natural copies of $A$ in
$\FSS_G(A)$ consists only of the finitely supported tuples, whereas in
$\FSS_H(A)=\Set$ it is the full set-theoretic product.

The subobject formula is Lemma~\ref{lem:subobj}. We now prove the
closure criterion without a countability assumption. If closed
subgroups $K,L\leq\Sym(A)$ have the same invariant relations at every
finite arity, then they have the same finite-tuple orbits: the orbit
$L\bar a$ is $L$-invariant, hence $K$-invariant, and so
$K\bar a\subseteq L\bar a$; symmetry gives equality. Given
$g\in L$ and a finite tuple $\bar a$, there is therefore $k\in K$
with $k\bar a=g\bar a$. Thus $g$ lies in the closure of $K$, which
is $K$; hence $L\leq K$, and symmetrically $K=L$.

Applying this to $\overline H$ and $\overline G$ proves (c), since a
group and its closure have the same finitary invariants. Clause (d)
is the contrapositive for proper closed subgroups.
\end{proof}

\begin{rem}[On the cardinality hypothesis]\label{rem:noctble}
Only the \emph{injectivity} half of the Galois correspondence is used above, and that half is cardinality-free, as the proof shows. The descriptive half (which algebras of relations arise as $\Inv_n(G)$, and the reconstruction machinery built on it) does depend on countability in an essential way, and we do not use it; see Question~\ref{q:reconstruction}.
\end{rem}

\begin{cor}[Full symmetry as a faithful baseline]\label{cor:subsumption}
For every infinite $A$ and every $G\leq\Sym(A)$, choose the age-site
equivalence $\Nom\simeq\FSS_{\Sym(A)}(A)$ from
Corollary~\ref{thm:regime} so that the canonical nominal atom object is
carried to $A$, and compose it with restriction. This gives a faithful and
conservative interpretation
\[
   \Nom\longrightarrow\FSS_G(A)
\]
that preserves finite limits and all small colimits.
For a proper closed $G<\Sym(A)$, over any infinite atom set, this interpretation has strictly fewer subobjects on some finite power~$A^n$ than are available in $\FSS_G(A)$; for example, when $G=\Aut(\mathbb Q,<)$, the order relation belongs to $\Sub_{\FSS_G(A)}(A^2)$ but is unavailable under full symmetry.
\end{cor}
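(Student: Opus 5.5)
The plan is to assemble the corollary from Corollary~\ref{thm:regime} and Theorem~\ref{thm:restrict}, with one small bookkeeping step about where the nominal atom object goes.

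First, fix an equivalence $E:\Nom\to\FSS_{\Sym(A)}(A)$ from Corollary~\ref{thm:regime}. If needed, adjust $E$ by a natural isomorphism so that it sends the canonical atom object $\omega$ to $A$. This is possible because $\omega$ and $A$ correspond to the same object of the age site: each is the orbit of a $1$-tuple, that is, the representable on the one-point structure. The canonical objects are therefore matched up to isomorphism, and we transport along that isomorphism.

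Second, compose $E$ with $\rho^{\Sym(A)}_G$. An equivalence is faithful, conservative, and preserves and reflects all limits and colimits that exist. Theorem~\ref{thm:restrict}(b) gives the same properties (faithful, conservative, preserving finite limits and small colimits) for restriction. The composite inherits all of them.

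Third, for the strictness claim, take $G$ proper and closed in $\Sym(A)$; note that $\Sym(A)$ is itself closed. Theorem~\ref{thm:restrict}(c),(d) gives some $n$ with $\Inv_n(\Sym(A))\subsetneq\Inv_n(G)$. By Lemma~\ref{lem:subobj}, the left side is exactly the set of subobjects of $A^n$ in $\FSS_{\Sym(A)}(A)$. The equivalence $E$ preserves products, so it carries $\omega^n$ to $A^n$ and induces a bijection of subobject lattices. The image of the interpretation on $\Sub(\omega^n)$ is therefore $\Inv_n(\Sym(A))$, which is strictly smaller than $\Sub_{\FSS_G(A)}(A^n)=\Inv_n(G)$.

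For the example, $G=\Aut(\mathbb Q,<)$ is closed and proper. The relation ${<}$ is $G$-invariant, so it lies in $\Inv_2(G)$. It is not $\Sym(\mathbb Q)$-invariant, since a transposition reverses some pair. Equivalently, the $\Sym$-invariant binary relations are unions of the equality and inequality patterns, and ${<}$ is neither.

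The only step needing care is the first one, the normalization of $E$ on the atom object. Everything after it is a direct citation.
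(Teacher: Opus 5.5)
Your proposal is correct and follows essentially the same route as the paper: you normalize the age-site equivalence so that the singleton representable (the nominal atom object) goes to $A$, compose with restriction, take the formal properties from Theorem~\ref{thm:restrict}(b), and obtain strictness from Theorem~\ref{thm:restrict}(c)--(d) together with Lemma~\ref{lem:subobj}. Your explicit transport of subobject lattices along the product-preserving equivalence spells out what the paper's one-line citation of part~(d) leaves implicit.
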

\begin{proof}
Use the age-site equivalence induced by the identity equivalence of the category of finite sets and injections; the representable associated with a singleton is carried to the canonical atom object $A$. Compose this equivalence with the restriction functor of Theorem~\ref{thm:restrict}. Its formal properties follow from Theorem~\ref{thm:restrict}(b), and strictness for proper closed $G$ follows from part~(d). The ordered example is immediate because $<$ is $\Aut(\mathbb Q,<)$-invariant but not $\Sym(\mathbb Q)$-invariant.
\end{proof}

\begin{rem}\label{rem:oldphi}
Corollary~\ref{cor:subsumption} replaces an earlier proposed
construction that attempted to act on a nominal set by restricting an
arbitrary permutation to a chosen countable subset. Since the
permutation need not preserve that subset, the construction does not
define an action. Restriction of the acting group is canonical and
requires no auxiliary choice. It also identifies the precise sense in which a structured symmetry
enlarges the family of relational observables: the enlargement is
strict on suitable finite powers, even though the restriction functor
need not be a categorical embedding.
\end{rem}

\section{The meet law and stabilizer amalgamation}\label{sec:meet}

\paragraph{Notation relative to \cite{BKL14}.}
Boja\'nczyk--Klin--Lasota use right actions and write $G_C$ for the
pointwise stabilizer of a finite set $C$. We use left actions and write
$\Fix_G(C)$. Passing from a right action to the corresponding left action
by $g\cdot x:=x\cdot g^{-1}$ leaves pointwise stabilizers unchanged; since a
subgroup is closed under inversion, it also leaves its orbit as a set
unchanged. Thus their subgroup statements translate literally after
$G_C\rightsquigarrow\Fix_G(C)$, and their right orbit $c\cdot H$ is written
$H\cdot c$ below. We make this convention explicit because
\cite[Fact~9.2 and Thm.~9.3]{BKL14} are used in different parts of the
section.

The least-support property and its group-theoretic characterization are already known for arbitrary data symmetries from \cite{BKL14}. The purpose of this section is narrower and structural: to place that theory inside the invariance hierarchy by identifying its relational form as a meet law. This gives a common language in which least supports can be compared with the independent join law of Section~\ref{sec:join} and with the context principles studied later. After presenting the correspondence and the sharper BKL local test, we contrast equality symmetry with natural symmetries for which the meet law fails.

Both lattice laws compare, for finite $S,T\subseteq A$, the algebras
$B^{(n)}_{G,S}$, $B^{(n)}_{G,T}$ with those at $S\cap T$ and
$S\cup T$. Two inclusions are automatic from antitonicity of
$H\mapsto\Inv_n(H)$:
\begin{equation}\label{eq:trivial}
B^{(n)}_{G,S\cap T}\ \subseteq\ B^{(n)}_{G,S}\cap B^{(n)}_{G,T},
\qquad
B^{(n)}_{G,S}\vee B^{(n)}_{G,T}\ \subseteq\ B^{(n)}_{G,S\cup T}.
\end{equation}
The laws assert the reverse inclusions.

\begin{defi}\label{def:sap}
$G\leq\Sym(A)$ has the \emph{stabilizer amalgamation property}
(\textup{SAP}) if for all finite $S,T\subseteq A$, \ $\bigl\langle\,\Fix_G(S)\cup\Fix_G(T)\,\bigr\rangle\;=\;\Fix_G(S\cap T)$.
It has the \emph{Galois} \textup{SAP} if the formally weaker identity
$\Inv_n\bigl(\langle\Fix_G(S)\cup\Fix_G(T)\rangle\bigr)
=\Inv_n\bigl(\Fix_G(S\cap T)\bigr)$ holds for all~$n\geq1$.
\end{defi}

\begin{prop}\label{prop:sapmeet}
The meet law at all arities is equivalent to the Galois \textup{SAP},
and is implied by the \textup{SAP}.
\end{prop}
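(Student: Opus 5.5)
The plan is to use one observation: a relation is invariant under a group generated by a set of permutations exactly when it is invariant under each generator. From this the meet law is the Galois \textup{SAP} written in different notation, and the \textup{SAP} implies the Galois \textup{SAP} trivially. Fix $n\geq1$ and finite $S,T\subseteq A$, and put $K:=\langle\Fix_G(S)\cup\Fix_G(T)\rangle$.

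\emph{First step.} We show that
\[
   \Inv_n(K)=\Inv_n(\Fix_G(S))\cap\Inv_n(\Fix_G(T))=B^{(n)}_{G,S}\cap B^{(n)}_{G,T}.
\]
For $\subseteq$, note that $K$ contains both stabilizers and $\Inv_n$ is antitone (Lemma~\ref{lem:galois}(i)). For $\supseteq$, let $R$ be fixed setwise by every element of $\Fix_G(S)\cup\Fix_G(T)$. The setwise stabilizer $\{\pi\in G:\pi\cdot R=R\}$ is a subgroup of $G$. It contains this generating set, so it contains $K$. Equivalently, by the Galois connection of Lemma~\ref{lem:galois}(i), $\{R\}\subseteq\Inv_n(H)$ iff $H\leq\Aut_G(\{R\})$, and $\Aut_G(\{R\})$ is a subgroup.

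\emph{Second step.} We compare the two laws. The right-hand side of the Galois \textup{SAP} is by definition $\Inv_n(\Fix_G(S\cap T))=B^{(n)}_{G,S\cap T}$. By the displayed identity, the Galois \textup{SAP} at $(S,T,n)$ reads $B^{(n)}_{G,S}\cap B^{(n)}_{G,T}=B^{(n)}_{G,S\cap T}$, which is the meet law at $(S,T,n)$. Quantifying over all $S,T$ and all $n\geq1$ gives the claimed equivalence.

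\emph{Third step.} If the \textup{SAP} holds, then $K=\Fix_G(S\cap T)$. Applying $\Inv_n$ to both sides gives the Galois \textup{SAP}, and hence, by the second step, the meet law.

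There is no real obstacle. The only point that needs care is the $\supseteq$ inclusion in the first step: invariance must pass from generators to arbitrary words in $K$, including inverses. This holds because a setwise stabilizer is a subgroup, not merely a submonoid. A remark could also note that the converse (Galois \textup{SAP} $\Rightarrow$ \textup{SAP}) is not claimed. In general $K$ may be a non-closed dense subgroup of $\Fix_G(S\cap T)$, and the relational data cannot see the difference; compare Corollary~\ref{cor:closure-hierarchy}.
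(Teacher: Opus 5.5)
Your proof is correct and is the paper's argument: invariance under $\Fix_G(S)\cup\Fix_G(T)$ is the same as invariance under the generated subgroup $K$. This turns the meet law literally into the Galois \textup{SAP}, and applying $\Inv_n$ to the \textup{SAP} gives the Galois \textup{SAP}.

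Your closing aside is wrong, though it does not affect the proof. The converse does hold; this is Theorem~\ref{thm:meetchar}, (3)$\Leftrightarrow$(4). It can also be seen directly. For nonempty $S$ enumerated by $\bar s$, the orbit $K\cdot\bar s$ lies in $\Inv_{|S|}(K)=\Inv_{|S|}(\Fix_G(S\cap T))$. So every $g\in\Fix_G(S\cap T)$ satisfies $g\bar s=k\bar s$ for some $k\in K$, whence $k^{-1}g\in\Fix_G(S)\leq K$ and $g\in K$. In other words, $K$ contains the open subgroup $\Fix_G(S)$, so it is closed in $G$. It therefore can never be a proper dense subgroup of $\Fix_G(S\cap T)$, and relational data does detect the difference.
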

\begin{proof}
For any subgroups $H,K\leq G$ one has
$\Inv_n(H)\cap\Inv_n(K)=\Inv_n(\langle H\cup K\rangle)$ directly from
the definitions. With $H=\Fix_G(S)$, $K=\Fix_G(T)$ the meet law reads $\Inv_n(\langle H\cup K\rangle)=\Inv_n(\Fix_G(S\cap T))$, which is the Galois SAP; the SAP gives it by applying $\Inv_n$.
\end{proof}

The meet law thus lives at the level of relation algebras. The next
theorem identifies it with a
property of the entire finitely supported universe: the classical
\emph{least-support property}, the foundation stone of the nominal
calculus, which for general symmetry parameters can~fail.

\begin{thm}[Relational form of the least-support criterion]\label{thm:meetchar}
For every $G\leq\Sym(A)$ the following are equivalent.
\begin{enumerate}[label=\textup{(\arabic*)}]
\item The meet law holds at every arity: for all finite
$S,T\subseteq A$ and all $n\geq1$,
$B^{(n)}_{G,S}\cap B^{(n)}_{G,T}=B^{(n)}_{G,S\cap T}$.
\item The \emph{least-support property} holds: for every $G$-set $X$
and every $x\in X$, if the finite sets~$S$ and $T$ both support $x$
then so does $S\cap T$; consequently every finitely supported element
of every $G$-set has a least support.
\item The Galois \textup{SAP} holds.
\item The \textup{SAP} holds: \
$\bigl\langle\Fix_G(S)\cup\Fix_G(T)\bigr\rangle =\Fix_G(S\cap T)
\ \text{for all finite }S,T\subseteq A$.
\end{enumerate}
The equivalence of \textup{(2)} and \textup{(4)} is the subgroup criterion of \cite[Fact~9.2]{BKL14}; the equivalences with \textup{(1)} and \textup{(3)} are its relation-algebra formulation.
\end{thm}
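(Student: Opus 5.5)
The plan is to prove the cycle (4)$\Rightarrow$(2)$\Rightarrow$(1)$\Leftrightarrow$(3)$\Rightarrow$(4). The equivalence (1)$\Leftrightarrow$(3) is exactly Proposition~\ref{prop:sapmeet}, so only three implications need work.

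For (4)$\Rightarrow$(2), let $S$ and $T$ support $x$. The stabilizer of $x$ is a subgroup containing $\Fix_G(S)\cup\Fix_G(T)$. It therefore contains the generated subgroup, which by (4) equals $\Fix_G(S\cap T)$, so $S\cap T$ supports $x$. A least support then exists: intersect a minimal-cardinality support with any other support. For (2)$\Rightarrow$(1), apply (2) to the $G$-set $\mathcal P(A^n)$. A relation in $B^{(n)}_{G,S}\cap B^{(n)}_{G,T}$ is supported by both $S$ and $T$, hence by $S\cap T$. Together with the automatic inclusion \eqref{eq:trivial}, this gives the meet law.

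The main obstacle is (3)$\Rightarrow$(4), or equivalently (1)$\Rightarrow$(4). Here one must pass from invariance of relations, which sees only finite-tuple orbits, back to equality of subgroups. Put $H=\langle\Fix_G(S)\cup\Fix_G(T)\rangle$ and $K=\Fix_G(S\cap T)$, with $H\le K$. Enumerate $S\cap T$ as a tuple $\bar c$ and $S\cup T$ as a tuple $\bar u$ of length $n$. The orbit $H\bar u\subseteq A^n$ is $H$-invariant, so by (3) it is $K$-invariant. Hence for each $k\in K$ there is $h\in H$ with $h\bar u=k\bar u$. Then $h^{-1}k$ fixes $S\cup T$ pointwise, so it lies in $\Fix_G(S)\subseteq H$, and therefore $k\in H$.

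Using the full tuple $\bar u$ is what avoids any closure argument. Relation invariance at a finite tuple yields agreement only on that tuple, but every element of $G$ agreeing with the identity on $S\cup T$ already lies in $H$. So agreement on $S\cup T$ suffices, with no topological closure and no countability. This is the point to check carefully.

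Finally, I would note that (2)$\Leftrightarrow$(4) is \cite[Fact~9.2]{BKL14} and cite it as calibration. The argument above is self-contained, so the relational equivalences do not depend on that citation.
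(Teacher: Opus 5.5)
Your proof is correct, but it is organized differently from the paper's. The paper gets (1)$\Leftrightarrow$(3) from Proposition~\ref{prop:sapmeet}, and (4)$\Rightarrow$(3) and (2)$\Rightarrow$(1) exactly as you do. Its two nontrivial steps are:
\begin{itemize}
\item (1)$\Rightarrow$(2): apply the meet law at arity $|S|$ to the relation $\mathrm{Stab}_G(x)\cdot\bar a$, where $\bar a$ enumerates $S$;
\item (2)$\Rightarrow$(4): apply the least-support property to the base point of the coset space $G/H$, with $H=\langle\Fix_G(S)\cup\Fix_G(T)\rangle$.
\end{itemize}
Your (3)$\Rightarrow$(4) is these two steps composed and specialized to that base point, whose stabilizer is $H$. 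You run the orbit-of-a-tuple argument directly on $H$ and then recover least supports from the easy direction (4)$\Rightarrow$(2).

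Your route gives a one-step, purely subgroup-level proof that the ``formally weaker'' Galois SAP already forces the SAP, with no detour through arbitrary $G$-sets. The paper's route instead shows how the meet law acts on an arbitrary supported element, through a relation encoding that element's stabilizer.

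A few small points:
\begin{itemize}
\item Enumerating $S$ alone would suffice. What avoids a closure argument is only that $H$ contains the full pointwise stabilizer of the chosen tuple, and $\Fix_G(S)\le H$ already gives this. Using all of $S\cup T$ is harmless but not the essential feature.
\item Arities start at $1$, so you must dispatch the case $S\cup T=\emptyset$ separately. It is trivial, since then $H=G=K$.
\item The tuple $\bar c$ enumerating $S\cap T$ is introduced but never used.
\end{itemize}
Your minimal-cardinality argument for the existence of a least support is a valid alternative to the paper's intersection over subsets of a fixed support $S_0$.
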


\begin{proof}
(1)$\Leftrightarrow$(3) is Proposition~\ref{prop:sapmeet}, and
(4)$\Rightarrow$(3) is immediate.
(2)$\Rightarrow$(1): a relation $R\in B^{(n)}_{G,S}\cap B^{(n)}_{G,T}$ is an element of the $G$-set $\mathcal P(A^n)$ (with the image action) supported by $S$ and by $T$; by (2) it is supported by $S\cap T$, i.e.\ $R\in B^{(n)}_{G,S\cap T}$.

(1)$\Rightarrow$(2): let $x\in X$ be supported by $S$ and by $T$. If $S=\emptyset$, then $S\cap T=\emptyset$ already supports $x$. Otherwise put $V:=\mathrm{Stab}_G(x)$, so that
$\Fix_G(S)\cup\Fix_G(T)\subseteq V$. Enumerate~$S$ as a tuple
$\bar a$ of length $k:=|S|$ and consider the relation
$R\;:=\;V\cdot\bar a\;\subseteq\;A^{k}$.
For $\pi\in\Fix_G(S)\cup\Fix_G(T)\subseteq V$ we have
$\pi\cdot R=\pi V\bar a=V\bar a=R$; hence
$R\in B^{(k)}_{G,S}\cap B^{(k)}_{G,T}$. By the meet law at arity $k$,
$R\in B^{(k)}_{G,S\cap T}$, i.e.\ every $\pi\in\Fix_G(S\cap T)$
satisfies $\pi\cdot R=R$. For such $\pi$, in particular
$\pi\bar a\in R$, so $\pi\bar a=v\bar a$ for some $v\in V$; then
$v^{-1}\pi$ fixes the entries of $\bar a$ pointwise, i.e.\
$v^{-1}\pi\in\Fix_G(S)\subseteq V$, whence $\pi\in V$. Thus
$\Fix_G(S\cap T)\subseteq\mathrm{Stab}_G(x)$: the set $S\cap T$
supports $x$.

To obtain a least support, fix one finite support $S_0$ of $x$ and
consider
\[
   \mathcal S_0=\{U\subseteq S_0:U\text{ supports }x\}.
\]
This is a nonempty finite family, and repeated application of the
intersection property shows that
$L:=\bigcap\mathcal S_0$ supports $x$. If $T$ is any finite support,
then $S_0\cap T$ is a member of $\mathcal S_0$, so
$L\subseteq S_0\cap T\subseteq T$. Hence $L$ is the least support.

For reference, we also present the direct implication (2)$\Rightarrow$(4), although the equivalence of these two conditions is exactly \cite[Fact~9.2]{BKL14}. Fix finite $S,T$ and put
$H:=\bigl\langle\Fix_G(S)\cup\Fix_G(T)\bigr\rangle$.
In the left-coset $G$-set $G/H$, the base point $H$ is supported both
by $S$ and by $T$. Hence it is supported by $S\cap T$, so
$\Fix_G(S\cap T)\leq\mathrm{Stab}_G(H)=H$. The reverse inclusion is
automatic because both generators fix $S\cap T$ pointwise. Thus
$H=\Fix_G(S\cap T)$.
\end{proof}

Theorem~\ref{thm:meetchar} should therefore be read as a change of language, not as a new least-support test: it identifies the meet law of the invariance hierarchy with the already known subgroup criterion and presents the equivalence in the notation used throughout the rest of the paper.

\begin{cor}[BKL subgroup criterion, in the present notation]\label{cor:criterion}
Let $A$ be an infinite set and $G\leq\Sym(A)$ any permutation group.
Then every finitely supported element of every $G$-set has a least
support if and only if
\begin{equation}\label{eq:sapcrit}
\Fix_G(S\cap T)\;=\;\bigl\langle\,\Fix_G(S)\cup\Fix_G(T)\,\bigr\rangle
\qquad\text{for all finite }S,T\subseteq A.
\end{equation}
Equivalently, the meet law and the Galois \textup{SAP} hold. As in the original criterion of \cite{BKL14}, this statement is about arbitrary data symmetries; no additional closedness or oligomorphicity assumption is being introduced here.
\end{cor}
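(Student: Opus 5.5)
The corollary is essentially a repackaging of Theorem~\ref{thm:meetchar}, so the plan is to read it off that theorem and check only the step the theorem leaves implicit. The statement has three parts: the left-hand condition, ``every finitely supported element of every $G$-set has a least support''; the right-hand condition, the identity \eqref{eq:sapcrit}, which is clause (4) (SAP); and the ``equivalently'' clause, which adds clauses (1) and (3).

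The one subtlety is the left-hand condition. It is phrased as existence of least supports, whereas clause (2) of Theorem~\ref{thm:meetchar} is the intersection property: if $S$ and $T$ support $x$, then so does $S\cap T$. The theorem proves that (2) implies existence of least supports. So for the direction ``least supports exist $\Rightarrow$ \eqref{eq:sapcrit}'' we must recover the intersection property, or go straight to SAP.

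To go straight to SAP, fix finite $S,T$ and put $H:=\langle\Fix_G(S)\cup\Fix_G(T)\rangle$. Consider the base point $H$ of the coset $G$-set $G/H$, as in the proof of (2)$\Rightarrow$(4).

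\begin{itemize}
\item The base point is supported by both $S$ and $T$, so it is finitely supported.
\item By hypothesis it has a least support $L$, so $L\subseteq S$ and $L\subseteq T$, hence $L\subseteq S\cap T$.
\item By Lemma~\ref{lem:basic}(i), $S\cap T$ then supports the base point. Therefore $\Fix_G(S\cap T)\le\mathrm{Stab}_G(H)=H$.
\item The reverse inclusion $H\le\Fix_G(S\cap T)$ is automatic, since both generating subgroups fix $S\cap T$ pointwise.
\end{itemize}

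This gives \eqref{eq:sapcrit}. The same argument applied to an arbitrary $x$ shows that existence of least supports implies the intersection property, so this step presents no real obstacle.

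For the converse, assume \eqref{eq:sapcrit}. This is clause (4), which gives clause (2) by Theorem~\ref{thm:meetchar}, and (2) yields least supports through the finite-family intersection argument given there. The ``equivalently'' clause is then the (1)$\Leftrightarrow$(3)$\Leftrightarrow$(4) part of the same theorem.

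Finally, the remark about arbitrary data symmetries needs no proof of its own: no step above uses closedness, countability or oligomorphicity, only cosets and Lemma~\ref{lem:basic}. We would note that this matches \cite[Fact~9.2]{BKL14} after the left/right translation fixed at the start of the section.
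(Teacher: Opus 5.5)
Your proposal is correct and follows the paper's route. The paper's proof just cites Theorem~\ref{thm:meetchar}, and your coset argument on $G/H$ is exactly that theorem's (2)$\Rightarrow$(4) step, applied directly to the existence-of-least-supports hypothesis. Your one addition is to make explicit the easy step the paper leaves implicit: a least support $L\subseteq S\cap T$ makes $S\cap T$ a support by Lemma~\ref{lem:basic}(i), so existence of least supports gives back the intersection property of clause (2).
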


\begin{proof}
This is Theorem~\ref{thm:meetchar}.
\end{proof}

\begin{cor}[Unary elementary-diamond reduction]\label{cor:arityone}
For $G\leq\Sym(A)$ the following are equivalent.
\begin{enumerate}[label=\textup{(\arabic*)}]
\item The meet law holds at every arity.
\item For every finite $E\subseteq A$ and distinct $c,d\in A\setminus E$,
\[
B^{(1)}_{G,E\cup\{c\}}\cap B^{(1)}_{G,E\cup\{d\}}=B^{(1)}_{G,E}.
\]
\item For every such $E,c,d$, with
\[
H_{E,c,d}:=\bigl\langle\Fix_G(E\cup\{c\})\cup\Fix_G(E\cup\{d\})\bigr\rangle,
\]
the BKL local orbit condition holds in left-action notation:
\[
\Fix_G(E)\cdot c\ \subseteq\ H_{E,c,d}\cdot c.
\]
(The reverse containment is automatic because $H_{E,c,d}\leq\Fix_G(E)$.)
\end{enumerate}
\end{cor}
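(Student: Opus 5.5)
The plan is to prove the cycle (1)$\Rightarrow$(2)$\Rightarrow$(3)$\Rightarrow$(1). For the last implication I will establish the \textup{SAP} of Theorem~\ref{thm:meetchar}(4) and then invoke that theorem.

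\emph{(1)$\Rightarrow$(2).} This is the special case $n=1$, $S=E\cup\{c\}$, $T=E\cup\{d\}$. Here $S\cap T=E$ because $c\neq d$ and $c,d\notin E$.

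\emph{(2)$\Rightarrow$(3).} Put $H=H_{E,c,d}$ and consider the unary relation $R:=H\cdot c\subseteq A$. Since $\Fix_G(E\cup\{c\})$ and $\Fix_G(E\cup\{d\})$ are both contained in $H$, each of them maps $R=Hc$ onto itself. Hence $R\in B^{(1)}_{G,E\cup\{c\}}\cap B^{(1)}_{G,E\cup\{d\}}$. By (2), $R\in B^{(1)}_{G,E}$, so $R$ is $\Fix_G(E)$-invariant. Because $c\in R$, this gives $\Fix_G(E)\cdot c\subseteq R=H\cdot c$.

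\emph{(3)$\Rightarrow$(1).} I proceed in two steps.

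First, I upgrade the orbit condition to the elementary diamond identity $H_{E,c,d}=\Fix_G(E)$. Take $\pi\in\Fix_G(E)$. By (3) there is $h\in H_{E,c,d}$ with $\pi c=hc$. Then $h^{-1}\pi$ fixes $E\cup\{c\}$ pointwise, so $h^{-1}\pi\in\Fix_G(E\cup\{c\})\leq H_{E,c,d}$. Therefore $\pi\in H_{E,c,d}$. The reverse containment is automatic.

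Second, I show that for all finite $S,T$,
\[
\Fix_G(S\cap T)\leq K_{S,T}:=\bigl\langle\Fix_G(S)\cup\Fix_G(T)\bigr\rangle .
\]
The argument is by induction on $|S\setminus T|+|T\setminus S|$. If $S\subseteq T$ or $T\subseteq S$, then $\Fix_G(S\cap T)$ is one of the two generating subgroups, and there is nothing to prove. Otherwise choose $c\in S\setminus T$ and $d\in T\setminus S$; these are distinct. Put $E=S\cap T$.

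The pair $(S,\,T\cup\{c\})$ has intersection $E\cup\{c\}$ and strictly smaller symmetric difference. The induction hypothesis therefore gives
\[
\Fix_G(E\cup\{c\})\leq\bigl\langle\Fix_G(S)\cup\Fix_G(T\cup\{c\})\bigr\rangle\leq K_{S,T},
\]
using $\Fix_G(T\cup\{c\})\leq\Fix_G(T)$. Symmetrically, applying the hypothesis to $(S\cup\{d\},\,T)$ gives $\Fix_G(E\cup\{d\})\leq K_{S,T}$. Hence $H_{E,c,d}\leq K_{S,T}$, and the diamond identity from the first step yields $\Fix_G(E)\leq K_{S,T}$. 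The reverse inclusion $K_{S,T}\leq\Fix_G(S\cap T)$ holds trivially.

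This establishes the \textup{SAP}, and Theorem~\ref{thm:meetchar} ((4)$\Rightarrow$(1)) gives the meet law at every arity.

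The main point requiring care is this induction: the measure must really decrease, and the enlarged supports must still generate inside $K_{S,T}$. Both follow from antitonicity $\Fix_G(T\cup\{c\})\leq\Fix_G(T)$ and from the choice of $c,d$ outside the intersection.
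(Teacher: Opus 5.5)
Your proof is correct. For (1)$\Rightarrow$(2) and (2)$\Rightarrow$(3) it matches the paper in substance. The paper rewrites the intersection as $\Inv_1(H_{E,c,d})$ via Proposition~\ref{prop:sapmeet} and then uses that unary invariant algebras determine orbit partitions. You test (2) on the single witness relation $H_{E,c,d}\cdot c$, which is the same idea made concrete.

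The real difference is in (3)$\Rightarrow$(1). The paper does not prove this direction itself. It cites \cite[Thm.~9.3]{BKL14}, translated to left actions, to obtain least supports from the local orbit condition, and then applies Theorem~\ref{thm:meetchar}, (2)$\Rightarrow$(1). You give a self-contained argument instead, in two steps:
\begin{enumerate}
\item A one-line coset argument upgrades the orbit condition to the group identity $H_{E,c,d}=\Fix_G(E)$.
\item An induction on $|S\setminus T|+|T\setminus S|$ assembles these elementary diamonds into the full \textup{SAP}. The induction enlarges $T$ by $c$ and $S$ by $d$, so that the two auxiliary intersections become $E\cup\{c\}$ and $E\cup\{d\}$.
\end{enumerate}
Theorem~\ref{thm:meetchar}, (4)$\Rightarrow$(1), then finishes.

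The induction is sound on each point:
\begin{itemize}
\item the measure drops by one for each auxiliary pair;
\item antitonicity places the auxiliary generators inside $\langle\Fix_G(S)\cup\Fix_G(T)\rangle$;
\item $c\neq d$ and $c,d\notin E$ hold by the choice of $c$ and $d$.
\end{itemize}

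Your route buys independence from the external citation and from the right/left-action translation. It also yields the sharper fact that (3) gives the group-level \textup{SAP} directly, not only its Galois form. The paper's route is shorter, and it deliberately signals that this implication is BKL's known result, its own contribution being only the relational reformulation.
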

\begin{proof}
(1)$\Rightarrow$(2) is immediate. For (2)$\Rightarrow$(3), Proposition~\ref{prop:sapmeet} gives
\[
B^{(1)}_{G,E\cup\{c\}}\cap B^{(1)}_{G,E\cup\{d\}}=\Inv_1(H_{E,c,d}).
\]
Thus (2) says $\Inv_1(H_{E,c,d})=\Inv_1(\Fix_G(E))$. Unary invariant algebras determine the orbit partition of $A$: their minimal nonempty members are exactly the corresponding orbits. Hence the $H_{E,c,d}$-orbit and the $\Fix_G(E)$-orbit of $c$ coincide, which is (3). Finally, (3)$\Rightarrow$ least supports is exactly \cite[Thm.~9.3]{BKL14}, translated from the right-action notation used there to the present left-action convention, and least supports imply (1) by Theorem~\ref{thm:meetchar}.
\end{proof}

\begin{cor}[Hierarchy form of BKL well-behavedness]\label{cor:wellbehaved}
Let $(A,G)$ be a Fra\"iss\'e symmetry in the sense of \cite[Sec.~10]{BKL14}. Then it is well behaved in the sense of \cite[Def.~10.5]{BKL14} if and only if its invariance hierarchy satisfies both:
\begin{enumerate}[label=\textup{(\alph*)}]
\item the meet law (equivalently, the unary elementary-diamond law of Corollary~\ref{cor:arityone});
\item injectivity of $\ell_n$ for some, equivalently every, $n\geq1$.
\end{enumerate}
\end{cor}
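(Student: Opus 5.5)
The plan is to reduce the statement to its two defining clauses in \cite[Def.~10.5]{BKL14} and match each clause with one of the hierarchy conditions using results already proved. In BKL, a Fra\"iss\'e symmetry is \emph{well behaved} when every finitely supported element of every $G$-set has a least support and every finite set of atoms is fungible (Def.~9.6). I will first restate this definition in the present left-action notation, using the translation fixed at the start of this section. Pointwise stabilizers and orbits are unchanged under that translation, so both clauses become statements about $\Fix_G(-)$ and $G$-orbits.

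For the least-support clause, I will invoke Theorem~\ref{thm:meetchar}. Its equivalence (1)$\Leftrightarrow$(2) shows that least supports for all $G$-sets are exactly the meet law at all arities. Corollary~\ref{cor:arityone} then gives the equivalence with the unary elementary-diamond law, which is the parenthetical form stated in (a). This step needs no hypothesis beyond $G\leq\Sym(A)$, so the Fra\"iss\'e assumption is not used here.

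For the fungibility clause, I will invoke Lemma~\ref{lem:fungible}. By (iii)$\Leftrightarrow$(ii), fungibility of every finite $C\subseteq A$ is equivalent to injectivity of $\ell_n$ for some, equivalently every, $n\geq1$. Combining the two clauses gives the biconditional, and hence the corollary.

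The main obstacle is bookkeeping about definitions rather than any mathematics. I must check that BKL's Def.~10.5, stated for Fra\"iss\'e symmetries, really is the conjunction of least supports and fungibility, and not something phrased via the Fra\"iss\'e class (for instance amalgamation over substructures). If BKL state it in terms of the structure, I would expect their Sec.~10 to show this is equivalent to the group-theoretic clauses. I would cite that equivalence explicitly, and if necessary restrict to the atom set $A$ underlying the Fra\"iss\'e limit so that "finite $C$" ranges over the same sets in both papers.
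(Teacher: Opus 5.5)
Your proposal is correct and matches the paper's proof: the paper reads \cite[Def.~10.5]{BKL14} as exactly the conjunction of least supports and fungibility. It then identifies the first clause with the meet law via Theorem~\ref{thm:meetchar} and the second with injectivity of $\ell_n$ via Lemma~\ref{lem:fungible}. The definitional check you flag is therefore settled simply by taking BKL's definition as stated, and the difference between BKL's nominal $G$-sets and the arbitrary $G$-sets of Theorem~\ref{thm:meetchar}(2) is harmless, as Remark~\ref{rem:leastlit} notes.
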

\begin{proof}
By \cite[Def.~10.5]{BKL14}, a Fra\"iss\'e symmetry is well behaved exactly when it admits least supports and is fungible. Theorem~\ref{thm:meetchar} identifies the first condition with the meet law, and Lemma~\ref{lem:fungible} identifies the second with injectivity of the level maps.
\end{proof}

\begin{rem}[Relation to the BKL local criterion]\label{rem:bkl-local}
Corollary~\ref{cor:arityone} is not an independent new least-support criterion: its implication from the local orbit condition is precisely \cite[Thm.~9.3]{BKL14}. Its point is relational. It shows that preservation of meets by the entire family of invariant-relation algebras, at all arities and all finite pairs of contexts, is already determined by the unary levels on elementary diamonds $E\subset E\cup\{c\},E\cup\{d\}$. Together with Lemma~\ref{lem:fungible}, this turns the two defining conditions of BKL well-behaved Fra\"iss\'e symmetries into two elementary structural properties of one lattice-valued invariant.
\end{rem}

\begin{lem}[Equivariance of least supports]\label{lem:leastequi}
If every finitely supported element of every $G$-set has a least
support, then $\supp(\pi\cdot x)=\pi(\supp(x))$ for all $\pi\in G$.
\end{lem}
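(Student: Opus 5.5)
The plan is to derive the identity from Lemma~\ref{lem:basic}(v) applied twice, once with $\pi$ and once with $\pi^{-1}$, and then use that a least support is contained in every finite support. Fix $\pi\in G$ and a finitely supported element $x$ of a $G$-set $X$. By hypothesis $\supp(x)$ exists, and by Lemma~\ref{lem:basic}(v) the finite set $\pi(\supp(x))$ supports $\pi\cdot x$. In particular $\pi\cdot x$ is finitely supported, so its least support $\supp(\pi\cdot x)$ exists. Minimality then gives
\[
   \supp(\pi\cdot x)\subseteq\pi(\supp(x)).
\]

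For the reverse inclusion, apply the same argument to the element $\pi\cdot x$ and the group element $\pi^{-1}$. Lemma~\ref{lem:basic}(v) shows that $\pi^{-1}(\supp(\pi\cdot x))$ supports $\pi^{-1}\cdot(\pi\cdot x)=x$. By minimality of $\supp(x)$,
\[
   \supp(x)\subseteq\pi^{-1}(\supp(\pi\cdot x)).
\]
Applying the bijection $\pi$ to both sides gives $\pi(\supp(x))\subseteq\supp(\pi\cdot x)$. Together with the first inclusion this yields equality.

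No step is a genuine obstacle. The only point needing care is that ``least'' is used in the strong sense produced by Theorem~\ref{thm:meetchar}: it means contained in every finite support, not merely minimal under inclusion. This is exactly what the minimality steps above require, and the proof of Theorem~\ref{thm:meetchar} establishes it, since $L\subseteq T$ for every finite support $T$. Also, Lemma~\ref{lem:basic}(v) is stated for arbitrary $\pi\in G$ with $S$ finite, so it applies to $\pi^{-1}$ with no further hypothesis.
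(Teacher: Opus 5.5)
Your proof is correct and follows essentially the same route as the paper. Both arguments use Lemma~\ref{lem:basic}(v) with $\pi$ to show that $\pi(\supp x)$ supports $\pi x$, and with $\pi^{-1}$ to transport a support of $\pi x$ back to one of $x$. The paper phrases this as showing directly that $\pi(\supp x)$ is contained in every support $T$ of $\pi x$, whereas you first note that $\supp(\pi\cdot x)$ exists and then prove the two inclusions.
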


\begin{proof}
$\pi(\supp x)$ supports $\pi x$ by Lemma~\ref{lem:basic}(v); and if
$T$ supports $\pi x$ then $\pi^{-1}(T)$ supports~$x$, so
$\pi^{-1}(T)\supseteq\supp x$, i.e.\ $T\supseteq\pi(\supp x)$. Hence
$\pi(\supp x)$ is the least support of~$\pi x$.
\end{proof}

\begin{rem}[Position in the literature]\label{rem:leastlit}
The least-support property is a foundational hypothesis of nominal technique and is treated for arbitrary data symmetries in \cite{BKL14}. Definition~4.11 there is the same intersection-closure property as Theorem~\ref{thm:meetchar}(2); Fact~9.2 is equivalent to \eqref{eq:sapcrit}; and Theorem~9.3 gives the sharper local single-orbit test recorded in Remark~\ref{rem:bkl-local}. Accordingly, neither the generality in $G$ nor the subgroup criterion is claimed as new here. The difference in quantification is only notational: BKL formulate the property for nominal $G$-sets, whereas Theorem~\ref{thm:meetchar}(2) mentions finitely supported elements of arbitrary $G$-sets; every such element lies in its nominal orbit $G\cdot x$, so the two formulations are equivalent. For $\Sym(A)$, support intersection also holds at every infinite cardinality in the material FSS treatment \cite[Chapter~2]{AC20}.

The contribution of Theorem~\ref{thm:meetchar} is the relation-algebra formulation: it identifies the already known least-support condition with preservation of meets by the finite-context invariant algebras. This formulation is useful because the same hierarchy supports the independent join law of Section~\ref{sec:join}, the freshness analysis of Section~\ref{subsec:someany}, and the model-theoretic interpretation of Proposition~\ref{prop:weakEI}. Related permutation-group formulations of weak elimination of imaginaries are used in \cite{Paolini24}.
\end{rem}

The BKL examples also calibrate the two hierarchy properties just isolated.

\begin{prop}[BKL pathologies in the hierarchy]\label{prop:bklpathologies}
The standard examples of \cite[Examples~9.1, 9.9 and 9.10]{BKL14} have the following hierarchy profiles.
\begin{enumerate}[label=\textup{(\alph*)}]
\item For $A=\mathbb N\times\mathbb N$ and $G=\{\pi^2:\pi\in\Sym(\mathbb N)\}$, where $\pi^2(i,j)=(\pi(i),\pi(j))$, the meet law fails already at arity~$1$. This is a closed oligomorphic example with nontrivial stabilizer-definable closure.
\item For a countably infinite $A$ with distinguished point $d$ and $G=\Sym(A)_d$, the meet law holds but the level maps are not injective.
\item For $A=\{0,1\}\times\mathbb N$ with the two-layer symmetry of \cite[Example~9.10]{BKL14}, the level maps are injective but the meet law fails already at arity~$1$.
\end{enumerate}
Consequently meet preservation and level injectivity (equivalently, least supports and fungibility) are independent.
\end{prop}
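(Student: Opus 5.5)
The plan is to treat the three examples separately. Each uses the two reductions already available. For the meet law I use the unary elementary-diamond test of Corollary~\ref{cor:arityone}: a failure is witnessed by a single unary relation in $B^{(1)}_{G,E\cup\{c\}}\cap B^{(1)}_{G,E\cup\{d\}}\setminus B^{(1)}_{G,E}$, and validity follows from the local orbit condition $\Fix_G(E)\cdot c\subseteq H_{E,c,d}\cdot c$. For level injectivity I use Lemma~\ref{lem:fungible}: injectivity fails as soon as two distinct finite sets have the same pointwise stabilizer, and it holds as soon as every finite set is fungible. The final independence claim is then read off from (b) and (c).

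For (a), take $E=\emptyset$, $c=(0,1)$ and $d=(1,0)$. Both $\Fix_G(\{c\})$ and $\Fix_G(\{d\})$ consist of the $\pi^2$ with $\pi(0)=0$ and $\pi(1)=1$, so both fix $(0,0)$. Hence $\{(0,0)\}$ lies in both unary levels. It is not $G$-invariant, since its $G$-orbit is the whole diagonal, so the meet law fails at arity~$1$. The same computation shows $(0,0)$ is fixed by the stabilizer of $\{(0,1)\}$, which gives nontrivial stabilizer-dcl. Oligomorphicity holds because the orbit of an $n$-tuple of pairs is determined by the equality pattern among its $2n$ coordinates in $\mathbb N$. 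For closedness, let $\bar g$ be approximated on finite sets by elements of $G$. Compatibility of the approximations shows that $\bar g(i,i)=(\pi(i),\pi(i))$ for a well-defined bijection $\pi$. Approximating on $\{(i,i),(j,j),(i,j)\}$ then forces $\bar g(i,j)=(\pi(i),\pi(j))$, so $\bar g=\pi^2$.

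For (b), note that $\Fix_G(S)=\Fix_{\Sym(A)}(S\cup\{d\})$. Thus $\Fix_G(\emptyset)=\Fix_G(\{d\})$, and $\ell_n$ is not injective. For the meet law I verify the diamond condition directly. Fix $E$ and $c\neq c'$ outside $E$, and write $E'=E\cup\{d\}$. If $c$ or $c'$ equals $d$, the two orbits are trivial and there is nothing to prove. Otherwise $\Fix_G(E)\cdot c=A\setminus E'$. Transpositions $(c\,e)$ with $e\notin E'\cup\{c'\}$ lie in $\Fix_G(E\cup\{c'\})$. For $e=c'$, the composite $(c\,e_0)(c'\,e_0)(c\,e_0)$ with a fresh $e_0$ sends $c$ to $c'$ and lies in $H_{E,c,c'}$. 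Hence the $H_{E,c,c'}$-orbit of $c$ is all of $A\setminus E'$, and Corollary~\ref{cor:arityone} gives the meet law.

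For (c), I would first unpack the definition of the two-layer symmetry from \cite[Example~9.10]{BKL14} in left-action notation, with its two coupled actions of a permutation of $\mathbb N$ on the layers $\{0\}\times\mathbb N$ and $\{1\}\times\mathbb N$. Fungibility of every finite $C$ should follow by exhibiting, for each $c\in C$, a generator that moves $c$ while fixing $C\setminus\{c\}$; this uses the freedom of the construction on fresh indices. For the meet failure I expect the pattern of (a). Choose $E$ and $c,d$ so that both $\Fix_G(E\cup\{c\})$ and $\Fix_G(E\cup\{d\})$ pin down a common atom $e$ in the other layer that $\Fix_G(E)$ moves. Then $\{e\}$ is the required unary witness. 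This step is the main obstacle, because it depends on the precise coupling in the BKL example: the witness atom must be fixed by both stabilizers without making the contexts collapse, since injectivity has to survive. I would first try $c$ and $d$ in one layer and $e$ in the other layer, above the corresponding index.
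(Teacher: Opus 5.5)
Parts (a) and (b) are correct. In (a), your witness $\{(0,0)\}$ works as well as the paper's $\{(0,1)\}$. In (b), you check the local orbit condition of Corollary~\ref{cor:arityone} directly, where the paper cites the BKL least-support result. There is one slip there: if $c'=d$, the orbits need not be trivial. The condition still holds, because $\Fix_G(E\cup\{d\})=\Fix_G(E)$ forces $H_{E,c,c'}=\Fix_G(E)$. A shorter route is $\Fix_G(S)=\Fix_{\Sym(A)}(S\cup\{d\})$, which transfers the SAP of $\Sym(A)$ at once.

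Part (c), however, has a genuine gap. You give no argument, and the witness you plan to look for cannot exist once level injectivity holds.

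Suppose every finite set is fungible. For finite $S$ and $e\notin S$, fungibility of $S\cup\{e\}$ at $e$ gives an element of $\Fix_G(S)$ that moves $e$. So the only atoms fixed by all of $\Fix_G(S)$ are those of $S$. Now take an atom $e$ fixed by both $\Fix_G(E\cup\{c\})$ and $\Fix_G(E\cup\{d\})$. It lies in $(E\cup\{c\})\cap(E\cup\{d\})=E$, so $\{e\}$ is already in $B^{(1)}_{G,E}$ and cannot witness a meet failure.

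The ``pattern of (a)'' is nontrivial stabilizer-definable closure. That is exactly what fungibility excludes, and fungibility is the injectivity half of (c) by Lemma~\ref{lem:fungible}. So the tension you noticed is an outright contradiction, not a delicate balance to be tuned by choosing indices.

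The missing idea is a non-singleton, block-type witness, as in Theorem~\ref{thm:equivfail}. Take the layer $L=\{0\}\times\mathbb N$ and $c=(0,n)$, $d=(0,m)$ with $n\neq m$.
\begin{itemize}
\item An element of $G$ fixing an atom of layer~$0$ cannot perform the global layer swap, so $L\in B^{(1)}_{G,\{c\}}\cap B^{(1)}_{G,\{d\}}$.
\item The swap available in $G$ shows $L\notin B^{(1)}_{G,\emptyset}$.
\end{itemize}
Fungibility itself can then be taken from BKL, as the paper does.
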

\begin{proof}
For (a), put $p=(0,1)$ and $q=(1,0)$. The unary relation $\{p\}$ is supported both by $\{p\}$ and by $\{q\}$: fixing $q$ forces the underlying permutation of $\mathbb N$ to fix both $0$ and $1$, hence also $p$. It is not $\emptyset$-supported. Thus
\[
\{p\}\in B^{(1)}_{G,\{p\}}\cap B^{(1)}_{G,\{q\}}\setminus B^{(1)}_{G,\emptyset}.
\]
The group is oligomorphic because orbits of finite tuples of pairs are determined by equality types among finitely many underlying coordinates. It is closed: if a permutation of $A$ is a pointwise limit of diagonal permutations, its values on $(i,i)$ determine a single permutation~$\pi$ of~$\mathbb N$, and approximation simultaneously on $(i,i),(j,j),(i,j)$ forces the value of $(i,j)$ to be $(\pi(i),\pi(j))$. Finally, $\Fix_G(p)$ also fixes $q\neq p$, so the stabilizer-definable closure of $p$ is nontrivial.

For (b), BKL show that the distinguished-point symmetry admits least supports, hence the meet law holds by Theorem~\ref{thm:meetchar}. But for every $e\neq d$,
$\Fix_G(\{d,e\})=\Fix_G(\{e\})$, so Lemma~\ref{lem:fungible} gives failure of level injectivity.

For (c), BKL show that the two-layer symmetry is fungible, hence the level maps are injective by Lemma~\ref{lem:fungible}. Let $L=\{0\}\times\mathbb N$. Fixing any atom $(0,n)$ rules out the global layer swap, so $\{(0,n)\}$ supports $L$; but $L$ is not $\emptyset$-supported because an element of $G$ may swap the two layers. Distinct singletons therefore witness failure of the unary meet law.
\end{proof}

The following example is deliberately not presented as evidence of greater scope: it is the integer symmetry already used throughout \cite[Examples~2.3, 4.7 and 4.12]{BKL14}. We revisit it only to display, in the simplest possible form, what failure of the meet law looks like inside the invariance hierarchy.

\begin{exa}[The integer symmetry in relational form]\label{exa:offset}
Let $A=\mathbb Z$ and let $G=\langle t\rangle$ be the infinite cyclic
group generated by the shift $t(n)=n+1$, acting regularly. This is the
symmetry appropriate to values on which a program may compute
\emph{differences} but not absolute positions: relative timestamps,
offsets into a stream, cyclic sequence numbers. Its invariant binary
relations are exactly the unions of difference classes
$\{(x,y):y-x=d\}$ for $d \in \mathbb{Z}$, so $G$ has infinitely many orbits on $A^{2}$ and is not locally oligomorphic and is not the automorphism group of an $\omega$-categorical structure. The model-theoretic route through
Proposition~\ref{prop:weakEI} is therefore unavailable, whereas the
group criterion and the lattice characterization apply directly.

The failure of least supports for this integer symmetry is exactly the standard BKL counterexample \cite[Examples~4.7 and 4.12]{BKL14}. What is added here is only its relational visualization in the invariance hierarchy. The action is
regular, so $\Fix_G(S)=\{1\}$ for every nonempty finite $S$, whence
\[
   \bigl\langle\Fix_G(\{0\})\cup\Fix_G(\{1\})\bigr\rangle=\{1\}
   \neq G=\Fix_G(\emptyset)=\Fix_G(\{0\}\cap\{1\}),
\]
so the \textup{SAP} fails and least supports do not exist. The failure
is visible directly in the hierarchy: $B^{(n)}_{G,S}$ is the algebra of
\emph{all} $n$-ary relations for every nonempty $S$, while
$B^{(n)}_{G,\emptyset}$ contains only the shift-invariant ones, so the
meet law fails at $S=\{0\}$, $T=\{1\}$ already at arity~$1$ (for
example, the singleton $\{0\}$ belongs to both nonempty-context levels
but not to the empty-context level). Concretely,
any state whose orbit is nontrivial is supported by $\{0\}$ and by
$\{1\}$ and by no smaller context, so a configuration of such a machine
has no canonical register contents: every single value is an equally
good origin, and none is forced.
\end{exa}

\begin{rem}\label{rem:offsetreading}
The difference symmetry is natural, but its support discipline is degenerate in a precise sense: $\{0\}$ is a finite base for $G$, so every element of every $G$-set is finitely supported. We use the example only as a transparent illustration of the relational meet failure, not as a new counterexample or as evidence for a broader least-support criterion. It also shows that the obstruction can be visible without invoking imaginaries in the model-theoretic sense: here the symmetry acts freely, so fixing any one value fixes everything. In contrast, full equality symmetry has least supports at every infinite cardinality by the classical support-intersection lemma.
\end{rem}

For full equality symmetry, finite supports are closed under
intersection at every infinite atom cardinality \cite{AC20}.
Theorem~\ref{thm:meetchar} therefore yields both the meet law and the
\textup{SAP} for $\Sym(A)$. This classical case is the nominal baseline; the point here is not to re-characterize least supports for arbitrary data symmetries, which is already done in \cite{BKL14}, but to expose that property and its failures inside the invariant-relation hierarchy.

\begin{thm}[Meet-law failure for the generic equivalence relation]
\label{thm:equivfail}
Let $(A,E)$ be an equivalence structure with infinitely many classes,
all of the same infinite cardinality, and let $G=\Aut(A,E)$. Then the
meet law fails at arity $1$: there are $a\neq b$ with
\[
B^{(1)}_{G,\{a\}}\cap B^{(1)}_{G,\{b\}}\;\supsetneq\;
B^{(1)}_{G,\emptyset}.
\]
Consequently $G$ has neither the \textup{SAP} nor the Galois
\textup{SAP}.
\end{thm}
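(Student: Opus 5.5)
We exhibit an explicit unary witness, namely an equivalence class. Choose distinct $a,b$ lying in the same $E$-class, and let $C=[a]_E=[b]_E$ be that class, viewed as a unary relation $C\subseteq A$. The plan is to show that $C$ belongs to both $B^{(1)}_{G,\{a\}}$ and $B^{(1)}_{G,\{b\}}$ but not to $B^{(1)}_{G,\emptyset}$. The automatic inclusion \eqref{eq:trivial} then upgrades the resulting non-equality to the strict containment in the statement.

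Membership is immediate. Any $\pi\in\Fix_G(\{a\})$ is an automorphism of $(A,E)$ fixing $a$, so it maps the class of $a$ onto the class of $\pi(a)=a$, giving $\pi\cdot C=C$. The same argument works for $b$.

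For non-membership we need some $\pi\in G$ with $\pi\cdot C\neq C$. The hypotheses give at least two classes, all of the same cardinality. Pick a class $D\neq C$ and a bijection $f:C\to D$. Let $\pi$ swap $C$ and $D$ via $f$ and $f^{-1}$, and act as the identity on every other class. Then $\pi$ is a permutation of $A$ preserving $E$ in both directions, so $\pi\in G$, and $\pi\cdot C=D\neq C$. This step is the only place the equal-cardinality hypothesis is used; it is the one point needing care, but with the hypotheses as stated it is routine.

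The final sentence of the theorem then follows directly. The failure of the meet law at arity~$1$ contradicts clause~(1) of Theorem~\ref{thm:meetchar}, so clause~(3), the Galois \textup{SAP}, fails. Since (4)$\Rightarrow$(3), the \textup{SAP} fails as well. Concretely, the subgroup $\langle\Fix_G(\{a\})\cup\Fix_G(\{b\})\rangle$ stabilizes $C$, while $\Fix_G(\emptyset)=G$ does not. Infinitely many classes and infinite class size are not needed for this argument; they only make $G$ the generic example.
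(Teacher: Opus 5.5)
Your proposal is correct and follows essentially the same route as the paper. The witness is the class $C=[a]_E=[b]_E$, which is supported by $\{a\}$ and by $\{b\}$ but is moved by a class-swapping automorphism built from a bijection between equinumerous classes, and both SAP failures follow because $\langle\Fix_G(\{a\})\cup\Fix_G(\{b\})\rangle$ stabilizes $C$ while $G$ does not.
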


\begin{proof}
Choose $a\neq b$ in the same class $C=[a]_E=[b]_E$. The class $C$ is
supported by $\{a\}$: any automorphism fixing $a$ maps $[a]_E$ to
itself. Likewise $C$ is supported by $\{b\}$. But $C$ is not
$\emptyset$-supported: $G$ acts transitively on the set of classes,
since for classes $C_1\neq C_2$ any bijection $C_1\to C_2$, together
with its inverse and the identity elsewhere, is an automorphism
exchanging them (here the equinumerosity of the classes is used)
so some $\pi\in G=\Fix_G(\emptyset)$ moves~$C$. Thus $C\in
B^{(1)}_{G,\{a\}}\cap B^{(1)}_{G,\{b\}}\smallsetminus B^{(1)}_{G,\emptyset}$. The group-theoretic failure is visible directly: both $\Fix_G(a)$ and $\Fix_G(b)$ preserve $C$ setwise, hence so does the subgroup they generate, which is therefore proper in $G$.
\end{proof}

\begin{rem}[Requirement for Class Equinumerosity]\label{rem:whyequinum}
The hypothesis is not decorative. If the classes have pairwise
distinct infinite cardinalities, then every automorphism of $(A,E)$
fixes every class setwise, each class is $\emptyset$-supported, and
the witness above disappears; the structure is then also not
ultrahomogeneous. The standing hypothesis (infinitely many classes, all of one and the same infinite cardinality) is more than enough to guarantee the
ultrahomogeneity required in Definition~\ref{def:generic}: a finite
partial isomorphism can be extended class by class, and then on the
remaining classes by bijections of equal-sized complements. For
countable $A$ this gives the Fra\"iss\'e limit of the class of finite
equivalence structures, the ``generic equivalence relation'' of the
model-theoretic literature. We use this hypothesis throughout;
we do not claim that it is the only ultrahomogeneous equivalence~structure.
\end{rem}

\section{Freshness sorts and orbit-finiteness}
\label{sec:fresh}

Least supports are the first of the standing hypotheses that nominal
technique rests on, and Section~\ref{sec:meet} located it exactly in the
symmetry parameter. Two further principles are used just as constantly,
and this section locates them in the same way. The first is the
\emph{freshness quantifier}, the \textsf{some/any} principle of nominal
logic \cite{Pitts13}, which licenses the reasoning step
``choose a fresh name; the result does not depend on which.'' The
second is \emph{orbit-finiteness}, the finiteness notion on which the
theory of automata over atoms is built \cite{BKL14}. Both turn out to
be governed by the same two-line calculus of orbits, and the answers are
sharper than the least-support answer: the freshness quantifier in its
unrestricted form characterizes the equality symmetry outright, while
orbit-finiteness decomposes into finiteness of every context level and a
bound on support size.

\subsection{The some/any principle and its graded form}
\label{subsec:someany}

Throughout this subsection $S$ ranges over finite subsets of $A$. For a
$G$-set $X$ write
\[
   X^{S}\;=\;\{\,x\in X:\pi\cdot x=x\text{ for all }\pi\in\Fix_G(S)\,\}
\]
for the set of elements supported by $S$. Thus $B^{(n)}_{G,S}$ is
$\bigl(\mathcal P(A^n)\bigr)^{S}$, and Lemma~\ref{lem:basic}(iv) says
that a member of $X^{S}$ is, when $X$ is a set of subsets, a union of
$\Fix_G(S)$-orbits.

\begin{defi}[Freshness sorts]\label{def:sorts}
The \emph{freshness sorts over $S$} are the orbits of $\Fix_G(S)$ on
$A\smallsetminus S$. Write $\mathrm{sorts}_G(S)$ for their cardinality and
$ s_G(k)\;=\;\sup\{\,\mathrm{sorts}_G(S):|S|=k\,\}$.
We say $G$ has the \emph{some/any property} if $s_G(k)=1$ for every
$k$, that is, if $\Fix_G(S)$ acts transitively on $A\smallsetminus S$
for every finite $S$.
\end{defi}

The name is justified by the following proposition, which is the exact
form in which freshness reasoning is used: a supported predicate cannot
distinguish two atoms lying in the same sort, so on each sort
``for some fresh atom'' and ``for every fresh atom'' agree.

\begin{prop}[Graded \textsf{some/any}]\label{prop:someany}
Let $G\leq\Sym(A)$, let $S$ be finite and let $\varphi\subseteq A$ be
supported by $S$. Then $\varphi\smallsetminus S$ is a union of
freshness sorts over $S$. Consequently, for every sort $\sigma$ over
$S$, \quad $ (\exists a\in\sigma)\ a\in\varphi
   \qquad\Longleftrightarrow\qquad
   (\forall a\in\sigma)\ a\in\varphi$.

More generally, let $X\in\FSS_G(A)$, let $\varphi\subseteq A\times X$ be
supported by $S$ and let $x\in X$ be supported by $T$. Then the section
$\varphi_x=\{a\in A:(a,x)\in\varphi\}$ is supported by $S\cup T$, so the
displayed equivalence holds for every sort over $S\cup T$.
\end{prop}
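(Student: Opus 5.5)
The plan is to reduce both parts to Lemma~\ref{lem:basic}(iv) applied to subsets of $A$, and to handle the parametrized version by the support bounds of Lemma~\ref{lem:basic}(ii) and (iii).

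\emph{First part.} Since $\varphi\subseteq A$ is supported by $S$, Lemma~\ref{lem:basic}(iv) shows that $\varphi$ is a union of $\Fix_G(S)$-orbits on $A$. Every $\pi\in\Fix_G(S)$ fixes $S$ pointwise and, being a bijection, maps $A\smallsetminus S$ onto itself. Hence each $\Fix_G(S)$-orbit of $A$ is either a singleton $\{s\}$ with $s\in S$, or a freshness sort contained in $A\smallsetminus S$. Removing the singletons from $S$, we get that $\varphi\smallsetminus S$ is the union of those sorts that meet $\varphi$.

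The equivalence is then immediate. If some $a\in\sigma$ lies in $\varphi$, then every $a'\in\sigma$ has the form $\pi a$ with $\pi\in\Fix_G(S)$. Since $\pi\cdot\varphi=\varphi$, it follows that $a'\in\varphi$. The converse direction holds because sorts are nonempty.

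\emph{Second part.} The section $\varphi_x$ is the image of the pair $(\varphi,x)$ under the map $(\psi,y)\mapsto\{a:(a,y)\in\psi\}$, defined on $\mathcal P(A\times X)\times X$. This map is equivariant: for $\pi\in G$ we have $a\in(\pi\psi)_{\pi y}$ iff $(\pi^{-1}a,y)\in\psi$, i.e.\ iff $a\in\pi\cdot\psi_y$. By Lemma~\ref{lem:basic}(ii), $S\cup T$ supports the pair $(\varphi,x)$, so by (iii) it supports $\varphi_x$.

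Alternatively, I would check this directly. For $\pi\in\Fix_G(S\cup T)$ we have $\pi x=x$ and $\pi\varphi=\varphi$. Then $a\in\varphi_x$ iff $(a,x)\in\varphi$ iff $(\pi a,\pi x)=(\pi a,x)\in\varphi$ iff $\pi a\in\varphi_x$. Applying the first part with context $S\cup T$ gives the equivalence for every sort over $S\cup T$.

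There is no real obstacle here. The only points needing care are verifying equivariance of the section map, including the convention for the image action on $\mathcal P(A\times X)$, and noting that orbits of $\Fix_G(S)$ do not mix $S$ with $A\smallsetminus S$.
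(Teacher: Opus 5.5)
Your proposal is correct and is essentially the paper's proof. The first part is Lemma~\ref{lem:basic}(iv) together with the observation that each point of $S$ is a singleton $\Fix_G(S)$-orbit, and the relativized part is the computation $g\varphi_x=(g\varphi)_{gx}=\varphi_x$ for $g\in\Fix_G(S\cup T)$, which is exactly your equivariance of the section map (equivalently, your direct check).
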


\begin{proof}
The first claim is Lemma~\ref{lem:basic}(iv) applied to
$\varphi\in\bigl(\mathcal P(A)\bigr)^{S}$, together with the observation
that each $a\in S$ is its own $\Fix_G(S)$-orbit. The displayed
equivalence follows since $\varphi$ contains a sort or is disjoint from
it. For the relativized form, let $g\in\Fix_G(S\cup T)$. Since $S$
supports $\varphi$ and $T$ supports $x$, we have $g\varphi=\varphi$ and
$gx=x$. Therefore $g\varphi_x=(g\varphi)_{gx}=\varphi_x$. Thus
$S\cup T$ supports $\varphi_x$, and the first part applies to it.
\end{proof}

Proposition~\ref{prop:someany} requires no hypothesis whatever on $G$:
freshness reasoning is always available \emph{sort by sort}. What is
special about the nominal setting is that there is only one sort, and
this turns out to characterize it completely.

\begin{thm}[The freshness quantifier characterizes equality atoms]
\label{thm:someanychar}
Let $G\leq\Sym(A)$ be closed. Then $G$ has the some/any property if
and only if $G=\Sym(A)$.
\end{thm}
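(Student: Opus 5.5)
The plan is to show that $\Fix_G(S)$ transitive on $A\smallsetminus S$ for all finite $S$ forces $G$ to be $k$-transitive on $k$-tuples of distinct atoms for every $k$. Closedness then turns this into $G=\Sym(A)$. The converse is routine: $\Fix_{\Sym(A)}(S)$ is transitive on $A\smallsetminus S$, since any two atoms outside $S$ are exchanged by a transposition fixing $S$.

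\emph{Step 1: high transitivity.} I will prove by induction on $k$ that for any two injective $k$-tuples $\bar a,\bar b$ there is $g\in G$ with $g\bar a=\bar b$. The case $k=0$ is trivial. For the inductive step, write $\bar a=(\bar a',a_k)$ and $\bar b=(\bar b',b_k)$. The induction hypothesis gives $h\in G$ with $h\bar a'=\bar b'$. Then $h(a_k)\notin \bar b'$ because $h$ is injective and $a_k\notin\bar a'$, and also $b_k\notin\bar b'$. Applying the some/any property with $S=\operatorname{set}(\bar b')$ gives $f\in\Fix_G(\bar b')$ with $f(h(a_k))=b_k$. The element $g=fh$ then sends $\bar a$ to $\bar b$.

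\emph{Step 2: closure.} Let $\sigma\in\Sym(A)$ and let $F\subseteq A$ be finite. Enumerate $F$ injectively as $\bar a$; then $\sigma\bar a$ is also injective. Step 1 gives $g\in G$ agreeing with $\sigma$ on $F$. Hence $\sigma\in\overline G=G$, so $G=\Sym(A)$. This uses the description of closure recalled before Theorem~\ref{thm:closure}.

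The only point that needs care is the bookkeeping in Step 1: the new target $b_k$ and the moved point $h(a_k)$ must both lie outside the context $\operatorname{set}(\bar b')$, so that they fall in a single freshness sort. Beyond that, Step 2 is where the closedness hypothesis is essential. Without it, $\Symfin(A)$ would also have the some/any property, which is consistent with Corollary~\ref{cor:closure-hierarchy}, because freshness sorts depend only on the closure of the group.
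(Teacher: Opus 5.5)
Your proof is correct and matches the paper's argument: an induction on $k$ derives high transitivity from the transitivity of $\Fix_G(S)$ on $A\smallsetminus S$, and closedness then upgrades density to $G=\Sym(A)$. The only difference is that you spell out the density step through the finite-approximation description of $\overline G$, whereas the paper simply cites the standard fact that highly transitive groups are dense.
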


\begin{proof}
If $G=\Sym(A)$ then $\Fix_G(S)$ is the full symmetric group of
$A\smallsetminus S$, which is transitive on that set.

Conversely, suppose $\Fix_G(S)$ is transitive on $A\smallsetminus S$ for
every finite $S$. We show by induction on $k$ that $G$ is
$k$-transitive. For $k=1$ this is the case $S=\emptyset$. Assume $G$
is $k$-transitive and let $(a_1,\dots,a_{k+1})$ and
$(b_1,\dots,b_{k+1})$ be injective tuples. By the inductive hypothesis
some $\pi\in G$ carries $(a_1,\dots,a_k)$ to $(b_1,\dots,b_k)$; put
$S=\{b_1,\dots,b_k\}$. Then $\pi(a_{k+1})$ and $b_{k+1}$ both lie in
$A\smallsetminus S$, so some $\rho\in\Fix_G(S)$ carries the first to the
second, and~$\rho\pi$ carries the longer tuple as required. Hence $G$
is $k$-transitive for every~$k$, that is, $G$ is highly transitive and
therefore dense in $\Sym(A)$ for the topology of pointwise convergence.
A closed dense subgroup is the whole group.
\end{proof}

Closedness is essential in Theorem~\ref{thm:someanychar}. The some/any
property forces $G$ to be highly transitive and hence dense in $\Sym(A)$,
but density alone does not force equality. For example, the finitary
symmetric group is a proper dense subgroup with the same finite-context orbit
structure. Equivalently, for arbitrary $G$ the property determines the
pointwise closure $\overline G=\Sym(A)$.

\begin{rem}[Interpreting the theorem]\label{rem:someanyreading}
Theorem~\ref{thm:someanychar} should be read carefully, because it is
not a deficiency of the other symmetries. It says that the \emph{unrelativized}
freshness quantifier is unavailable as soon as programs may observe any
structure at all on the data, and that what replaces it is the palette
of sorts of Definition~\ref{def:sorts}. The palette is finite under
local oligomorphicity, but Proposition~\ref{prop:someany} itself does
not require finiteness. This is precisely what
happens in practice: over ordered atoms one does not choose a fresh
timestamp, one chooses a fresh timestamp \emph{in a given interval}; over
graph atoms one chooses a fresh vertex \emph{with a prescribed adjacency
pattern to the registers}. The sort count $s_G(k)$ measures how many
distinct kinds of fresh value an implementation must be able to
generate, and Proposition~\ref{prop:someany} certifies that within a
sort the classical reasoning step is sound unchanged.
\end{rem}

For locally oligomorphic symmetries the palette is finite and easy to
compute, and the computation separates the standard data symmetries at a
glance.

\begin{prop}[Sort counts]\label{prop:sortcounts}
Let $|S|=k$. Then
\begin{enumerate}[label=\textup{(\roman*)}]
\item $s_{\Sym(A)}(k)=1$ for every $k$ and every infinite $A$;
\item for the dense order, $s_{\Aut(\mathbb Q,<)}(k)=k+1$, the sorts
being the intervals cut by $S$;
\item for the random graph $\mathbb G$,
$s_{\Aut(\mathbb G)}(k)=2^{k}$, the sorts being the adjacency patterns
to $S$;
\item for the generic equivalence relation, a context $S$ meeting $m$
equivalence classes has $m+1$ sorts, and hence
$s_{\Aut(A,E)}(k)=k+1$;
\item for the generic $3$-uniform hypergraph,
$s(k)=2^{\binom{k}{2}}$;
\item for the betweenness reduct of the dense order,
the sort count is $1$ for $k=0,1$ and $k+1$ for $k\geq2$.
\end{enumerate}
The counts in \textup{(ii)--(iv)} are $>1$ for $k\geq1$, while those
in \textup{(v)} and \textup{(vi)} are $>1$ for $k\geq2$. The
small-context exceptions are consistent with
Theorem~\ref{thm:someanychar}, whose hypothesis concerns every finite
context.
\end{prop}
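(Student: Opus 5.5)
The plan is to reduce every item to one observation about ultrahomogeneous structures, and then to count one-point types. Let $M$ be ultrahomogeneous with $G=\Aut(M)$, and let $S$ be finite. Then two atoms $a,a'\in A\smallsetminus S$ lie in the same $\Fix_G(S)$-orbit if and only if the map fixing $S$ pointwise and sending $a\mapsto a'$ is an isomorphism of finite substructures. One direction holds because automorphisms preserve the relations; the other is ultrahomogeneity. So the freshness sorts over $S$ are exactly the isomorphism types of one-point extensions $S\cup\{a\}$ over $S$ that are realized in $M$. By the extension property of Section~\ref{sec:regime}, a type is realized precisely when the corresponding finite structure lies in the age. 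Each count $s_G(k)$ is therefore a count of one-point extensions in the age, maximized over $k$-element substructures. The pure set (item (i)) and the dense order are handled the same way, the first also directly by Theorem~\ref{thm:someanychar}.

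Next I count case by case. For (i), there is one type: $a$ is simply distinct from $S$. For (ii), the type of $a$ over $S$ is its position among the $k$ points of $S$, and density together with the absence of endpoints realizes all $k+1$ intervals. For (iii), the type is the set of elements of $S$ adjacent to $a$, and the extension property of the random graph realizes all $2^k$ subsets. For (iv), a fresh $a$ either lies in one of the $m$ classes met by $S$ or in a new class. All $m+1$ options are realized, since classes are infinite and there are infinitely many of them. The maximum is at $m=k$, where $S$ is an antichain of classes. For (v), the type is determined by which pairs $\{s,t\}\subseteq S$ satisfy $R(s,t,a)$; all $2^{\binom k2}$ patterns lie in the age, since there is no forbidden configuration. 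The type is independent of the hyperedges inside $S$, so the count is the same for every $S$ with $|S|=k$.

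Item (vi) is the one I expect to be the real obstacle, because here the one-point-extension count has to be checked against the group directly. I will use the standard fact that the automorphism group of the betweenness relation on $(\mathbb Q,<)$ consists of the order-preserving and order-reversing bijections, $\Aut(\mathbb Q,<)\rtimes\langle\text{reversal}\rangle$. For $k=0$ the group is transitive. For $k=1$ and $S=\{s\}$, the reflection $x\mapsto 2s-x$ fixes $s$ and swaps the two half-lines, while order automorphisms fixing $s$ are transitive on each half-line; this gives one sort. For $k\ge2$, an order-reversing bijection has at most one fixed point, so $\Fix_G(S)=\Fix_{\Aut(\mathbb Q,<)}(S)$ and item (ii) gives $k+1$ sorts.

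The concluding remarks then follow by inspection. The counts in (ii)--(iv) exceed $1$ once $k\ge1$, and those in (v) and (vi) exceed $1$ once $k\ge2$. In (v) the case $k=1$ gives $2^0=1$, which is the claimed small-context exception.
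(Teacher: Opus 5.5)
Your proposal is correct and follows essentially the paper's argument. Ultrahomogeneity identifies the freshness sorts over $S$ with the realized quantifier-free one-point types over $S$, and these are then counted case by case; the only cosmetic difference is in (vi), where you reduce to (ii) by noting that two-point stabilizers in $\Aut(\mathbb Q,B)$ are order-preserving, whereas the paper combines betweenness separating the $k+1$ intervals with transitivity of the stabilizer on each interval.
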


\begin{proof}
In each case the structure is ultrahomogeneous, so two atoms outside $S$
lie in the same $\Fix_G(S)$-orbit exactly when they realize the same
quantifier-free type over $S$; the counts are the numbers of such types
realized in the respective homogeneous structures. For (i) there is one
type, that of an atom distinct from each member of $S$. For (ii) the
type of $a\notin S$ over $S$ indicates its position among the $k$ named
points, giving $k+1$ possibilities, all realized by density. For (iii)
the type indicates the adjacency pattern to $S$, and all $2^k$ patterns are realized by the extension property of $\mathbb G$. For (iv) the type records which class of a member of $S$ contains $a$, if any, giving
$m+1$ possibilities when $S$ meets $m$ classes; the supremum is $k+1$
because $S$ may meet $k$ distinct classes. For (v) the type indicates
which of the $\binom k2$ pairs from $S$ form a hyperedge with $a$, and
all patterns are realized by the extension property. For (vi), when
$k\geq2$, write the members of $S$ in increasing order. Betweenness
with pairs from $S$ distinguishes the $k+1$ intervals they cut, while
the pointwise stabilizer is transitive on each interval. For $k=0,1$
the complement of $S$ is a single orbit.
\end{proof}

\subsection{Orbit-finiteness is bounded uniform finiteness}
\label{subsec:orbitfinite}

The finiteness notion of the theory of automata over atoms is
orbit-finiteness: an object is admissible as a state space when it is a
finite union of $G$-orbits \cite{BKL14}. The context filtration
$S\mapsto X^S$ supplies a second, weaker notion, obtained by asking each
fixed-point level to be finite rather than the whole object to be
finitely generated.

\begin{defi}\label{def:unifinite}
Following \cite{AC22}, call $X\in\FSS_G(A)$ \emph{uniform finite} if
$X^{S}$ is finite for every finite $S\subseteq A$. Equivalently (in
the ambient classical metatheory), $X$ has no infinite subset all of
whose elements share one finite support.
\end{defi}

Uniform finiteness says that only finitely many elements are visible
from any one register assignment; orbit-finiteness says in addition that
the register assignments needed are of bounded size. That is exactly
the difference.

\begin{thm}[Orbit-finiteness $=$ uniform finiteness $+$ bounded support]
\label{thm:orbitfinite}
Assume throughout that $G\leq\Sym(A)$ has the least-support property,
and let $X\in\FSS_G(A)$.
\begin{enumerate}[label=\textup{(\arabic*)}]
\item If $X$ is orbit-finite then $X$ is uniform finite and
$\sup\{|\supp(x)|:x\in X\}<\infty$.
\item If, in addition, $G$ is locally oligomorphic, then uniform
finiteness of $X$ together with a uniform bound on its support sizes
implies that $X$ is orbit-finite.
\end{enumerate}
Neither condition on $X$ in \textup{(2)} can be omitted. An infinite
set with the trivial $G$-action has support bound zero but is not
uniform finite and has infinitely many orbits. In the other direction, for $G=\Sym(A)$ the disjoint union $\coprod_{n\in\mathbb N}A^{(n)}$ of the sets of injective $n$-tuples is uniform finite (since any fixed context $S$ can only support elements in summands $A^{(n)}$, where $n\leq |S|$), has unbounded supports, and has infinitely many orbits.
\end{thm}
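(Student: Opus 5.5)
For (1) I would work orbit by orbit. Since $X^S$ of a finite union of orbits is the union of the $S$-levels of those orbits, and support sizes are bounded by the maximum over the orbits, it suffices to treat a single orbit $G\cdot x_0$. Put $L=\supp(x_0)$. By Lemma~\ref{lem:leastequi}, every element $\pi x_0$ has least support $\pi(L)$, which has size $|L|$; this gives the support bound.

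For uniform finiteness, fix a finite $S$. An element $\pi x_0$ lies in $X^S$ exactly when $\pi(L)\subseteq S$, because $S$ supports it iff it contains its least support. By Lemma~\ref{lem:dependence}, $\pi x_0$ depends only on the restriction $\pi\restriction L$. Hence $\pi x_0\mapsto\pi\restriction L$ gives a well-defined surjection from the finitely many injections $L\to S$ onto $(G\cdot x_0)^S$. So $X^S$ is finite.

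For (2), suppose $X$ is uniform finite and $|\supp(x)|\le k$ for all $x$. The idea is to collect all orbits inside a fixed finite union of context levels. Let $x\in X$ with $|\supp(x)|=j\le k$, and enumerate its support as an injective tuple $\bar a\in A^j$. Local oligomorphicity at $S=\emptyset$ says $G$ has finitely many orbits on $A^j$ for each $j\le k$. Choose representatives $\bar c_1,\dots,\bar c_N$ of all these orbits of injective tuples of length $\le k$, and let $F$ be the finite union of their entry sets.

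Some $\pi\in G$ sends $\bar a$ to some $\bar c_i$. By Lemma~\ref{lem:basic}(v), $\pi x$ is then supported by the entries of $\bar c_i$, which lie in $F$. So every orbit of $X$ meets $X^F$. This set is finite by uniform finiteness, so $X$ has at most $|X^F|$ orbits.

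I expect the main point needing care to be this orbit-representative step. It uses only oligomorphicity of $G$ itself, i.e.\ the case $S=\emptyset$; the least supports serve only to make the size bound meaningful, and any support of size $\le k$ would work equally well. The counterexamples stated after the theorem are immediate checks. A trivial action gives $X^\emptyset=X$. For $\coprod_n A^{(n)}$, an injective tuple is supported by $S$ iff all its entries lie in $S$, since $\Sym(A)$ can move any entry outside $S$. So $X^S$ is finite, while the supports (the entry sets) and the orbits (one per $n$) are unbounded.
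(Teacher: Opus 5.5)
Your proposal is correct and follows the paper's proof in substance. In (2) you use the same reduction, oligomorphicity at $\emptyset$ giving finitely many support shapes, with a single context $F$ in place of the paper's separate $T_j$. In (1) your count by restrictions $\pi\restriction L$ via Lemma~\ref{lem:dependence} is a streamlined form of the paper's bound $[G_{\{S_i\}}:\Fix_G(S_i)]\le|S_i|!$ summed over $T\subseteq S$.

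One small slip: the map in (1) should read $\pi\restriction L\mapsto\pi x_0$, defined on those injections $L\to S$ that are restrictions of elements of $G$. The assignment as you wrote it, $\pi x_0\mapsto\pi\restriction L$, is not well defined when $\mathrm{Stab}_G(x_0)$ permutes $L$.
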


The two directions use different ideas. A finite union of orbits has a
uniform support bound because least supports move equivariantly along each
orbit; a fixed context then sees only finitely many points of each orbit.
Conversely, a support bound reduces all possible least supports to finitely
many $G$-orbits of finite subsets, and finite context levels supply only
finitely many states above representatives of those support orbits.

\begin{proof}
The case $X=\emptyset$ is immediate, so assume $X\neq\emptyset$.
For (1), write $X=G\cdot x_1\cup\dots\cup G\cdot x_m$ and
$S_i=\supp(x_i)$. By Lemma~\ref{lem:leastequi} every element of
$G\cdot x_i$ has support $\pi(S_i)$ for some $\pi$, hence of size
$|S_i|$; so support sizes are bounded by $\max_i|S_i|$.

For uniform finiteness fix a finite $S$. Every $x\in X^S$ has
$\supp(x)\subseteq S$, so we count states in two finite stages: first by
the ambient orbit $G\cdot x_i$, and then by their least support
$T\subseteq S$.

Fix $i$ and $T\subseteq S$. A state in $G\cdot x_i$ with least support
$T$ has the form $\pi x_i$ with $\pi(S_i)=T$. If no such $\pi$ exists,
there is nothing to count. Otherwise choose one $\pi_0$ with
$\pi_0(S_i)=T$. Then
\[
   \{\pi\in G:\pi(S_i)=T\}=\pi_0G_{\{S_i\}},
\]
where $G_{\{S_i\}}$ is the setwise stabilizer of $S_i$. Two elements
$\pi x_i$ and $\rho x_i$ in this family are equal exactly when
$\rho^{-1}\pi\in U_i:=\mathrm{Stab}_G(x_i)$. Because $S_i$ supports
$x_i$, we have $\Fix_G(S_i)\leq U_i$. Conversely, equivariance of least
supports (Lemma~\ref{lem:leastequi}) implies that every element of $U_i$
preserves $S_i$ setwise, so
\[
   \Fix_G(S_i)\leq U_i\leq G_{\{S_i\}}.
\]
Hence the number of distinct states with least support $T$ in the orbit
$G\cdot x_i$ is
\[
   [G_{\{S_i\}}:U_i]
   \leq [G_{\{S_i\}}:\Fix_G(S_i)].
\]
Restriction to $S_i$ embeds
$G_{\{S_i\}}/\Fix_G(S_i)$ into the finite group $\Sym(S_i)$, so this
index is at most $|S_i|!$. There are at most $2^{|S|}$ possible subsets
$T\subseteq S$. Summing first over these supports and then over the
finitely many ambient orbits gives
\[
   |X^S|\leq \sum_{i=1}^{m}2^{|S|}|S_i|!<\infty.
\]

(2) Let $k$ bound the support sizes. Local oligomorphicity applied with
$S=\emptyset$ gives finitely many $G$-orbits on $A^{j}$ for each
$0\leq j\leq k$, hence finitely many $G$-orbits of subsets of $A$ of
size at most $k$ (send an enumerating tuple to its range); choose
representatives $T_1,\dots,T_r$. Every $x\in X$ has
$\supp(x)=\pi(T_j)$ for some $\pi\in G$ and some $j$, and then
$\pi^{-1}x\in X^{T_j}$ by Lemma~\ref{lem:leastequi}. Therefore
$X=\bigcup_{j\leq r}G\cdot X^{T_j}$, a union of at most
$\sum_j|X^{T_j}|$ orbits, which is finite by uniform finiteness.

For the first counterexample, every element has least support
$\emptyset$, while $X^{\emptyset}=X$ is infinite and every orbit is a
singleton. For the second take $G=\Sym(A)$. An element of
$\coprod_n A^{(n)}$ supported by $S$
is an injective tuple with all entries in $S$, so
$\bigl(\coprod_n A^{(n)}\bigr)^{S}=\coprod_{n\leq|S|}S^{(n)}$ is finite;
supports have size $n$ on the $n$-th summand; and the summands are
$G$-invariant and nonempty, so there are infinitely many orbits.
\end{proof}

\begin{rem}[What the theorem is for]\label{rem:orbitfinitereading}
Theorem~\ref{thm:orbitfinite} places the central finiteness notion of
computing with atoms beside the invariance hierarchy, and separates its
two ingredients. Uniform finiteness is the structural half: it says
that every level $X^S$ of the context filtration is finite. The bound on
support size is a resource constraint, and it is what a register machine
with a fixed number of registers imposes. Dropping it yields a strictly
larger class of state spaces, which can include countable disjoint
unions such as the example above while retaining the finite-level
property needed for symbolic representation one context at a time.
This suggests the most direct computational question raised by the
hierarchy: how far the automata theory of \cite{BKL14} extends to
uniform finite state spaces of unbounded support.
\end{rem}

\subsection{Three context principles: a roadmap}
\label{subsec:threeprinciples}

The meet and freshness results above, together with the join law proved
in Section~\ref{sec:join}, place three context-sensitive principles of
nominal technique in the symmetry parameter. Table~\ref{tab:principles}
presents the correspondence as a roadmap to the remaining sections.

\begin{table}[htbp]
\centering
\caption{Three context-sensitive principles of nominal technique and the
conditions on the data symmetry that carry them.}
\label{tab:principles}
\footnotesize
\begin{tabular}{@{}>{\raggedright\arraybackslash}p{4.3cm}>{\raggedright\arraybackslash}p{4.9cm}>{\raggedright\arraybackslash}p{3.6cm}@{}}
\toprule
Nominal principle & Condition on $G$ & Fails already for\\
\midrule
Canonical registers: least supports & meet law; equivalently
$\Fix_G(S\cap T)=\langle\Fix_G(S)\cup\Fix_G(T)\rangle$
(Thm.~\ref{thm:meetchar}, Cor.~\ref{cor:criterion}) & sessions
$\Aut(A,E)$ (Thm.~\ref{thm:equivfail})\\[2pt]
Freshness quantifier: \textsf{some/any} & $\Fix_G(S)$ transitive on
$A\smallsetminus S$; for closed $G$, only $\Sym(A)$
(Thm.~\ref{thm:someanychar}) & every proper closed symmetry, including
$\Aut(\mathbb Q,<)$\\[2pt]
Invariance decomposes over unions & join law
(Thm.~\ref{thm:join}) & generic $3$-hypergraph
(Thm.~\ref{thm:hypergraph})\\
\bottomrule
\end{tabular}
\end{table}

The table separates three different mechanisms. Meet and join vary
independently, whereas unrestricted some/any is stronger: among closed groups
it forces $G=\Sym(A)$ and hence both lattice laws. The examples in the
next section make the meet/join independence explicit.

\section{The join law and its failure}\label{sec:join}

The meet law asks whether invariance can be pushed \emph{down} to the
intersection of two register sets; the join law asks whether it can be
decomposed \emph{across} their union. The two are formally dual, and
one might expect them to stand or fall together. They do not. This
section shows that relational arity supplies a broad sufficient
condition and natural obstructions for the join law, and exhibits
symmetries realizing each of the four possible combinations.
The mechanism is type-theoretic: the join law holds when the type of a tuple over $S\cup T$ is already determined by its types over $S$ and over $T$ separately, and this determination is automatic when no relation can see a coordinate together with parameters from both sides at once. Under local oligomorphicity and ultrahomogeneity, binary relational
languages therefore satisfy it, while genuinely ternary ones need not.

\begin{thm}[Join law for binary ultrahomogeneous symmetries]
\label{thm:join}
Let $M$ be an ultrahomogeneous relational structure with domain $A$ in a
purely relational language (in particular, with no function symbols) whose
relation symbols have arity at most~$2$, and suppose $G=\Aut(M)$ is locally
oligomorphic. Then for all finite
$S,T\subseteq A$ and~$n\geq1$,
\[
B^{(n)}_{G,S}\vee B^{(n)}_{G,T}\;=\;B^{(n)}_{G,S\cup T}.
\]
\end{thm}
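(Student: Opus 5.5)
The plan is to compare atoms on the two sides, using Lemma~\ref{lem:atomic}. By \eqref{eq:trivial} only the inclusion $B^{(n)}_{G,S\cup T}\subseteq B^{(n)}_{G,S}\vee B^{(n)}_{G,T}$ needs proof. Here $\vee$ is the Boolean subalgebra of $\mathcal P(A^n)$ generated by the two levels.

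Local oligomorphicity gives, via Lemma~\ref{lem:atomic}, that $B^{(n)}_{G,S\cup T}$ is finite atomic and its atoms are the $\Fix_G(S\cup T)$-orbits on $A^n$. Every member of it is therefore a \emph{finite} union of such orbits. Since the join is closed under finite intersections and finite unions, it suffices to prove the following orbit identity for every $\bar a\in A^n$:
\[
\Fix_G(S\cup T)\cdot\bar a\;=\;\bigl(\Fix_G(S)\cdot\bar a\bigr)\cap\bigl(\Fix_G(T)\cdot\bar a\bigr).
\]
The right-hand side is the intersection of an atom of $B^{(n)}_{G,S}$ with an atom of $B^{(n)}_{G,T}$, so it lies in the join. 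This is the only place where local oligomorphicity is used: without it, an $S\cup T$-supported relation could be an infinite union of such intersections, which need not lie in the generated subalgebra.

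The inclusion $\subseteq$ in the orbit identity is immediate. For $\supseteq$, I would take $\bar b=\sigma\bar a=\tau\bar a$ with $\sigma\in\Fix_G(S)$ and $\tau\in\Fix_G(T)$. Enumerate $S\cup T$ as $\bar c$ and consider the assignment $f$ defined by $a_i\mapsto b_i$ and $c\mapsto c$ for each entry $c$ of $\bar c$. The claim is that $f$ is a well-defined isomorphism between the finite substructures on $\bar a\bar c$ and $\bar b\bar c$. Ultrahomogeneity then extends $f$ to an automorphism, which lies in $\Fix_G(S\cup T)$ and sends $\bar a$ to $\bar b$.

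Verifying the claim is the heart of the proof and relies on arity at most two, including equality as a binary atom. Every atomic formula on the combined tuple mentions at most two elements, and I would sort these formulas into three kinds.
\begin{enumerate}[label=\textup{(\roman*)}]
\item Formulas mentioning only parameters from $S\cup T$ are trivially preserved, since $f$ is the identity there.
\item Formulas mentioning only coordinates of $\bar a$ are preserved because $\sigma$ is an automorphism carrying $\bar a$ to $\bar b$.
\item Formulas mentioning one coordinate $a_i$ and one parameter $c$ are preserved by $\sigma$ when $c\in S$, because $\sigma$ fixes $c$. When $c\in T$ they are preserved by $\tau$ in the same way.
\end{enumerate}
The case $x=y$ within these atoms handles well-definedness and injectivity. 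It covers repeated entries of $\bar a$ and coincidences $a_i=c$, since $b_i=c$ holds iff $a_i=c$, by $\sigma$ or $\tau$ according to the side of $c$. The same argument applies to unary relations.

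The main obstacle is precisely this bookkeeping. One must ensure that no atomic formula links a coordinate simultaneously with a parameter in $S\setminus T$ and a parameter in $T\setminus S$. This is exactly what fails for a ternary relation such as the $3$-uniform hypergraph, and it is where the arity hypothesis must be invoked explicitly. One must also treat equality and repetitions carefully, so that $f$ is a genuine partial isomorphism before ultrahomogeneity is applied.
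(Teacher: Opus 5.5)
Your proposal is correct and follows essentially the same route as the paper. Both use Lemma~\ref{lem:atomic} to reduce the statement to the orbit identity $\Fix_G(S\cup T)\cdot\bar a=(\Fix_G(S)\cdot\bar a)\cap(\Fix_G(T)\cdot\bar a)$, and both establish that identity by splitting atomic formulas of arity at most two into parameter-only, coordinate-only, and coordinate--parameter cases, then extending by ultrahomogeneity. Your explicit partial isomorphism $f$, with equality handling well-definedness and injectivity, is just a more detailed rendering of the paper's quantifier-free-type argument.
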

\begin{proof}
By local oligomorphicity, $B^{(n)}_{G,S\cup T}$ is a finite atomic
Boolean algebra whose atoms are the $\Fix_G(S\cup T)$-orbits on $A^n$
(Lemma~\ref{lem:atomic}); it therefore suffices to show that each such
orbit lies in the subalgebra generated by $B^{(n)}_{G,S}\cup
B^{(n)}_{G,T}$. By ultrahomogeneity of $M$, two tuples
$\bar x,\bar y\in A^n$ lie in the same $\Fix_G(U)$-orbit if and only
if they have the same quantifier-free type over $U$, for any finite
$U$: the map fixing $U$ and sending $\bar x$ to $\bar y$ is a partial
isomorphism precisely under type equality, and extends to an
automorphism fixing~$U$ pointwise by ultrahomogeneity. Because the
language is purely relational and binary, the quantifier-free type of
$\bar x$ over $S\cup T$ is
determined by (i) the quantifier-free type of $\bar x$ over
$\emptyset$ (the relations and equalities among the coordinates),
(ii) the binary relations and equalities between each coordinate
$x_i$ and each parameter $c\in S$, and (iii) the same for each $c\in
T$: no atomic formula involves parameters from $S$ and from~$T$
simultaneously together with coordinates, since its arity is at most
$2$. Atomic formulas containing only parameters have a fixed truth
value and therefore do not distinguish the tuple $\bar x$. Items
(i)+(ii) determine the type over $S$ and items (i)+(iii) determine the
type over $T$; hence
\[
\text{orbit of }\bar x\text{ over }S\cup T
\;=\;
\bigl(\text{orbit of }\bar x\text{ over }S\bigr)\cap
\bigl(\text{orbit of }\bar x\text{ over }T\bigr),
\]
an intersection of an element of $B^{(n)}_{G,S}$ with one of
$B^{(n)}_{G,T}$ (orbits over $U$ are $U$-supported). Here local
oligomorphicity is used exactly to ensure that there are only finitely
many $\Fix_G(S\cup T)$-orbits on $A^n$: every
$S\cup T$-supported relation is therefore a \emph{finite} union of
these intersections, as required for membership in the Boolean algebra
generated by the two context levels.
\end{proof}

The binary-language hypothesis is sufficient, not necessary. The operative
condition in the proof is that every tuple type over $S\cup T$ be determined
by its restrictions over $S$ and over $T$. Higher-arity languages may still
satisfy this condition in particular structures, while
Theorems~\ref{thm:hypergraph} and~\ref{thm:betwfail} show that they need
not.

\begin{exa}
Theorem~\ref{thm:join} applies to $\Sym(A)$ (pure equality; any
infinite $A$), to \mbox{$\Aut(\mathbb Q,<)$}, to the random graph, and to
$\Aut(A,E)$ for $(A,E)$ as in Theorem~\ref{thm:equivfail}; that is,
with infinitely many classes, all of the same infinite cardinality,
under which $(A,E)$ is ultrahomogeneous. In particular, the generic
equivalence relation \emph{satisfies} the join law while failing the
meet~law.
\end{exa}

\begin{thm}[Least supports without the join law]
\label{thm:hypergraph}
Let $\mathcal H_3=(A,R)$ be the countable generic irreflexive symmetric
$3$-uniform hypergraph, and let $G=\Aut(\mathcal H_3)$. Then $G$
satisfies the meet law at every arity, but fails the join law already
at arity $1$.
\end{thm}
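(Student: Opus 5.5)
The proof has two independent halves: the meet law, which I would reduce to a one-point orbit condition, and the failure of the join law, which I would witness directly in the hierarchy at arity~$1$.

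\emph{Meet law.} I would use Corollary~\ref{cor:arityone} and verify its condition~(3). Fix a finite $E$ and distinct $c,d\notin E$, and put $H=H_{E,c,d}$. The goal is to show $\Fix_G(E)\cdot c\subseteq H\cdot c$. Since $\mathcal H_3$ is ultrahomogeneous, the orbit $\Fix_G(E)\cdot c$ consists of the atoms $c'\notin E$ with the same quantifier-free type over $E$ as $c$. For $c'=c$ there is nothing to show, so let $c'\neq c$ be such an atom. If $c'\neq d$, the plan is to pass through an auxiliary atom $e$. I want $e\notin E\cup\{c,c',d\}$ such that $e$ realizes the type of $c$ over $E$, and, over $E\cup\{d\}$, both $(E,d,c)$ and $(E,d,c')$ have the same type as $(E,d,e)$. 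A single $e$ cannot match both $c$ and $c'$ on the hyperedges $\{c,d,x\}$ and $\{c',d,x\}$ for $x\in E$, so I would instead use a two-step path, alternating between moving with $\Fix_G(E\cup\{d\})$ and moving with $\Fix_G(E\cup\{c\})$.

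Concretely, first take $e$ with $\mathrm{tp}(e/E\cup\{c\})=\mathrm{tp}(c'/E\cup\{c\})$ with $c$ renamed appropriately; the requirements are that $e$ agree with $c'$ over $E$, and that for each $x\in E\cup\{d\}$ the pair $\{c,x\}$ relates to $e$ as it relates to $c'$. If needed, the relations of $e$ with pairs containing $d$ are then adjusted. The extension property of the generic $3$-hypergraph makes all such one-point types over a finite set realizable, because every pattern of hyperedges to the pairs of a finite set occurs. An element of $\Fix_G(E\cup\{d\})$ sends $c$ to $e$ once $\mathrm{tp}(e/E\cup\{d\})=\mathrm{tp}(c/E\cup\{d\})$, and an element of $\Fix_G(E\cup\{c\})$ cannot move $c$. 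So the useful path is $c\mapsto e$ via $\Fix_G(E\cup\{d\})$, then $e\mapsto c'$ via $\Fix_G(E\cup\{e\})$. That second group is not among the generators, so I would correct this by conjugating, or by choosing $e$ so that the second step is done by $\Fix_G(E\cup\{d\})$ after $d$ has been moved by $\Fix_G(E\cup\{c\})$. The case $c'=d$ is handled in the same way. Checking that consistent types exist, which is automatic because the hypergraph has free amalgamation and no forbidden configurations, is the main obstacle of this half.

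A cleaner alternative I would try first is to invoke the known weak elimination of imaginaries, or trivial acl, for free amalgamation classes, together with Proposition~\ref{prop:weakEI}. That proposition is stated later, though, so the direct route above is safer.

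\emph{Join failure.} Take $S=\{a\}$ and $T=\{b\}$ with $a\neq b$, and let $R_{ab}=\{x:\{a,b,x\}\in R\}\in B^{(1)}_{G,\{a,b\}}$. First, by the extension property the one-point types over $\{a\}$ are just ``$x=a$'' and ``$x\neq a$'', since no hyperedge involves only $x$ and $a$. Hence $B^{(1)}_{G,\{a\}}=\{\emptyset,\{a\},A\setminus\{a\},A\}$, and similarly for $b$. Second, the Boolean algebra generated by these two levels is finite, with atoms $\{a\}$, $\{b\}$ and $A\setminus\{a,b\}$. Third, $R_{ab}$ is a nonempty proper subset of $A\setminus\{a,b\}$ by the extension property, so it is not in the join. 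This completes the failure of the join law at arity~$1$.
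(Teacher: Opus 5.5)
\textbf{Join failure.} Your argument is the paper's: $\{x:R(a,b,x)\}$ is $\{a,b\}$-supported, each point stabilizer has only the unary orbits $\{a\}$ and $A\setminus\{a\}$, and this set splits the atom $A\setminus\{a,b\}$ of the generated algebra.

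\textbf{Meet law: the gap.} Reducing to condition (3) of Corollary~\ref{cor:arityone} is a legitimate route. It differs from the paper's, which obtains the SAP directly from \cite[Lem.~2.7(i)]{MacphersonTent11} for free homogeneous structures and then applies Theorem~\ref{thm:meetchar}. However, you never actually verify (3). You stall on a false claim: that after moving $c$ to $e$ by $\Fix_G(E\cup\{d\})$, the second step must use $\Fix_G(E\cup\{e\})$, which is not a generator. You then only list possible repairs (conjugating, moving $d$) without carrying any out. The missing observation is that $H_{E,c,d}\cdot c$ is the class of $c$ under the equivalence relation generated by the $\Fix_G(E\cup\{d\})$-orbits and the $\Fix_G(E\cup\{c\})$-orbits on $A$. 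An element $g_2\in\Fix_G(E\cup\{c\})$ does fix $c$, but it is applied to $e=g_1c$, not to $c$. So $g_2g_1\in H_{E,c,d}$ sends $c$ to $g_2e$, and the second step can use the generator $\Fix_G(E\cup\{c\})$.

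\textbf{The repair.} With this, your first candidate for $e$ was already correct. Let $c'\neq c$ have the same type as $c$ over $E$. Choose $e\notin E\cup\{c,d\}$ with the following properties:
\begin{itemize}
\item $R(x,y,e)\leftrightarrow R(x,y,c)$ for $x,y\in E$;
\item $R(x,d,e)\leftrightarrow R(x,d,c)$ for $x\in E$;
\item $R(x,c,e)\leftrightarrow R(x,c,c')$ for $x\in E$;
\item the triple $\{c,d,e\}$ is left unconstrained.
\end{itemize}
These prescriptions concern pairwise distinct pairs from $E\cup\{c,d\}$. Hence the extension property of $\mathcal H_3$, whose age is all finite $3$-uniform hypergraphs, realizes them, and there is no consistency question left to check.

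By ultrahomogeneity there is some $g_1\in\Fix_G(E\cup\{d\})$ with $g_1c=e$. There is also some $g_2\in\Fix_G(E\cup\{c\})$ with $g_2e=c'$, because $e$ and $c'$ have the same type over $E\cup\{c\}$. Hence $c'=g_2g_1c\in H_{E,c,d}\cdot c$. This also covers the case $c'=d$, where the third condition reads $R(x,c,e)\leftrightarrow R(x,c,d)$; it is compatible with the second because the two conditions concern different pairs.

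\textbf{Your alternative route.} Using Proposition~\ref{prop:weakEI} is not circular, since that proposition does not depend on this theorem. It does, however, require two further cited inputs: degenerate algebraic closure for $\mathcal H_3$, and weak elimination of imaginaries for free-amalgamation theories \cite{Conant17}. The paper mentions exactly this as its alternative argument.
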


\begin{proof}
The age of $\mathcal H_3$ has free amalgamation, and $\mathcal H_3$ is
its Fra\"iss\'e limit. For free homogeneous structures,
\cite[Lem.~2.7(i)]{MacphersonTent11} gives the stabilizer identity
$\Fix_G(S\cap T)=\langle\Fix_G(S)\cup\Fix_G(T)\rangle$ for finite
$S,T$. Thus $G$ has the SAP, and Theorem~\ref{thm:meetchar} gives both
least supports and the meet law at every arity. Equivalently, the
model-theoretic route uses weak elimination of imaginaries for
free-amalgamation theories \cite[Thm.~4.7]{Conant17} together with
Proposition~\ref{prop:weakEI}.

Choose distinct $a,b\in A$ and put $U_{a,b}=\{x\in A:R(a,b,x)\}$.
The set $U_{a,b}$ is supported by $\{a,b\}$, so
$U_{a,b}\in B^{(1)}_{G,\{a,b\}}$. The point stabilizer $\Fix_G(a)$
has precisely the two unary orbits $\{a\}$ and $A\setminus\{a\}$:
any isomorphism between two induced two-element substructures fixing~$a$ extends by homogeneity. Thus, $B^{(1)}_{G,\{a\}}=\{\emptyset,\{a\},A\setminus\{a\},A\}$, and similarly for $b$. The Boolean algebra generated by these two
algebras has atoms $\{a\}$, $\{b\}$, and $A\setminus\{a,b\}$. By the
extension property of the generic hypergraph, both $U_{a,b}$ and its
complement meet $A\setminus\{a,b\}$. Hence $U_{a,b}$ is not in the
generated algebra, and
\[
 B^{(1)}_{G,\{a\}}\vee B^{(1)}_{G,\{b\}}
 \subsetneq B^{(1)}_{G,\{a,b\}}. \qedhere
\]
\end{proof}

\begin{thm}[Join-law failure for the betweenness reduct]
\label{thm:betwfail}
Let $B$ be the betweenness relation of $(\mathbb Q,<)$ and
$G=\Aut(\mathbb Q,B)$. Then the join law fails at arity $1$: for any
$a<b$,
\[
B^{(1)}_{G,\{a\}}\vee B^{(1)}_{G,\{b\}}
\subsetneq B^{(1)}_{G,\{a,b\}}.
\]
\end{thm}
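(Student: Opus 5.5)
The plan is to compute the three unary levels explicitly and show that the open interval $(a,b)$ lies in $B^{(1)}_{G,\{a,b\}}$ but not in the Boolean algebra generated by $B^{(1)}_{G,\{a\}}\cup B^{(1)}_{G,\{b\}}$.

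First recall the structure of $G=\Aut(\mathbb Q,B)$. It consists of the order-preserving and the order-reversing bijections of $\mathbb Q$. Indeed, any permutation preserving betweenness is monotone, and every reversal $x\mapsto -x$ composed with an order automorphism preserves $B$. As in Proposition~\ref{prop:sortcounts}(vi), a single point $a$ has stabilizer $\Fix_G(a)$ whose orbits on $\mathbb Q$ are $\{a\}$ and $\mathbb Q\setminus\{a\}$. To see this, take $x,y\neq a$. If they lie on the same side of $a$, some order automorphism fixing $a$ sends $x$ to $y$. If they lie on opposite sides, compose with the reflection $z\mapsto 2a-z$, which fixes $a$ and swaps the two sides; an order automorphism of each side then adjusts the position.

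Hence, by Lemma~\ref{lem:basic}(iv) (equivalently Lemma~\ref{lem:atomic}),
\[
B^{(1)}_{G,\{a\}}=\{\emptyset,\{a\},\mathbb Q\setminus\{a\},\mathbb Q\},
\]
and the same holds for $b$. The Boolean algebra they generate is finite, with atoms $\{a\}$, $\{b\}$ and $\mathbb Q\setminus\{a,b\}$. Consequently every member of the join either contains $\mathbb Q\setminus\{a,b\}$ or is disjoint from it.

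Next take $I=\{x:B(a,x,b)\}=(a,b)$. It is supported by $\{a,b\}$: any $\pi\in G$ fixing $a$ and $b$ preserves betweenness relative to them, so $\pi(I)=I$. In fact, since $\pi$ fixes $a<b$, it must be order-preserving. However, $I$ is a nonempty proper subset of $\mathbb Q\setminus\{a,b\}$, because it contains $(a+b)/2$ and misses $b+1$. So $I$ does not belong to the join, and this proves the strict inclusion.

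The only step needing care is the transitivity of $\Fix_G(a)$ across the two sides of $a$. This is exactly where the betweenness reduct differs from $(\mathbb Q,<)$, and I would verify it by the explicit reflection $z\mapsto 2a-z$ composed with a piecewise-linear order automorphism. The remaining steps are routine.
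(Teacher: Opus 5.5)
Your proposal is correct and follows essentially the same route as the paper: you use a reflection to show that $\Fix_G(a)$ has orbits $\{a\}$ and $\mathbb Q\setminus\{a\}$, deduce that the generated algebra has atoms $\{a\},\{b\},\mathbb Q\setminus\{a,b\}$, and exhibit $(a,b)$ as an $\{a,b\}$-supported set that splits the last atom. Your justification that $(a,b)=\{x:B(a,x,b)\}$ is invariant because it is defined from $B$ with parameters $a,b$ is slightly more direct than the paper's, which instead notes that $\Fix_G(\{a,b\})$ consists of order-preserving maps.
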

\begin{proof}
The group $G$ consists of the order-preserving and order-reversing
bijections of $\mathbb Q$; this is classical, and is the standard
description of the reducts of the dense linear order
\cite{Cameron76}. A reflection fixing
$a$ interchanges the two half-lines, while the order-preserving
stabilizer is transitive on each; hence $\Fix_G(a)$ has orbits
$\{a\}$ and $\mathbb Q\setminus\{a\}$. Therefore the Boolean algebra
generated by the unary algebras at $a$ and $b$ has atoms
$\{a\}$, $\{b\}$, and $\mathbb Q\setminus\{a,b\}$. By contrast, an
order-reversing bijection cannot fix both $a$ and $b$, so
$\Fix_G(\{a,b\})$ is order preserving and leaves the interval $(a,b)$
invariant. Since $(a,b)$ splits
$\mathbb Q\setminus\{a,b\}$, it is not in the generated~algebra.
\end{proof}

\begin{prop}[Meet failure for the betweenness symmetry]
\label{prop:betwmeetfail}
The group $G=\Aut(\mathbb Q,B)$ fails the meet law at arity $2$.
Consequently the betweenness symmetry fails both lattice laws.
\end{prop}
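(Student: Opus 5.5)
We need a binary relation supported by two finite sets $S$ and $T$ but not by $S\cap T$. The idea is to exploit the fact that the order relation itself is not betweenness-invariant, since reflections reverse it, while fixing any two distinct points forbids reflection. Take distinct $a,b$ and put $S=\{a,b\}$, $T=\{c,d\}$ with $\{a,b\}\cap\{c,d\}=\emptyset$. Let $R={<}\subseteq\mathbb Q^2$.

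As recalled in the proof of Theorem~\ref{thm:betwfail}, $G$ consists of the order-preserving and order-reversing bijections of $\mathbb Q$. An order-reversing bijection fixes at most one point: if it fixed $x<y$, it would have to send them to $x>y$. Hence $\Fix_G(\{a,b\})$ and $\Fix_G(\{c,d\})$ consist only of order-preserving maps. Every order-preserving map satisfies $\pi\cdot{<}={<}$. Therefore
\[
R\in B^{(2)}_{G,\{a,b\}}\cap B^{(2)}_{G,\{c,d\}}.
\]

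On the other hand, $S\cap T=\emptyset$ and $\Fix_G(\emptyset)=G$ contains the reflection $x\mapsto -x$. This reflection maps $<$ onto $>$, which differs from $<$. So $R\notin B^{(2)}_{G,\emptyset}$, and the meet law fails at arity~$2$.

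For the consequence, combine this with Theorem~\ref{thm:betwfail}: the betweenness symmetry fails the join law there and the meet law here, so it fails both lattice laws.

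The only point needing care is the claim that $\Fix_G$ of a two-point set contains no order-reversing map, and that is immediate. It is also worth checking why arity $1$ would not suffice: unary witnesses such as the interval $(a,b)$ are supported by $\{a,b\}$ alone, which is why the argument works at arity~$2$. If one prefers overlapping contexts, $S=\{a,b\}$ and $T=\{a,c\}$ with $S\cap T=\{a\}$ also work, because the reflection about $a$ lies in $\Fix_G(a)$ and does not preserve~$<$.
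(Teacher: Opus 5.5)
Your proof is correct and essentially the paper's own. The paper uses the same witness ${<}$ and the same fact that an order-reversing bijection fixes at most one point, but with the overlapping contexts $\{a,b\}$, $\{a,c\}$ and the reflection about $a$, which is your closing alternative; your disjoint contexts with $x\mapsto -x$ are an inessential variant. One correction to your aside: arity~$1$ does suffice, since the half-line $\{x:x<a\}$ lies in $B^{(1)}_{G,\{a,b\}}\cap B^{(1)}_{G,\{a,c\}}$ but not in $B^{(1)}_{G,\{a\}}$, as Corollary~\ref{cor:arityone} predicts; this does not affect your argument for the stated arity-$2$ claim.
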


\begin{proof}
Let $<$ denote one of the two linear orders inducing $B$, considered as
a subset of $\mathbb Q^2$. If $a\neq b$, every betweenness
automorphism fixing both points is order preserving, because an
order-reversing bijection of $\mathbb Q$ has at most one fixed point.
Thus every two-element set $\{a,b\}$ supports $<$.

Choose pairwise distinct $a,b,c$. Both $\{a,b\}$ and $\{a,c\}$
support $<$, but their intersection $\{a\}$ does not: an
order-reversing automorphism fixing $a$ maps $<$ to its converse.
Therefore
\[
 {<}\ \in B^{(2)}_{G,\{a,b\}}\cap B^{(2)}_{G,\{a,c\}}
 \setminus B^{(2)}_{G,\{a\}},
\]
which refutes the meet law.
\end{proof}

\begin{rem}[The four global settings]\label{rem:quadrants}
The meet and join laws are independent. All four combinations occur:
\[
\begin{array}{c|c}
(\text{meet},\text{join}) & \text{example}\\ \hline
(\checkmark,\checkmark) & \Sym(A),\ \Aut(\text{random graph})\\
(\times,\checkmark) & \Aut(A,E)\\
(\checkmark,\times) & \Aut(\mathcal H_3)\\
(\times,\times) & \Aut(\mathbb Q,B).
\end{array}
\]
The generic ternary hypergraph shows that failure of the join law is
compatible with canonical least supports. The betweenness reduct shows
that passing to a reduct can destroy both context-composition laws.
\end{rem}

\section{Orientability and context-sensitive asymmetry}\label{sec:orient}

The finite-context hierarchy indicates more than the existence of supports.
It also records relational structure visible after parameters have been
named. This section develops one categorical invariant of that structure.
The motivating contrast is simple. Over ordered atoms the atom object
carries an invariant orientation; over pure equality atoms every candidate
orientation can be reversed by a transposition. Because orientation can be
expressed using products and subobjects, this contrast survives arbitrary
equivalence of the finitely supported action categories.

The invariant will also be relativized to a finite context. This yields a
numerical threshold measuring how many atoms must be named before fresh
atoms can be oriented. The threshold has an exact material interpretation:
it is the least support size of a choice function on unordered pairs.

\subsection{Context levels as subobject lattices of slices}

\begin{prop}[Levels are subobject lattices of slices]\label{prop:slice}
Let $G\leq\Sym(A)$ be closed, let $S\subseteq A$ be finite, and let
$\bar s$ be an injective enumeration of $S$, with orbit
$O_{\bar s}=G\cdot\bar s$. Then
\[
   \FSS_G(A)/O_{\bar s}\simeq \FSS_{\Fix_G(S)}(A).
\]
Under this equivalence the pullback of the atom object $A$ corresponds to
$A$ with the restricted $\Fix_G(S)$-action, and hence
\[
   \Sub(A^n)=\Inv_n(\Fix_G(S))=B^{(n)}_{G,S}\qquad(n\geq1)
\]
in the corresponding slice.
\end{prop}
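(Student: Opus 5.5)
The plan is to use the standard slice-over-a-transitive-object equivalence $\mathrm{Cont}(G)/(G/U)\simeq\mathrm{Cont}(U)$ with $U=\Fix_G(S)$, and to verify it directly with the support calculus.

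\textbf{Setup.} Since $G$ is closed, Corollary~\ref{cor:cont} identifies $\FSS_G(A)$ with $\mathrm{Cont}(G)$. The orbit $O_{\bar s}$ is isomorphic to $G/U$ via $\pi\bar s\mapsto\pi U$, because the stabilizer of $\bar s$ is exactly $U$.

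\textbf{The two functors.}
\begin{itemize}
\item Fibre functor. Given $p:X\to O_{\bar s}$, send it to the fibre $X_{\bar s}=p^{-1}(\bar s)$ with the restricted $U$-action. This is well defined since $U$ fixes $\bar s$ and $p$ is equivariant. An element $x$ supported by a finite $T$ for $G$ is also supported by $T$ for $\Fix_U(T)\le\Fix_G(T)$. Note that $\Fix_U(T)=\Fix_G(S\cup T)$, so supports are measured consistently for $U$ acting on $A$.
\item Induction functor. Given a finitely supported $U$-set $Y$, form $G\times_U Y=(G\times Y)/{\sim}$ with $(gu,y)\sim(g,uy)$, together with the projection $[g,y]\mapsto g\bar s$.
\end{itemize}

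\textbf{Key steps.}
\begin{enumerate}
\item Show that $[g,y]$ is finitely supported for $G$. If $T$ supports $y$ for $U$, we want $g(S\cup T)$ to support $[g,y]$. Take $\rho$ fixing $g(S\cup T)$ pointwise. Then $g^{-1}\rho g$ fixes $S\cup T$, so it lies in $U$ and fixes $y$. Hence $\rho[g,y]=[g\cdot g^{-1}\rho g,\,y]=[g,y]$.
\item Check the unit and counit. The fibre of $G\times_U Y$ over $\bar s$ is $\{[u,y]\}\cong Y$. For the other direction, $G\times_U X_{\bar s}\to X$, $[g,x]\mapsto gx$, is a bijection: every element of $X$ lies over some $g\bar s$, and $g^{-1}$ moves it into the fibre. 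Injectivity uses that the stabilizer of $\bar s$ is $U$. Naturality in morphisms over $O_{\bar s}$ is routine.
\item Compute the pullback of $A$. The pullback of $A$ along $O_{\bar s}\to 1$ is $O_{\bar s}\times A\to O_{\bar s}$, whose fibre over $\bar s$ is $A$ with the $U$-action. Pullbacks commute with products in the slice, so $A^n$ in the slice corresponds to $A^n$ with the $U$-action.
\item Identify the subobjects. The slice equivalence preserves subobject lattices. By Lemma~\ref{lem:subobj} applied to the group $U$, the subobjects of $A^n$ in $\FSS_U(A)$ are the $U$-invariant subsets. These form $\Inv_n(\Fix_G(S))=B^{(n)}_{G,S}$ by definition.
\end{enumerate}

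\textbf{Expected obstacle.} The main care point is the finite-support bookkeeping in step 1. The $U$-supports of $y$ must be enlarged by $S$ before transport by $g$, and the resulting support should not depend on the chosen representative $(g,y)$. This is resolved by the observation $\Fix_U(T)=\Fix_G(S\cup T)$ and by Lemma~\ref{lem:dependence}. The remaining point is whether $\FSS_U(A)$ requires $U$ to be closed in order to match the stated category. It does not, since $U$ is itself closed as the intersection of $G$ with an open, hence closed, subgroup of $\Sym(A)$.
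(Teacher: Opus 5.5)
Your proposal is correct and follows the paper's route. The paper uses the induction--fibre equivalence $\mathrm{Cont}(G)/(G/U)\simeq\mathrm{Cont}(U)$ with $U=\Fix_G(S)$, identifies the pullback $O_{\bar s}\times A\to O_{\bar s}$ with $A$ under the restricted $U$-action, and then applies Lemma~\ref{lem:subobj}, and so do you. The only difference is that you verify the equivalence by hand with the support calculus, where the paper cites it as standard; your bookkeeping $\Fix_U(T)=\Fix_G(S\cup T)$ in step~1 is what that verification needs, and since it never uses closedness of $G$ or $U$, your closing remark about closedness is unnecessary rather than a genuine obstacle.
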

\begin{proof}
By Corollary~\ref{cor:cont}, $\FSS_G(A)=\mathrm{Cont}(G)$. The orbit
$O_{\bar s}$ is the transitive object $G/\Fix_G(S)$. For any open
subgroup $U\leq G$, the standard induction--fiber equivalence sends an
object $p:X\to G/U$ to the fiber $p^{-1}(U)$ with its $U$-action, and
sends a continuous $U$-set~$Y$ to $G\times_UY\to G/U$. Thus
$\mathrm{Cont}(G)/(G/U)\simeq\mathrm{Cont}(U)$. Taking
$U=\Fix_G(S)$ gives the first assertion. Under the equivalence,
$A\times O_{\bar s}\to O_{\bar s}$ corresponds to $A$ with the
restricted $U$-action. Lemma~\ref{lem:subobj} then identifies subobjects
of $A^n$ with the $U$-invariant subsets of $A^n$, namely
$B^{(n)}_{G,S}$.
\end{proof}

Thus every context level has an intrinsic categorical interpretation once the corresponding tuple orbit is distinguished. Contextual invariants below are therefore properties of these slices, not merely of a chosen group presentation.

\subsection{Orientations and the swap criterion}

\begin{defi}\label{def:orient}
Let $\mathcal C$ be a category with finite limits in which the relevant
subobject posets have finite joins, and let $Q$ be an object. Write
$\Delta_Q\leq Q\times Q$ for the diagonal and
$\sigma=\langle\mathrm{pr}_2,\mathrm{pr}_1\rangle$ for factor exchange.
An \emph{orientation} of $Q$ is a subobject $O\leq Q\times Q$ satisfying
\[
 O\wedge\sigma(O)=\bot,
 \qquad
 O\vee\sigma(O)\vee\Delta_Q=\top.
\]
An atom $Q$ is \emph{properly orientable} if it admits an orientation and
$\Delta_Q\neq\top$.
\end{defi}

In $\FSS_G(A)$, a properly orientable atom is exactly a non-singleton
transitive $G$-set carrying a $G$-invariant tournament.

\begin{lem}\label{lem:orientinv}
Equivalences preserve atoms, products, diagonals, factor exchange and
subobject lattices. Consequently the existence of a properly orientable
atom is invariant under equivalence of categories.
\end{lem}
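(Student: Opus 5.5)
The plan is to fix an equivalence $F:\mathcal C\to\mathcal D$ with quasi-inverse $F^{-1}$, and to check that each ingredient in Definition~\ref{def:orient} is characterized by a universal property. Anything characterized this way is carried by $F$ up to canonical isomorphism. Once the ingredients are handled, the orientation conditions transfer verbatim, and applying the same argument to $F^{-1}$ gives the converse direction.

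\textbf{Structural facts.} An equivalence is full, faithful and essentially surjective, so it preserves and reflects all limits and colimits that exist.
\begin{enumerate}[label=\textup{(\arabic*)}]
\item $F$ preserves and reflects monomorphisms, since these are defined by a cancellation property. It therefore induces an order isomorphism $\Sub_{\mathcal C}(X)\cong\Sub_{\mathcal D}(FX)$. Fullness and faithfulness make this map well defined and order preserving. Essential surjectivity, together with replacing a mono $Y\to FX$ by $F$ of a mono via $F^{-1}$ and the unit isomorphism, makes it surjective. Being an order isomorphism, it preserves $\bot$, $\top$, finite joins and finite meets whenever these exist.
\item Atoms (objects with exactly two subobjects) are sent to atoms.
\item The canonical comparison $\phi:F(Q\times Q)\to FQ\times FQ$ is an isomorphism compatible with the projections. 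Hence it carries $F(\Delta_Q)=F\langle1,1\rangle$ to $\langle1,1\rangle=\Delta_{FQ}$, and $F\sigma$ to the exchange on $FQ\times FQ$. Both identities are checked by composing with the two projections.
\end{enumerate}

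\textbf{Transfer of orientations.} Let $O\leq Q\times Q$ be an orientation. Put $O':=\phi_*F(O)\leq FQ\times FQ$, the image subobject transported along the isomorphism $\phi$. By~(3), $\sigma(O)$ is carried to $\sigma'(O')$ and $\Delta_Q$ to $\Delta_{FQ}$. By~(1), the equations $O\wedge\sigma(O)=\bot$ and $O\vee\sigma(O)\vee\Delta_Q=\top$ become the corresponding equations in $\Sub(FQ\times FQ)$. Here $\sigma(O)$ means the image of $O$ under the automorphism $\sigma$, which is again a subobject, and image along an isomorphism commutes with the transport. The condition $\Delta_Q\neq\top$ is also preserved, because the lattice map is injective. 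So $FQ$ is a properly orientable atom of $\mathcal D$.

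\textbf{Main obstacle.} The difficulty is not conceptual. It lies in the bookkeeping that identifies $F(Q\times Q)$ with $FQ\times FQ$ coherently, so that diagonal and swap genuinely match after transport rather than only up to some unrelated isomorphism. I would handle this by working throughout with the specific comparison isomorphism $\phi$ determined by the projections. The hypothesis that the relevant subobject posets have finite joins is also transferred by~(1), so the definition makes sense in $\mathcal D$ as well.
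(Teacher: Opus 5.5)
Your proposal is correct and follows the same route as the paper. The paper's proof observes only that an equivalence preserves finite limits up to canonical isomorphism and induces order isomorphisms on subobject posets, which is your steps (1)--(3). Your explicit use of the comparison isomorphism $\phi$ to match the diagonal and factor exchange, together with the separate check of preservation and reflection of monomorphisms, spells out bookkeeping that the paper leaves implicit.
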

\begin{proof}
An equivalence preserves finite limits up to canonical isomorphism and
induces order isomorphisms on corresponding subobject posets. Hence it
preserves the two equations of Definition~\ref{def:orient}, as well as the
property of an object being an atom and the condition
$\Delta_Q\neq\top$.
\end{proof}

\begin{lem}[Swap dichotomy]\label{lem:swap}
Let $Q$ be a transitive $G$-set and let $O$ be a $G$-orbit in
$Q\times Q$ disjoint from the diagonal. Then $\sigma(O)$ is again an
orbit, and $\sigma(O)=O$ if and only if some $g\in G$ exchanges two
distinct elements forming a pair in $O$.
\end{lem}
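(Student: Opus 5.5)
The plan is to prove the two assertions directly from the definitions, using only that $\sigma$ is an involution of $Q\times Q$ that commutes with the diagonal $G$-action.

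\textbf{Step 1: $\sigma(O)$ is an orbit.} Since $\sigma(g\cdot(p,q))=(gq,gp)=g\cdot\sigma(p,q)$, the map $\sigma$ is a $G$-equivariant bijection of $Q\times Q$. An equivariant bijection carries orbits onto orbits. Concretely, if $O=G\cdot(p,q)$, then $\sigma(O)=G\cdot(q,p)$. Because $\sigma$ preserves the diagonal, $\sigma(O)$ is also disjoint from $\Delta_Q$.

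\textbf{Step 2: the equivalence.} Two orbits are either equal or disjoint. So $\sigma(O)=O$ holds if and only if $(q,p)\in O$ for one, equivalently every, pair $(p,q)\in O$. This in turn holds if and only if there is $g\in G$ with $g\cdot(p,q)=(q,p)$, that is, $gp=q$ and $gq=p$. Such a $g$ exchanges the two elements $p,q$, which are distinct because $O$ avoids the diagonal.

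For the converse, suppose some $g$ exchanges two distinct elements $p',q'$ with $(p',q')\in O$. Then $(q',p')=g(p',q')\in O$, hence $\sigma(O)\cap O\neq\emptyset$, and therefore $\sigma(O)=O$ by Step 1.

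The only point needing care is the quantifier in ``a pair in $O$''. By transitivity of $G$ on $O$, an exchanging element exists for one pair of $O$ if and only if it exists for every pair: conjugating by the element moving one pair to the other transfers it. Transitivity of $Q$ itself is not needed beyond fixing the setting. There is no real obstacle; the argument is a short orbit calculation.
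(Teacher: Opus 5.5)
Your proof is correct and follows essentially the same route as the paper: equivariance of $\sigma$ makes $\sigma(O)$ an orbit, and $\sigma(O)=O$ holds exactly when $(v,u)$ lies in the orbit of $(u,v)$, i.e.\ when some $g\in G$ sends $(u,v)$ to $(v,u)$. Your extra remarks (that $\sigma(O)$ also avoids the diagonal, and that by conjugation an exchanging element exists for one pair of $O$ iff for every pair) are harmless refinements that the paper leaves implicit.
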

\begin{proof}
The map $\sigma$ commutes with the diagonal $G$-action. If
$g(u,v)=(v,u)$ with $(u,v)\in O$, then $O$ and $\sigma(O)$ meet and hence
are equal. Conversely, if $\sigma(O)=O$, then for $(u,v)\in O$ the pair
$(v,u)$ lies in the same orbit, so some $g\in G$ sends $(u,v)$ to
$(v,u)$.
\end{proof}

\begin{thm}[Tournament criterion]\label{thm:orientcrit}
Let $Q\cong G/U$ be an atom of $\FSS_G(A)$ with at least two elements.
The following are equivalent.
\begin{enumerate}[label=\textup{(\roman*)}]
\item $Q$ is properly orientable;
\item no element of $G$ exchanges two distinct elements of $Q$;
\item $gU\cap Ug^{-1}=\emptyset$ for every $g\in G\setminus U$.
\end{enumerate}
\end{thm}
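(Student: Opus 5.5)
I prove (i)$\Leftrightarrow$(ii) at the level of orbits of $Q\times Q$, and then translate (ii) into the double-coset condition (iii) through the identification $Q\cong G/U$.

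\emph{(i)$\Rightarrow$(ii).} By Lemma~\ref{lem:subobj}, an orientation $O$ is a $G$-invariant subset of $Q\times Q$. The condition $O\wedge\sigma(O)=\bot$ says $O\cap\sigma(O)=\emptyset$, and $O\vee\sigma(O)\vee\Delta_Q=\top$ says that every off-diagonal pair lies in $O$ or in $\sigma(O)$. Since $\sigma(\Delta_Q)=\Delta_Q$, disjointness gives $O\cap\Delta_Q=\emptyset$, so $O$ is a tournament. Suppose $g(u,v)=(v,u)$ with $u\neq v$. Without loss of generality $(u,v)\in O$. Invariance of $O$ gives $(v,u)\in O$, so $(u,v)\in\sigma(O)$, contradicting disjointness.

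\emph{(ii)$\Rightarrow$(i).} The off-diagonal part of $Q\times Q$ is a union of $G$-orbits, and $\sigma$ permutes these orbits because it commutes with the action. By Lemma~\ref{lem:swap} and (ii), $\sigma(P)\neq P$ for every off-diagonal orbit $P$, and $\sigma(P)$ is again an orbit, hence disjoint from $P$. So $\sigma$ acts as a fixed-point-free involution on the set of off-diagonal orbits. Using choice in the ZFC metatheory, as the conventions permit, I pick one orbit from each $\sigma$-pair and let $O$ be their union. This $O$ is $G$-invariant; its elements are finitely supported because they lie in $Q\times Q\in\FSS_G(A)$, so $O$ is a subobject. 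It satisfies $O\cap\sigma(O)=\emptyset$ and $O\cup\sigma(O)\cup\Delta_Q=Q\times Q$. Finally $\Delta_Q\neq\top$ because $|Q|\geq2$. The only point needing care here is that joins and meets of subobjects are computed as unions and intersections, which follows from Lemma~\ref{lem:subobj}.

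\emph{(ii)$\Leftrightarrow$(iii).} By transitivity, any exchanged pair can be moved so that its first entry is the base point $U$. So (ii) fails exactly when some $h\in G$ and some $gU\neq U$ satisfy $hU=gU$ and $hgU=U$. The second equation says $hg\in U$, that is, $h\in Ug^{-1}$. The first says $h\in gU$. Thus (ii) fails if and only if $gU\cap Ug^{-1}\neq\emptyset$ for some $g\notin U$, which is the negation of (iii).

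The only real obstacle is bookkeeping. In the direction from a nonempty intersection back to a swap, I will check that $h\in gU\cap Ug^{-1}$ indeed sends $U\mapsto gU$ and $gU\mapsto hgU=U$, with $gU\neq U$ because $g\notin U$.
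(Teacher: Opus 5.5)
Your proof is correct and follows the paper's argument essentially step for step: for (ii)$\Rightarrow$(i) you use the fixed-point-free pairing involution on off-diagonal orbitals and externally choose one orbital per pair, and for (ii)$\Leftrightarrow$(iii) you use the computation $h\in gU$, $hg\in U$. The differences are cosmetic. You argue (i)$\Rightarrow$(ii) directly on the exchanged pair rather than through the self-paired orbital of Lemma~\ref{lem:swap}, and you make explicit two points the paper leaves implicit: the conjugation that moves the exchanged pair to the base point $U$, and the identification of subobject meets and joins with intersections and unions via Lemma~\ref{lem:subobj}.
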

\begin{proof}
(i)$\Rightarrow$(ii). If $g$ exchanges $u\neq v$, the orbital of
$(u,v)$ is self-paired by Lemma~\ref{lem:swap}. Any invariant
orientation is a union of orbitals, so totality forces this orbital into
one of the two directions while self-pairing forces it into both,
contradicting asymmetry.

(ii)$\Rightarrow$(i). Factor exchange acts on the off-diagonal orbitals
as a fixed-point-free involution. Using external choice in the ambient
ZFC metatheory, choose one orbital from each pair
$\{O,\sigma(O)\}$ and take their union. This invariant relation is total
and asymmetric off the diagonal.

(ii)$\Leftrightarrow$(iii). It is enough to test whether some element
exchanges the cosets $U$ and $gU$, $g\notin U$. Such an element $h$
exists exactly when $h\in gU$ and $hg\in U$, equivalently
$h\in gU\cap Ug^{-1}$.
\end{proof}

The criterion has a standard permutation-group reformulation. Orbitals of
$G$ on $Q\times Q$ correspond to double cosets $U\backslash G/U$, and
factor exchange is the usual pairing involution on orbitals
\cite[Sec.~1.11]{Cameron99}.

\begin{thm}[Double-coset form]\label{thm:doublecoset}
For a non-singleton atom $Q\cong G/U$, the conditions of
Theorem~\ref{thm:orientcrit} are further equivalent to:
\begin{enumerate}[label=\textup{(\roman*)},start=4]
\item no off-diagonal orbital of $(G,Q)$ is self-paired;
\item $g^{-1}\notin UgU$ for every $g\in G\setminus U$.
\end{enumerate}
\end{thm}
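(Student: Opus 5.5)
We will show (ii)$\Leftrightarrow$(iv) and (iii)$\Leftrightarrow$(v). Combined with Theorem~\ref{thm:orientcrit}, this puts (iv) and (v) in the same equivalence class as (i)--(iii).

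For (ii)$\Leftrightarrow$(iv) we invoke Lemma~\ref{lem:swap} directly. An off-diagonal orbital $O$ is self-paired, meaning $\sigma(O)=O$, exactly when some $g\in G$ exchanges two distinct elements forming a pair of $O$. Every pair of distinct elements of $Q$ lies in some off-diagonal orbital. Hence some element of $G$ exchanges two distinct points of $Q$ if and only if some off-diagonal orbital is self-paired. This is (ii)$\Leftrightarrow$(iv).

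For (iii)$\Leftrightarrow$(v) we work with cosets in $Q=G/U$. Fix $g\in G\setminus U$. First suppose $h\in gU\cap Ug^{-1}$, and write $h=gu=u'g^{-1}$ with $u,u'\in U$. Then $g^{-1}=u'^{-1}gu\in UgU$, so (v) fails at $g$. Conversely, suppose $g^{-1}=u_1gu_2$ with $u_1,u_2\in U$. Put $h:=u_1^{-1}g^{-1}=gu_2$. Then $h\in gU$, and $h=u_1^{-1}g^{-1}\in Ug^{-1}$, so the intersection in (iii) is nonempty. Thus (iii) and (v) fail for exactly the same $g\in G\setminus U$.

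As a consistency check, under the standard correspondence the orbital of $(U,gU)$ is the double coset $UgU$, and its paired orbital, that of $(gU,U)=g\cdot(U,g^{-1}U)$, is $Ug^{-1}U$. Off-diagonal orbitals correspond to $g\notin U$. So (v) literally says that $UgU\neq Ug^{-1}U$ for every off-diagonal double coset, which is (iv) translated into double cosets.

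There is no serious obstacle. The only care needed is in the algebra of the step (iii)$\Leftrightarrow$(v): each witness $h$ must be placed in $gU$ and in $Ug^{-1}$ on the correct sides, consistent with the left-coset convention of Theorem~\ref{thm:orientcrit}.
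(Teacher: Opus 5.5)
Your proof is correct and follows essentially the same route as the paper: (ii)$\Leftrightarrow$(iv) comes from Lemma~\ref{lem:swap}, and (iii)$\Leftrightarrow$(v) from the rewriting $gu=u'g^{-1}\iff g^{-1}=u'^{-1}gu\in UgU$. The orbital--double-coset correspondence plays the same interpretive role in both arguments. Your version simply writes out the converse witness $h=gu_2=u_1^{-1}g^{-1}$ explicitly, which the paper leaves implicit.
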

\begin{proof}
Condition (iv) is Lemma~\ref{lem:swap}. Moreover
$gU\cap Ug^{-1}\neq\emptyset$ iff $gu=vg^{-1}$ for some $u,v\in U$,
which is equivalent to $g^{-1}=v^{-1}gu\in UgU$. Therefore
$gU\cap Ug^{-1}=\emptyset$ iff $g^{-1}\notin UgU$. Under the
orbital--double-coset correspondence this is precisely the absence of a
self-paired off-diagonal orbital.
\end{proof}

\begin{cor}[Rank parity]\label{cor:rankodd}
If a properly orientable atom $Q$ has finite permutation rank~$r$, then
$r$ is odd. In particular, a non-singleton $2$-transitive atom is not
properly orientable.
\end{cor}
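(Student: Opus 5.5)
Rank $r$ is the number of $G$-orbits on $Q\times Q$, so the plan is to count orbitals and apply Theorem~\ref{thm:doublecoset}. For a transitive $Q$, the diagonal $\Delta_Q$ is exactly one orbit. The remaining $r-1$ orbitals are off-diagonal, and factor exchange $\sigma$ permutes them, since $\sigma$ commutes with the diagonal action (Lemma~\ref{lem:swap}). Moreover $\sigma$ is an involution on this set.

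If $Q$ is properly orientable, condition~(iv) of Theorem~\ref{thm:doublecoset} says that no off-diagonal orbital is self-paired. So $\sigma$ acts on the $r-1$ off-diagonal orbitals as a fixed-point-free involution. These orbitals therefore split into disjoint pairs $\{O,\sigma(O)\}$, and $r-1$ is even. Hence $r$ is odd. The only point needing care is that $\Delta_Q$ is a single orbital, which follows from transitivity: $G$ acts transitively on $\{(q,q)\}$.

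For the second assertion, let $Q$ be a non-singleton $2$-transitive atom. Then $G$ is transitive on pairs of distinct elements, so there is exactly one off-diagonal orbital and $r=2$, which is even. By the first part, $Q$ is not properly orientable.

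Alternatively, one can argue the second assertion directly through Theorem~\ref{thm:orientcrit}(ii). Take distinct $u,v\in Q$. By $2$-transitivity some $g\in G$ sends $(u,v)$ to $(v,u)$, so $g$ exchanges two distinct elements. No step is a genuine obstacle; the argument is a parity count once condition~(iv) is in hand.
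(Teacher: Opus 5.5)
Your proof is correct and is the paper's argument: pairing acts as a fixed-point-free involution on the $r-1$ off-diagonal orbitals by condition~(iv) of Theorem~\ref{thm:doublecoset}, so $r-1$ is even, and the $2$-transitive case is the instance $r=2$. You also spell out two points the paper leaves implicit: $\Delta_Q$ is a single orbital by transitivity, and proper orientability forces $Q$ to be non-singleton, so the theorem applies.
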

\begin{proof}
Pairing is fixed-point-free on the $r-1$ off-diagonal orbitals, so
$r-1$ is even.
\end{proof}

\begin{thm}[Involution obstruction]\label{thm:involution}
Let $Q$ be an atom of $\FSS_G(A)$.
\begin{enumerate}[label=\textup{(\alph*)}]
\item If an involution $t\in G$ acts nontrivially on $Q$, then $Q$ is
not properly orientable.
\item If the subgroup generated by the involutions of $G$ is dense in
$G$ for pointwise convergence, then no non-singleton atom of
$\FSS_G(A)$ is properly orientable.
\end{enumerate}
\end{thm}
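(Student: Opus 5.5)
For (a), I will use the swap criterion of Theorem~\ref{thm:orientcrit}(ii), so it suffices to find two distinct elements of $Q$ that $t$ exchanges. Since $t$ acts nontrivially on $Q$, there is $u\in Q$ with $v:=t\cdot u\neq u$. Because $t^2=1$, we get $t\cdot v=t^2\cdot u=u$. Thus $t$ exchanges the distinct elements $u,v$. In particular $Q$ has at least two elements, so the criterion applies and $Q$ is not properly orientable. Equivalently, in the language of Lemma~\ref{lem:swap}, the orbital of $(u,v)$ is self-paired.

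For (b), let $Q$ be a non-singleton atom, and suppose it were properly orientable. By (a), every involution of $G$ acts trivially on $Q$. Hence the subgroup $K$ generated by the involutions lies in the kernel $N$ of the action on $Q$. Since $Q$ is a single orbit of finitely supported elements, $N=\bigcap_{q\in Q}\mathrm{Stab}_G(q)$.

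The main step is to pass from $K$ to its closure. Each stabilizer $\mathrm{Stab}_G(q)$ contains some $\Fix_G(S)$, so it is an open subgroup of $G$ and therefore also closed in $G$. So $N$ is an intersection of closed subgroups and is closed in $G$. By hypothesis $K$ is dense in $G$, hence $G=\overline K\cap G\subseteq N$. Concretely: given $g\in G$ and $q\in Q$ supported by $S$, density gives $k\in K$ agreeing with $g$ on $S$, and Lemma~\ref{lem:dependence} yields $g\cdot q=k\cdot q=q$. So $G$ acts trivially on $Q$. But $Q$ is transitive, so it is a singleton, contradicting the assumption.

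The only delicate point is that density is meant relative to the topology on $G$. The pointwise argument via Lemma~\ref{lem:dependence} sidesteps this, because it uses only approximation on the finite support $S$. It is also valid when $G$ is not closed, since the action on $Q$ extends to $\overline G$ by Theorem~\ref{thm:closure} and the same approximation argument applies there.
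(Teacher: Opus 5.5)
Your proposal is correct and follows the paper's own argument. Part (a) uses the swap criterion of Theorem~\ref{thm:orientcrit}, and part (b) observes that the kernel is an intersection of open, hence closed, point stabilizers containing every involution, so density forces it to be all of $G$. Your pointwise unpacking via Lemma~\ref{lem:dependence} is a valid concrete form of the same closure step, and the closing remark about extending to $\overline G$ is not needed.
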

\begin{proof}
For (a), if $tq\neq q$ then $t$ exchanges $q$ and $tq$, so
Theorem~\ref{thm:orientcrit} applies. For (b), let~$N$ be the kernel of
the action on a properly orientable atom $Q$. Each point stabilizer of
$Q$ is open, hence closed, so $N$ is closed. Part (a) puts every
involution in $N$; density of the subgroup they generate gives $N=G$,
forcing $Q$ to be a singleton.
\end{proof}

For example, transpositions generate the dense subgroup
$\Symfin(A)\leq\Sym(A)$, so Theorem~\ref{thm:involution} already proves
that equality-atom universes have no properly orientable atom at any
infinite cardinality.

\subsection{Free amalgamation and homogeneous atom structures}

\begin{defi}\label{def:flipamalg}
A class $\mathcal A$ of finite relational structures has
\emph{symmetric self-amalgamation over $\emptyset$} if for every
$c\in\mathcal A$ there is $W\in\mathcal A$ consisting of two disjoint
copies $c_1,c_2$ of $c$ and carrying an automorphism of order two that
exchanges them. Free amalgamation over $\emptyset$ implies this property.
\end{defi}

\begin{thm}[Non-orientability from symmetric self-amalgamation]
\label{thm:freenonorient}
Let $M$ be an ultrahomogeneous relational structure of any infinite
cardinality, put $G=\Aut(M)$, and suppose that its age has symmetric
self-amalgamation over $\emptyset$ and that $G$ has the least-support
property. Then no atom of $\FSS_G(M)$ is properly orientable.
\end{thm}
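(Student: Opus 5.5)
The plan is to verify the swap criterion of Theorem~\ref{thm:orientcrit}(ii) directly. Let $Q$ be an atom of $\FSS_G(M)$. By Lemma~\ref{lem:subobj}, $Q=G\cdot x$ for some finitely supported $x$. If $Q$ is a singleton, then $\Delta_Q=\top$ and $Q$ is not properly orientable, so assume $Q$ has at least two elements. By the least-support hypothesis, $x$ has a least support $S=\supp(x)$. Here $S\neq\emptyset$: if $\emptyset$ supported $x$, then $\Fix_G(\emptyset)=G$ would fix $x$, making $Q$ a singleton. It then suffices to produce some $g\in G$ and two distinct elements $y,z\in Q$ with $gy=z$ and $gz=y$; Theorem~\ref{thm:orientcrit} will then exclude proper orientability.

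The construction uses the amalgamation hypothesis applied to the finite induced substructure $c$ of $M$ on $S$. Symmetric self-amalgamation over $\emptyset$ gives $W\in\mathcal A(M)$ with disjoint copies $c_1,c_2$ of $c$ and an involution $\tau\in\Aut(W)$ exchanging them. I will embed $W$ into $M$, which is possible since $W$ lies in the age, and identify it with its image. Next, choose an isomorphism $f:c\to c_1$, and extend it by ultrahomogeneity to some $h\in G$. Put $y:=h x\in Q$. By Lemma~\ref{lem:leastequi}, $\supp(y)=h(S)=c_1$ (as a set). The involution $\tau$ is an isomorphism between finite substructures of $M$ (namely $W\to W$), so ultrahomogeneity extends it to some $g\in G$. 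Put $z:=g y\in Q$. Lemma~\ref{lem:leastequi} again gives $\supp(z)=g(c_1)=\tau(c_1)=c_2$.

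Two checks remain. First, $g z=g^2 y=y$, because $g^2$ agrees with $\tau^2=\mathrm{id}$ on $W\supseteq c_1$. Hence $g^2\in\Fix_G(c_1)$, and $c_1$ supports $y$. Note that $g$ itself need not be an involution of $M$; only its square on $c_1$ matters, and it is exactly this use of the support property that replaces a genuine involution. Second, $y\neq z$: their least supports $c_1$ and $c_2$ are disjoint and nonempty (since $S\neq\emptyset$), hence distinct, and least supports are well defined. Therefore $g$ exchanges two distinct elements of $Q$, and Theorem~\ref{thm:orientcrit} shows that $Q$ is not properly orientable.

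The main point to get right is the second check, namely distinguishing $y$ from $z$. This is where the least-support hypothesis is genuinely needed, together with the observation that the empty-support case only yields singletons. Nothing in the argument depends on the cardinality of $M$, because only ultrahomogeneity and the age are used.
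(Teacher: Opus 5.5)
Your proposal is correct and is essentially the paper's proof. Both arguments take the nonempty least support $S$ of a point of a non-singleton atom and realize a symmetric self-amalgam of $S$ inside $M$. They then extend the exchanging involution to some $g\in G$, use that $g^2$ fixes the copies pointwise, and use equivariance of least supports (Lemma~\ref{lem:leastequi}) to obtain two distinct elements exchanged by $g$, which contradicts Theorem~\ref{thm:orientcrit}. The only difference is cosmetic: you embed $W$ anywhere and transport $x$ by an automorphism $h$ so that its support lands on $c_1$, whereas the paper uses the extension property to place the amalgam over the given copy of $S$; these amount to the same thing.
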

\begin{proof}
Let $Q$ be a non-singleton atom and $q\in Q$, with nonempty least support
$S$. Enumerate $S$ by $\bar a$. By the extension property presented in Section~\ref{sec:regime}, embed over the given copy of $S$ a symmetric
self-amalgam of its induced structure; write $\bar a'$ for the disjoint
mirror copy, and let $p$ be the partial isomorphism exchanging the two
copies. By ultrahomogeneity, extend $p$ to $g\in G$. Since $g$ agrees
with the involution $p$ on the whole finite domain,
$g^2\in\Fix_G(S\cup S')$, and hence $g^2q=q$. Equivariance of least
supports gives $\supp(gq)=gS=S'$, which is disjoint from the nonempty set
$S$; therefore $gq\neq q$. Thus $g$ exchanges $q$ and $gq$, contradicting
Theorem~\ref{thm:orientcrit}. Singleton atoms are excluded by the
requirement $\Delta_Q\neq\top$.
\end{proof}

\begin{cor}\label{thm:nonorient}
For every infinite $A$, no atom of $\FSS_{\Sym(A)}(A)$ is properly
orientable. The same holds for the finitely supported universes of the
countable random graph and the Henson graphs.
\end{cor}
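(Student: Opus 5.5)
The plan is to apply Theorem~\ref{thm:freenonorient} in each case. That theorem needs three hypotheses: ultrahomogeneity of $M$, symmetric self-amalgamation over $\emptyset$ for the age, and the least-support property for $G=\Aut(M)$.

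\textbf{Pure equality atoms.} For $\Sym(A)$ the first two hypotheses are immediate. The pure set of any infinite cardinality is ultrahomogeneous, as noted in Section~\ref{sec:regime}. Its age consists of the finite sets. The disjoint union of two copies of a finite set, with the swap involution, is again a finite set, so symmetric self-amalgamation holds. The least-support property holds at every infinite cardinality by the classical support-intersection lemma recalled before Theorem~\ref{thm:equivfail}. Theorem~\ref{thm:freenonorient} then applies. Alternatively, Theorem~\ref{thm:involution}(b) gives a one-line argument: transpositions are involutions and generate the dense subgroup $\Symfin(A)$. I would record both routes but rely on Theorem~\ref{thm:freenonorient} for uniformity.

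\textbf{Random graph and Henson graphs.} Both are countable Fra\"iss\'e limits, hence ultrahomogeneous. The random graph has as age all finite graphs; the Henson graph $H_n$ has as age the $K_n$-free finite graphs. Both ages have free amalgamation. In particular, amalgamating $c$ with itself freely over $\emptyset$ gives the disjoint union $c\sqcup c$ with no edges between the copies. This graph is $K_n$-free whenever $c$ is, since any clique lies within one copy. The swap of the two copies is an order-two automorphism, which gives symmetric self-amalgamation.

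\textbf{The main obstacle: least supports.} The least-support property is the only nontrivial hypothesis. For the random graph I would take it from the meet-law row of Remark~\ref{rem:quadrants}, or derive it directly. For both structures I would argue as in the proof of Theorem~\ref{thm:hypergraph}: a free homogeneous structure satisfies the stabilizer identity $\Fix_G(S\cap T)=\langle\Fix_G(S)\cup\Fix_G(T)\rangle$ by \cite[Lem.~2.7(i)]{MacphersonTent11}. Together with Theorem~\ref{thm:meetchar}, this gives the SAP and hence least supports.

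If one wanted to avoid the citation, one could check the unary local orbit condition of Corollary~\ref{cor:arityone}(3) directly via the extension property. Given $E$ and distinct $c,d\notin E$, the goal is to move $c$ to any $c'$ in its $\Fix_G(E)$-orbit. One first uses an element of $\Fix_G(E\cup\{d\})$ to send $c$ to a point freely placed relative to $c$ and $c'$, then uses an element of $\Fix_G(E\cup\{d\})$ or $\Fix_G(E\cup\{c\})$ to finish. The free (non-adjacent) choice of the intermediate point keeps it $K_n$-free in the Henson case. This is the step I expect to require care.
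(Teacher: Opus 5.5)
Your proof is correct and matches the paper's. You treat equality atoms via Theorem~\ref{thm:involution} (or Theorem~\ref{thm:freenonorient} with classical least supports). For the random and Henson graphs you get least supports from free amalgamation, \cite[Lem.~2.7(i)]{MacphersonTent11} and Theorem~\ref{thm:meetchar}, then apply Theorem~\ref{thm:freenonorient}. Your explicit check of symmetric self-amalgamation just spells out what the paper leaves to Definition~\ref{def:flipamalg}.

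The optional citation-free sketch is not needed, and as written its finishing step does not always work. Take a triangle-free Henson graph with $E=\emptyset$, $c$ adjacent to $d$ and to $c'$, and $c'$ not adjacent to $d$. Then neither $\Fix_G(E\cup\{c\})$ nor $\Fix_G(E\cup\{d\})$ can carry the intermediate point to $c'$, so you would need a longer word in the generators of $H_{E,c,d}$.
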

\begin{proof}
The equality case also follows directly from
Theorem~\ref{thm:involution}. For the random graph and Henson graphs, the
ages have free amalgamation, and
\cite[Lem.~2.7(i)]{MacphersonTent11} gives
$\Fix_G(S\cap T)=\langle\Fix_G(S)\cup\Fix_G(T)\rangle$ for finite
$S,T$; hence least supports follow from Theorem~\ref{thm:meetchar}.
Apply Theorem~\ref{thm:freenonorient}.
\end{proof}

\noindent
For the canonical atom object one can test orientability from two-point substructures~alone.

\begin{thm}[Two-point criterion]\label{thm:homogcrit}
Let $M$ be an ultrahomogeneous relational structure with infinite domain
$A$, and suppose $G=\Aut(M)$ is transitive on $A$. Then the following are
equivalent:
\begin{enumerate}[label=\textup{(\roman*)}]
\item the atom object $A$ is properly orientable;
\item for all distinct $a,b\in A$, the ordered pairs $(a,b)$ and $(b,a)$
lie in different $G$-orbits;
\item no two-element induced substructure of $M$ admits the nonidentity
permutation of its domain as an automorphism;
\item $G$ preserves a tournament on $A$.
\end{enumerate}
If $M$ is moreover $\omega$-categorical, these are equivalent to the
existence of a parameter-free definable tournament on $M$.
\end{thm}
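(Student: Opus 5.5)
The plan is to reduce everything to the tournament criterion (Theorem~\ref{thm:orientcrit}) applied to the atom object $Q=A$. First I check that $A$ is an atom in the sense of Lemma~\ref{lem:subobj}. Since $G$ is transitive on $A$, the object $A$ is a single orbit. Each atom $a$ is supported by $\{a\}$, so the orbit is nonempty and finitely supported, and it has at least two elements because $A$ is infinite. Theorem~\ref{thm:orientcrit} then gives (i)$\Leftrightarrow$``no $g\in G$ exchanges two distinct atoms''. This last condition says exactly that $(a,b)$ and $(b,a)$ never lie in the same $G$-orbit, which is (ii). So (i)$\Leftrightarrow$(ii) is immediate.

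For (ii)$\Leftrightarrow$(iii) I use ultrahomogeneity. If some two-element induced substructure $\{a,b\}$ has the swap $a\leftrightarrow b$ as an automorphism, that swap is an isomorphism between finite substructures. It therefore extends to some $g\in G$, and this $g$ sends $(a,b)$ to $(b,a)$, so (ii) fails. Conversely, if $g(a,b)=(b,a)$, then $g$ restricted to $\{a,b\}$ is an automorphism of the induced substructure exchanging $a$ and $b$, so (iii) fails.

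For (i)$\Leftrightarrow$(iv), a properly orientable $A$ gives, by definition and Lemma~\ref{lem:subobj}, a $G$-invariant subset $O\subseteq A^2$. This $O$ is asymmetric and total off the diagonal, so it is a $G$-preserved tournament. Conversely, a preserved tournament is an invariant subset $O$ with $O\cap\sigma O=\emptyset$ and $O\cup\sigma O\cup\Delta=A^2$. It is $\emptyset$-supported, hence an object of the category, hence a subobject. Since $|A|\ge2$ we have $\Delta\ne\top$, so $A$ is properly orientable. One point needs care: the orientation equations must be read in the subobject lattice $\Sub(A\times A)$, and meets and joins there are intersections and unions because subobjects are invariant subsets. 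This holds for an arbitrary invariant tournament; no choice is needed, since the tournament is handed to us. For (ii)$\Rightarrow$(iv) one could alternatively use the external choice of orbitals as in the proof of Theorem~\ref{thm:orientcrit}.

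For the $\omega$-categorical addendum I invoke Ryll-Nardzewski. In a countable $\omega$-categorical structure, the $\emptyset$-definable subsets of $A^2$ are exactly the $\Aut(M)$-invariant ones, namely the finite unions of the finitely many orbits on $A^2$. So a $G$-preserved tournament is definable without parameters, and conversely a definable one is invariant, which gives (iv)$\Leftrightarrow$ definable tournament. The main obstacle is mostly bookkeeping. If ``$\omega$-categorical'' is meant for $M$ of arbitrary cardinality, the invariant-equals-definable correspondence needs $M$ countable, or at least enough homogeneity. In that case I would note that by ultrahomogeneity orbits on $A^2$ are quantifier-free $2$-types, and with oligomorphicity there are finitely many of them. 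Each is then defined by a quantifier-free formula, provided finitely many atomic formulas decide the $2$-type. That holds when the language is finite, and otherwise follows from $\omega$-categoricity of the theory via isolation of types, which I would cite.
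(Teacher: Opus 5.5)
Your proposal is correct and follows the paper's proof: (i)$\Leftrightarrow$(ii) from the tournament criterion (Theorem~\ref{thm:orientcrit}) applied to $Q=A$, (ii)$\Leftrightarrow$(iii) by extending the swap of $\{a,b\}$ via ultrahomogeneity, (i)$\Leftrightarrow$(iv) by unwinding Definition~\ref{def:orient} through Lemma~\ref{lem:subobj}, and the addendum by Ryll--Nardzewski. Your extra observation is a sound refinement of the paper's one-line appeal: identifying invariant with parameter-free definable needs either countability, or ultrahomogeneity together with isolation of the finitely many types.
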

\begin{proof}
By ultrahomogeneity, the transposition of $\{a,b\}$ extends to an
automorphism exactly when it is an automorphism of the induced
substructure on $\{a,b\}$, giving (ii)$\Leftrightarrow$(iii).
Theorem~\ref{thm:orientcrit} applied to $Q=A$ gives
(i)$\Leftrightarrow$(ii), while the translation of
Definition~\ref{def:orient} gives (i)$\Leftrightarrow$(iv). In the
$\omega$-categorical case, invariant relations on finite powers are exactly
the parameter-free definable relations by Ryll--Nardzewski.
\end{proof}

\begin{cor}[Symmetric irreflexive languages]\label{cor:symmlang}
Assume the hypotheses of Theorem~\ref{thm:homogcrit}. Suppose in addition
that every basic relation is invariant under permutations of its
coordinates and, for arity at least two, is true only on tuples of pairwise
distinct elements. Then the canonical atom object $A$ is not properly
orientable.
\end{cor}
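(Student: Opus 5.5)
By Theorem~\ref{thm:homogcrit}, the plan is to verify condition (iii) fails. Equivalently, for some two-element induced substructure of $M$ the swap of its two points is an automorphism; then (iii), and so (i), fails. In fact we will show this for \emph{every} two-element induced substructure $\{a,b\}$ with $a\neq b$. Since $A$ is infinite, such a pair exists, so the atom object $A$ is not properly orientable.

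Fix distinct $a,b$ and let $\tau$ be the transposition of $\{a,b\}$. We check that $\tau$ preserves every basic relation $R$ on the induced substructure, splitting by the arity $m$ of $R$.

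If $m=1$, transitivity of $G$ on $A$ makes $R$ either empty or all of $A$, since $R$ is $G$-invariant. Hence $R(a)\leftrightarrow R(b)$. Alternatively, ultrahomogeneity with transitivity gives the same conclusion.

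If $m\geq2$, any tuple from $\{a,b\}^m$ has a repeated entry when $m\geq3$. Such a tuple is false by the irreflexivity hypothesis, so $R$ is empty on $\{a,b\}$ and trivially preserved. For $m=2$, the tuples $(a,a)$ and $(b,b)$ are excluded by irreflexivity, and $R(a,b)\leftrightarrow R(b,a)$ holds by coordinate-permutation invariance. Thus $\tau$ maps the restriction of $R$ onto itself.

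Equality is preserved trivially, so $\tau$ is an automorphism of the induced substructure, and (iii) fails. The only delicate point is the unary case, which is where transitivity of $G$ is used. The rest is routine bookkeeping.
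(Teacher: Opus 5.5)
Your proposal is correct and matches the paper's proof. On any two-element induced substructure, relations of arity at least three are empty, binary relations are symmetric, and unary relations are empty or universal by transitivity, so the transposition is an automorphism and condition~(iii) of Theorem~\ref{thm:homogcrit} fails.
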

\begin{proof}
On a two-element induced substructure, relations of arity at least three
are empty; binary relations are symmetric; and transitivity makes every
unary relation either empty or universal. Hence the transposition of the
two points is an automorphism, so Theorem~\ref{thm:homogcrit}(iii) fails.
\end{proof}

\subsection{Ordered symmetries and categorical inequivalence}

\begin{thm}[Orientable universes]\label{thm:orientable}
Let $G\leq\Sym(A)$ preserve a tournament $T$ on $A$, and let
$Q\subseteq A$ be any $G$-orbit with at least two elements. Then $Q$ is
a properly orientable atom, with orientation $T\cap(Q\times Q)$.
\end{thm}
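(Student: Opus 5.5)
We verify the two equations of Definition~\ref{def:orient} directly in $\FSS_G(A)$, where by Lemma~\ref{lem:subobj} subobjects of $Q\times Q$ are exactly the $G$-invariant subsets, meets are intersections, joins are unions, $\bot=\emptyset$ and $\top=Q\times Q$.

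\textbf{$Q$ is an atom.} Since $Q$ is a single $G$-orbit of atoms, it is $G$-invariant. Each element $a\in Q$ is supported by $\{a\}$, so $Q$ is an object of $\FSS_G(A)$. It has exactly two subobjects by Lemma~\ref{lem:subobj}, hence it is an atom. Moreover $|Q|\geq2$ gives $\Delta_Q\neq Q\times Q=\top$.

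\textbf{$O:=T\cap(Q\times Q)$ is a subobject.} Both $T$ and $Q\times Q$ are $G$-invariant, $T$ because $G$ preserves the tournament and $Q\times Q$ because $Q$ is an orbit. Hence $O$ is $G$-invariant and is a subobject of $Q\times Q$. Factor exchange satisfies $\sigma(O)=T^{\mathrm{op}}\cap(Q\times Q)$, since $\sigma$ fixes $Q\times Q$ setwise.

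\textbf{The two equations.} Both reduce to the tournament axioms restricted to $Q$. A tournament is irreflexive and asymmetric, so $T\cap T^{\mathrm{op}}=\emptyset$; this gives $O\wedge\sigma(O)=\bot$. Totality says that for distinct $u,v$ either $(u,v)\in T$ or $(v,u)\in T$. So every pair in $Q\times Q$ lies in $O$, in $\sigma(O)$, or in $\Delta_Q$, which gives $O\vee\sigma(O)\vee\Delta_Q=\top$.

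The only point needing care is that the subobject operations in Definition~\ref{def:orient} really are the set-theoretic ones on invariant subsets. That is exactly Lemma~\ref{lem:subobj}, together with the observation that intersections and unions of invariant sets are invariant, so there is no genuine obstacle. As a consistency check, Theorem~\ref{thm:orientcrit}(ii) also holds: an element of $G$ exchanging $u\neq v$ would map the pair $(u,v)\in T$ to $(v,u)\in T$, contradicting asymmetry.
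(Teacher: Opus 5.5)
Your proof is correct and follows the paper's argument: $Q$ is an atom by Lemma~\ref{lem:subobj}, and $T\cap(Q\times Q)$ is invariant, asymmetric and total off the diagonal, which gives the two equations of Definition~\ref{def:orient} once subobject meets and joins are read as intersections and unions of invariant subsets. You simply spell out the details the paper leaves implicit: why $Q$ is an object, why $\Delta_Q\neq\top$, and the computation of $\sigma(O)$.
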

\begin{proof}
The orbit $Q$ is an atom by Lemma~\ref{lem:subobj}, and
$T\cap Q^2$ is invariant, asymmetric and total off the diagonal.
\end{proof}

\begin{cor}[Categorical inequivalence]\label{cor:noneq}
If $G\leq\Sym(A)$ preserves a tournament on $A$ and $A$ has a
non-singleton $G$-orbit, then $\FSS_G(A)$ is equivalent to no equality-atom
universe $\FSS_{\Sym(B)}(B)$ with $B$ infinite; in particular it is not
equivalent to $\Nom$.
\end{cor}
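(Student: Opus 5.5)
The argument is a comparison of a categorical invariant on the two sides. Lemma~\ref{lem:orientinv} says that the existence of a properly orientable atom is preserved and reflected by any equivalence of categories. So it suffices to show two things: $\FSS_G(A)$ has a properly orientable atom, and no equality-atom universe $\FSS_{\Sym(B)}(B)$ does.

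For the first side, pick a non-singleton $G$-orbit $Q\subseteq A$, which exists by hypothesis. By Theorem~\ref{thm:orientable}, $Q$ with the orientation $T\cap(Q\times Q)$ is a properly orientable atom of $\FSS_G(A)$. Two small points need checking. First, $Q$ is an object of $\FSS_G(A)$: each atom $a$ is supported by $\{a\}$, so $A$ and its invariant subsets are finitely supported, and Lemma~\ref{lem:subobj} makes $Q$ an atom. Second, $\Delta_Q\neq\top$ because $Q$ has at least two elements.

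For the second side, Corollary~\ref{thm:nonorient}, or directly Theorem~\ref{thm:involution}(b) using the density of $\Symfin(B)$, shows that $\FSS_{\Sym(B)}(B)$ has no properly orientable atom for any infinite $B$.

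Now suppose an equivalence $\FSS_G(A)\simeq\FSS_{\Sym(B)}(B)$ existed. It would carry $Q$ to an atom that is properly orientable, by Lemma~\ref{lem:orientinv}. This contradicts the second side. The statement about $\Nom$ is the special case $B=\omega$, since $\Nom=\FSS_{\Sym(\omega)}(\omega)$ by Definition~\ref{def:cat} and Theorem~\ref{thm:closure}.

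The only step requiring care is Lemma~\ref{lem:orientinv}. One has to make sure the equivalence transports every ingredient of Definition~\ref{def:orient}. Products and diagonals are finite limits, and the factor exchange $\sigma$ is determined by the product projections, so all three are preserved up to canonical isomorphism. The subobject lattices of corresponding objects are order-isomorphic, so $\bot$, $\top$, meets and joins are preserved. Hence both defining equations, the atom property, and the condition $\Delta_Q\neq\top$ transfer. Since the lemma is already stated, I expect no real obstacle beyond invoking it correctly.
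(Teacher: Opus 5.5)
Your proof is correct and follows the paper's route exactly: Theorem~\ref{thm:orientable} gives a properly orientable atom in $\FSS_G(A)$, Corollary~\ref{thm:nonorient} shows that no $\FSS_{\Sym(B)}(B)$ has one, and Lemma~\ref{lem:orientinv} transports this property along any equivalence. Your added checks, that $Q$ is an object and an atom and that $\Delta_Q\neq\top$, are precisely the details the paper's one-line proof leaves implicit.
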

\begin{proof}
Use Theorem~\ref{thm:orientable}, Corollary~\ref{thm:nonorient}, and
Lemma~\ref{lem:orientinv}.
\end{proof}

Orientability of one atom is a coarse invariant. When the symmetry itself
preserves a linear order, much more is true.

\begin{defi}\label{defi:orderable}
An atom is \emph{orderable} if it carries an invariant linear order. A
category is \emph{totally orderable} if every non-singleton atom is
orderable.
\end{defi}

\begin{thm}[Ordered symmetries are totally orderable]
\label{thm:totalorder}
Let $G\leq\Sym(A)$ preserve a linear order on $A$ and have the
least-support property. Then:
\begin{enumerate}[label=\textup{(\arabic*)}]
\item $\mathrm{Stab}_G(x)=\Fix_G(\supp x)$ for every finitely supported
$x$;
\item every atom of $\FSS_G(A)$ is isomorphic to a $G$-orbit of a finite
subset of $A$, and every non-singleton atom is orderable;
\item every object of $\FSS_G(A)$ admits a $G$-invariant linear order.
\end{enumerate}
\end{thm}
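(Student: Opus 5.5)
The plan is to derive everything from clause (1), which follows from equivariance of least supports combined with the rigidity of finite chains. Each orbit is then identified canonically with an orbit of finite subsets of $A$, which carries an obvious invariant order.

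\emph{Step (1).} The inclusion $\Fix_G(\supp x)\subseteq\mathrm{Stab}_G(x)$ is the definition of support. For the converse, let $g\in\mathrm{Stab}_G(x)$ and put $S=\supp x$. Lemma~\ref{lem:leastequi} gives $g(S)=\supp(gx)=\supp(x)=S$, so $g$ permutes the finite set $S$. Since $g$ preserves the linear order on $A$, its restriction to $S$ is an order automorphism of a finite chain, hence the identity. Thus $g\in\Fix_G(S)$. This is the only place where the order hypothesis and least supports interact, and I expect it to be the key step, though it should be short.

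\emph{Step (2).} The central tool is the map $x\mapsto\supp(x)$ into the $G$-set $[A]^{<\omega}$ of finite subsets. By Lemma~\ref{lem:leastequi} this map is equivariant. I will show it is injective on each $G$-orbit. Suppose $y=gx$ and $\supp y=\supp x=S$. Then $g(S)=S$, and the finite-chain argument of Step (1) gives $g\in\Fix_G(S)$, so $y=x$. Hence every orbit $G\cdot q$, and in particular every atom (Lemma~\ref{lem:subobj}), is equivariantly isomorphic to the orbit $G\cdot\supp(q)$ of a finite subset of $A$.

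On $[A]^{<\omega}$ I define a linear order by comparing sizes first and then comparing increasing enumerations lexicographically with respect to the order of $A$. Because $g$ is order preserving, it maps the increasing enumeration of $T$ to the increasing enumeration of $g(T)$. Consequently this order is $G$-invariant. Pulling it back along the injective equivariant map $\supp$ makes every atom orderable, and in particular every non-singleton atom.

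\emph{Step (3).} Let $X$ be an arbitrary object. It is the disjoint union of its $G$-orbits. Using external choice in the ZFC metatheory, I fix a linear (for instance, well-) order on the set of orbits. For $x,y\in X$ I set $x<y$ if either the orbit of $x$ precedes the orbit of $y$, or the two lie in the same orbit and $\supp x$ precedes $\supp y$ in the order on $[A]^{<\omega}$ from Step (2).

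This relation is a linear order. It is total because $\supp$ is injective within each orbit. It is $G$-invariant because each orbit is $G$-invariant, so the external comparison of orbits is preserved, and because $\supp$ is equivariant with the order on finite sets $G$-invariant. No further obstacle is expected at this step.
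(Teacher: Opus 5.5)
Your proposal is correct and follows the paper's proof essentially step for step. Clause (1) uses equivariance of least supports together with rigidity of finite chains; clause (2) identifies each orbit with the orbit of its least support and orders finite subsets lexicographically by increasing enumerations; clause (3) externally well-orders the orbits and concatenates. Your explicit check that $\supp$ is injective on each orbit, and your single size-then-lexicographic order on $[A]^{<\omega}$, only spell out what the paper leaves implicit in the word ``identifies.''
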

\begin{proof}
If $g$ fixes $x$, equivariance of least supports gives
$g(\supp x)=\supp x$. Since $g$ preserves the ambient linear order, its
restriction to the finite ordered set $\supp x$ is the identity. Hence
$g\in\Fix_G(\supp x)$, proving (1); the reverse inclusion is the definition
of support.

For (2), the map $gx\mapsto g(\supp x)$ identifies the orbit of $x$ with
the orbit of its finite least support. Order each orbit of $n$-element
subsets by lexicographically comparing increasing enumerations. The order
is $G$-invariant. For (3), externally well-order the set of $G$-orbits of
an object and concatenate the invariant linear orders on the individual
orbits. External choice is available in the surrounding ZFC metatheory.
\end{proof}

For $G=\Aut(\mathbb Q,<)$, least supports hold
\cite[Cor.~9.5]{BKL14}; hence all atoms in
$\FSS_G(\mathbb Q)$ are orderable. This strengthens the categorical
separation from equality atoms: the ordered universe remembers an invariant
order on every one of its atoms, not merely on the canonical atom object.

\subsection{Contextual orientability and pair choice}

\begin{defi}[Contextual orientability]\label{defi:ctxorient}
For finite $S\subseteq A$, write $\mathrm{or}_G(S)$ when
$\Fix_G(S)$ preserves a tournament on $A\setminus S$. Define the
\emph{orientability threshold}
\[
   \tau(G)=\min\{|S|: S\subseteq A\text{ finite and }\mathrm{or}_G(S)\}
   \in\mathbb N\cup\{\infty\},
\]
with value $\infty$ if no finite context orients the fresh atoms.
\end{defi}

The quantity $\tau(G)$ is a \emph{pointed} invariant: its definition uses
the chosen atom object~$A$ and the context slices determined by finite tuples
of that object. Proposition~\ref{prop:slice} makes each fixed level
categorical once this data is distinguished; it does not imply that
$\tau(G)$ can be recovered from the bare category $\FSS_G(A)$ without a
distinguished atom object. That reconstruction problem is left open in
Question~\ref{q:reconstruction}.

By Proposition~\ref{prop:slice}, $\mathrm{or}_G(S)$ is a property of the
atom object in the slice corresponding to the tuple $S$, after deleting
the points fixed by that context.

\begin{thm}[Contextual orientability is sortwise orientability]
\label{thm:ctxsorts}
For finite $S\subseteq A$, $\mathrm{or}_G(S)$ holds if and only if every
freshness sort over $S$ carries a $\Fix_G(S)$-invariant tournament.
Moreover, $S\subseteq T$ and $\mathrm{or}_G(S)$ imply
$\mathrm{or}_G(T)$.
\end{thm}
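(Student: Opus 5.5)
The first assertion is a two-way implication, and I will prove each direction by passing through the freshness sorts of Definition~\ref{def:sorts}. Write $H=\Fix_G(S)$ throughout.

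\emph{Forward direction.} Suppose $H$ preserves a tournament $R$ on $A\setminus S$. Each freshness sort $\sigma$ is an $H$-invariant subset of $A\setminus S$. Hence the restriction $R\cap(\sigma\times\sigma)$ is $H$-invariant, and it stays asymmetric and total off the diagonal. So it is an $H$-invariant tournament on $\sigma$. Sorts with a single element need no separate treatment, since the empty relation is a tournament on a singleton.

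\emph{Backward direction.} Fix, for each sort $\sigma$, an $H$-invariant tournament $R_\sigma$. The remaining task is to orient pairs $(u,v)$ that lie in distinct sorts. The sorts are exactly the $H$-orbits on $A\setminus S$, and every element of $H$ maps each sort to itself. So any relation defined solely in terms of the sort labels of $u$ and $v$ is automatically $H$-invariant. Using external choice in the ambient ZFC metatheory (available as in Theorem~\ref{thm:orientcrit}), I will choose a linear order $\prec$ on the set of sorts. I then put
\[
R=\bigcup_\sigma R_\sigma\ \cup\ \{(u,v): \mathrm{sort}(u)\prec\mathrm{sort}(v)\}.
\]
Three checks remain, each routine:
\begin{itemize}
\item $R$ is asymmetric: within a sort this holds by asymmetry of $R_\sigma$, and across sorts because $\prec$ is strict.
\item $R$ is total off the diagonal: within a sort by totality of $R_\sigma$, and across sorts because $\prec$ is linear.
\item $R$ is $H$-invariant: each $R_\sigma$ is invariant, and the cross-sort part is invariant because $H$ fixes every sort setwise.
\end{itemize}
The main point needing care is exactly that $H$ fixes each sort setwise rather than permuting the sorts. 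This holds by definition, since the sorts are orbits of $H$ itself.

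\emph{Monotonicity.} Let $S\subseteq T$ and let $R$ witness $\mathrm{or}_G(S)$. Since $\Fix_G(T)\leq\Fix_G(S)$, the group $\Fix_G(T)$ preserves $R$. It also preserves the subset $A\setminus T\subseteq A\setminus S$, because it fixes $T$ pointwise and therefore maps the complement of $T$ to itself. So $R\cap\bigl((A\setminus T)\times(A\setminus T)\bigr)$ is a $\Fix_G(T)$-invariant tournament on $A\setminus T$. Asymmetry and totality pass to restrictions, and this gives $\mathrm{or}_G(T)$.

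I expect no genuine obstacle. The only subtle step is the use of choice in ordering the sorts, which is external and harmless because the sort labels are themselves $H$-invariant.
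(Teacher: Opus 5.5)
Your proposal is correct and follows essentially the same route as the paper: restrict the tournament to each sort, combine invariant tournaments on the sorts with an externally chosen order on the set of sorts (invariant because $\Fix_G(S)$ fixes each sort setwise), and obtain monotonicity from $\Fix_G(T)\leq\Fix_G(S)$ by restricting to $A\setminus T$. The only difference is that you use an arbitrary linear order on the sorts where the paper uses a well-order, and either suffices.
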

\begin{proof}
A tournament on $A\setminus S$ restricts to each freshness sort. Conversely,
orient each sort invariantly, externally well-order the set of sorts, and
orient pairs from different sorts according to that external order. Every
element of $\Fix_G(S)$ preserves each sort setwise, so the resulting
tournament is invariant. Monotonicity follows because
$\Fix_G(T)\leq\Fix_G(S)$ and restriction of a tournament on
$A\setminus S$ gives one on $A\setminus T$.
\end{proof}

\begin{thm}[Orientability threshold as support cost of pair choice]
\label{thm:pairchoice}
Let $S\subseteq A$ be finite. The following are equivalent:
\begin{enumerate}[label=\textup{(\roman*)}]
\item $[A]^2$ admits an $S$-supported choice function
$c:[A]^2\to A$ with $c(P)\in P$;
\item $\mathrm{or}_G(S)$.
\end{enumerate}
Consequently $[A]^2$ admits a finitely supported choice function if and
only if $\tau(G)<\infty$, and in that case $\tau(G)$ is the least
cardinality of a finite support of such a function.
\end{thm}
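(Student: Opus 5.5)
The plan is to prove the two implications directly, translating between an $S$-supported choice function and a $\Fix_G(S)$-invariant tournament on $A\setminus S$. Throughout I use that a function $c:[A]^2\to A$, viewed as its graph in $[A]^2\times A$ with the diagonal action, is supported by $S$ exactly when $c(\pi P)=\pi\, c(P)$ for all $\pi\in\Fix_G(S)$ and all $P\in[A]^2$. This is the same reading of support for functions used in the proof of Theorem~\ref{thm:material}(b).

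For (i)$\Rightarrow$(ii), given an $S$-supported choice function $c$, I define a relation on $A\setminus S$ by declaring $a\to b$ iff $a\neq b$ and $c(\{a,b\})=a$. Since $c$ picks exactly one element of each pair, this relation is asymmetric and total off the diagonal, so it is a tournament. For $\pi\in\Fix_G(S)$, the set $A\setminus S$ is $\pi$-stable. Moreover $c(\{\pi a,\pi b\})=\pi c(\{a,b\})=\pi a$ whenever $a\to b$, so $\pi a\to\pi b$ and the tournament is $\Fix_G(S)$-invariant. This gives $\mathrm{or}_G(S)$.

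For (ii)$\Rightarrow$(i), let $T$ be a $\Fix_G(S)$-invariant tournament on $A\setminus S$. I define $c$ by cases on how a pair $P$ meets $S$:
\begin{itemize}
\item If $P=\{a,b\}\subseteq A\setminus S$, put $c(P)=a$ when $(a,b)\in T$.
\item If $P=\{s,x\}$ with $s\in S$ and $x\notin S$, put $c(P)=s$.
\item If $P\subseteq S$, choose an element of $P$ arbitrarily, using a fixed external choice over the finitely many such pairs.
\end{itemize}
To verify $S$-support, fix $\pi\in\Fix_G(S)$; it maps each of the three classes of pairs to itself.
\begin{itemize}
\item On pairs inside $A\setminus S$, equivariance is the invariance of $T$.
\item On mixed pairs, $\pi\{s,x\}=\{s,\pi x\}$ and both values are $s=\pi s$.
\item On pairs inside $S$, $\pi$ fixes $P$ and its chosen element.
\end{itemize}
Hence $c(\pi P)=\pi c(P)$ for all $P$, so $c$ is $S$-supported.

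The main point requiring care is exactly this handling of pairs that meet $S$. One must check that the ad hoc choices there are compatible with $\Fix_G(S)$, which holds because $\Fix_G(S)$ fixes $S$ pointwise. Beyond that the argument is routine.

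For the consequence, $[A]^2$ has a finitely supported choice function iff some finite $S$ satisfies (i), iff some finite $S$ satisfies $\mathrm{or}_G(S)$, iff $\tau(G)<\infty$. Since the equivalence holds for each fixed $S$, the minimum of $|S|$ over supports of choice functions equals the minimum over contexts with $\mathrm{or}_G(S)$. That minimum is $\tau(G)$ by Definition~\ref{defi:ctxorient}.
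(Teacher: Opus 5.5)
Your proof is correct and follows essentially the same route as the paper. In both, the tournament is read off from the choice function on pairs of atoms outside $S$, and conversely the choice function is built casewise from the tournament, the element lying in $S$, and a fixed choice on pairs inside $S$. The paper makes that last choice via an external enumeration of $S$, picking the entry of smaller index, which is just one instance of your arbitrary fixed choice. Your final minimization step giving $\tau(G)$ is the same as the paper's.
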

\begin{proof}
If $c$ is $S$-supported, orient distinct $x,y\notin S$ by
$x\to y$ iff $c(\{x,y\})=x$. Equivariance under $\Fix_G(S)$ makes this
a $\Fix_G(S)$-invariant tournament.

Conversely, let $T$ be a $\Fix_G(S)$-invariant tournament on
$A\setminus S$ and enumerate $S$ externally as $s_1,\ldots,s_k$. For an
unordered pair choose the winner of $T$ when both entries lie outside
$S$, choose the entry in $S$ when exactly one lies there, and choose the
entry with smaller index when both lie in $S$. Every element of
$\Fix_G(S)$ fixes the enumeration pointwise and preserves $T$, so the
resulting choice function is $S$-supported. Minimizing over finite supports
gives the last assertion.
\end{proof}

For equality atoms, $\tau(\Sym(A))=\infty$ by the transposition argument of
Theorem~\ref{thm:material}(b). The standard reducts of the dense order give
a finite hierarchy of nontrivial thresholds. Let $C$, $B$ and $D$ denote,
respectively, the usual circular-order, betweenness and separation reducts
of $(\mathbb Q,<)$ \cite{Cameron76}.

\begin{thm}[Thresholds for the standard order reducts]\label{thm:threshold}
\[
\tau(\Aut(\mathbb Q,<))=0,\qquad
\tau(\Aut(\mathbb Q,C))=1,\qquad
\tau(\Aut(\mathbb Q,B))=2,\qquad
\tau(\Aut(\mathbb Q,D))=3,
\]
while $\tau(\Sym(\mathbb Q))=\infty$. Hence the threshold takes the
five values $0,1,2,3,\infty$ on the five closed reduct groups of the
dense order.
\end{thm}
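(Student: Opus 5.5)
For each group we establish an upper bound by exhibiting a context $S$ of the stated size with $\mathrm{or}_G(S)$. We establish the matching lower bound by showing that every smaller context leaves some freshness sort carrying no $\Fix_G(S)$-invariant tournament. Theorem~\ref{thm:ctxsorts} reduces both directions to sortwise questions. It also gives monotonicity, so for the lower bound it suffices to treat contexts of size exactly $\tau-1$. The obstruction we use throughout is Lemma~\ref{lem:swap} and Theorem~\ref{thm:orientcrit}(ii): a sort is non-orientable as soon as some element of $\Fix_G(S)$ swaps two of its points. The groups are the standard ones from \cite{Cameron76}. $\Aut(\mathbb Q,C)$ consists of the order-preserving bijections of the circle $\mathbb Q$ (rotations, i.e.\ cyclic cuts). $\Aut(\mathbb Q,B)$ adds reversals, and $\Aut(\mathbb Q,D)$ adds both.

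\textbf{Upper bounds.} For $<$, the order itself is invariant, so $\tau=0$. For $C$, fix a point $a$. An automorphism fixing $a$ preserves the linear order obtained by cutting the circle at $a$, and this order on $\mathbb Q\setminus\{a\}$ is the required tournament. For $B$, Theorem~\ref{thm:betwfail} and Proposition~\ref{prop:betwmeetfail} already show that $\Fix_G(\{a,b\})$ is order-preserving, so $<$ restricts to an invariant tournament. For $D$, three fixed points $a,b,c$ force the automorphism to preserve the induced cyclic orientation of $\{a,b,c\}$. Since $|S|\ge3$, the dihedral action restricted to $S$ is faithful on orientation, so the automorphism lies in the circular-order group. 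That group fixes $a$ and hence preserves the linear order cut at $a$. For $\Sym(\mathbb Q)$, $\tau=\infty$ by Theorem~\ref{thm:material}(b) together with Theorem~\ref{thm:pairchoice}.

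\textbf{Lower bounds.} Each lower bound needs a swapping element.
\begin{enumerate}
\item For $C$ with $S=\emptyset$, choose a rotation exchanging two points $x,y$. On the circle this is the map sending the arc $(x,y)$ onto the arc $(y,x)$, and it exists by homogeneity of the dense circular order.
\item For $B$ with $|S|=1$, take a reversal fixing $a$. It swaps some $x<a<y$.
\item For $D$ with $|S|=2$, say $S=\{a,b\}$, take a reflection of the circle fixing $a$ and $b$. It swaps two points lying in different arcs between $a$ and $b$. Such reflections exist because $\Aut(\mathbb Q,D)$ realizes every isomorphism of finite separation structures (ultrahomogeneity of the reduct). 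The required map is determined on $\{a,b,x,y\}$ with $x,y$ on opposite arcs, and it preserves $D$ there.
\end{enumerate}
In each case the two swapped points lie in $A\setminus S$, and the swapping element preserves the sort containing them, so that sort admits no invariant tournament.

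The main obstacle is verifying the existence of these reflections and rotations, i.e.\ checking that the relevant finite partial maps are partial isomorphisms of $C$, $B$ and $D$ and then extending them. I would invoke the ultrahomogeneity of each reduct, or alternatively construct the maps explicitly as piecewise-linear bijections of $\mathbb Q$. The final count of five values then follows from Cameron's classification of the closed reducts.
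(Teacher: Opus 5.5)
Your proof is correct. For $<$, $C$, $B$ and $\Sym(\mathbb Q)$ it is essentially the paper's argument. The paper also exhibits the orienting context directly and obtains each lower bound from a swapping element. For $C$ it phrases this as $2$-transitivity plus the rank-parity Corollary~\ref{cor:rankodd}, which is the same obstruction as your explicit swap of a two-point substructure.

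The genuine difference is the separation reduct.
\begin{itemize}
\item \emph{The paper's route.} It constructs nothing for $D$. It cites the fact that the point stabilizer of $\Aut(\mathbb Q,D)$, acting on $\mathbb Q\setminus\{a\}$, is the betweenness group (Junker--Ziegler). So every nonempty context for $D$ is one named point plus a betweenness context, and the betweenness threshold shifts by one. This gives both bounds at once.
\item \emph{Your route.} You argue directly. The upper bound comes from Cameron's description of $\Aut(\mathbb Q,D)$ as the circular-order preserving-or-reversing maps: a reversal cannot fix three points. The lower bound comes from a reflection fixing $a,b$, obtained by ultrahomogeneity of $(\mathbb Q,D)$ after checking that the four-point map on $a,x,b,y$ is a separation isomorphism.
\end{itemize}
Your route is more self-contained: it needs only Cameron's description and homogeneity of the reduct, not the point-stabilizer identification. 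The paper's reduction is shorter and explains conceptually why $\tau(\Aut(\mathbb Q,D))=\tau(\Aut(\mathbb Q,B))+1$.

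One small point to tighten. Not every order-reversing bijection fixing $a$ swaps a pair: for example $t\mapsto a-2(t-a)$ has square $t\mapsto a+4(t-a)$, so it has no $2$-cycles. In the betweenness lower bound you should therefore choose an involutive reversal such as $t\mapsto 2a-t$, which swaps $a-1$ and $a+1$.
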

\begin{proof}
The order $<$ itself gives threshold $0$. The circular-order group is
$2$-transitive, so it preserves no tournament without parameters by
Corollary~\ref{cor:rankodd}; fixing one point cuts the circle into a dense
linear order, giving threshold $1$.

For betweenness, an order-reversing automorphism fixing one point is an
involution that exchanges points of the complement, so one parameter does
not suffice. An order-reversing automorphism of $(\mathbb Q,<)$ has at
most one fixed point; hence the pointwise stabilizer of two distinct points
inside $\Aut(\mathbb Q,B)$ is order-preserving, and two parameters do
suffice.

For separation, fixing one point makes the induced structure on the
complement a betweenness structure (equivalently, the point stabilizer is
the corresponding betweenness group; see \cite[Lem.~2.4(d)]{JunkerZiegler08}
and the reduct classification of \cite{Cameron76}). The previous paragraph
therefore shifts the threshold by one, giving $3$. The value for the full symmetric group is
the pair-choice obstruction above.
\end{proof}

\begin{rem}[One mechanism, two interpretations]\label{rem:kinship}
Theorem~\ref{thm:pairchoice} identifies a categorical-looking asymmetry
with a material support cost. A context orients the fresh atoms exactly
when the same context supports a choice from every unordered pair. Thus
the classical Fraenkel--Mostowski transposition obstruction and the
categorical absence of an orientation are two forms of the same finite
dependence phenomenon.
\end{rem}

\begin{rem}[Limits of orientability]\label{rem:rado}
Orientability is deliberately coarse. Equality atoms, the random graph
and other free-amalgamation examples are all non-orientable, so this
invariant does not separate their action categories. The threshold is a pointed invariant built from the slices of Proposition~\ref{prop:slice};
whether comparable data can be recovered from the bare category without a
distinguished atom object belongs to the reconstruction problem of
Question~\ref{q:reconstruction}. Finer invariants can be obtained by
asking which finite relational structures occur invariantly on atoms and
their finite powers (Question~\ref{q:spectrum}).
\end{rem}

\section{Model-theoretic characterizations}\label{sec:modeltheory}

The witnesses in Theorems~\ref{thm:equivfail} and~\ref{thm:betwfail}
are not arbitrary: the meet law failed on an \emph{imaginary} (an
equivalence class definable from any of its members but from no
smaller real parameter set), and the join law failed on a relation
whose definition irreducibly couples parameters from both supports
through a relation symbol of arity $3$ (betweenness $B(a,x,b)$). The
results below make these correspondences precise in an important
$\omega$-categorical framework. We use weak elimination of
imaginaries in its standard finite-real-code form: every
$e\in M^{\mathrm{eq}}$ has a finite real tuple $\bar c$ with
$e\in\mathrm{dcl}^{\mathrm{eq}}(\bar c)$ and
$\bar c\in\mathrm{acl}^{\mathrm{eq}}(e)$.

\begin{prop}[Meet law and weak elimination of imaginaries]
\label{prop:weakEI}
Let $M$ be a countable $\omega$-categorical structure with degenerate
algebraic closure, and put $G=\Aut(M)$. Then the following are
equivalent:
\begin{enumerate}[label=\textup{(\roman*)}]
\item $M$ has weak elimination of imaginaries;
\item every finitely supported element of every $G$-set has a least
support;
\item the meet law holds at every arity;
\item the stabilizer-amalgamation property of
Definition~\ref{def:sap} holds (equivalently, so does its Galois form).
\end{enumerate}
\end{prop}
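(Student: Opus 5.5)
The plan is to reduce everything to Theorem~\ref{thm:meetchar}, which already gives (ii)$\Leftrightarrow$(iii)$\Leftrightarrow$(iv) for arbitrary $G$. The only new content is therefore (i)$\Leftrightarrow$(ii) (or (i)$\Leftrightarrow$(iv)). For this we use the standard dictionary for countable $\omega$-categorical $M$. By Ryll--Nardzewski and the Ahlbrandt--Ziegler/Evans description of $M^{\mathrm{eq}}$, imaginaries correspond to finitely supported elements of the orbits $G/U$ with $U$ an open subgroup. Moreover $e\in\mathrm{dcl}^{\mathrm{eq}}(\bar c)$ iff $\Fix_G(\bar c)\le\mathrm{Stab}_G(e)$, i.e.\ iff $\bar c$ supports $e$. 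Similarly $\bar c\in\mathrm{acl}^{\mathrm{eq}}(e)$ iff $\bar c$ has finite orbit under $\mathrm{Stab}_G(e)$. Since every finitely supported element of any $G$-set lies in an orbit isomorphic to some $G/U$ with $U$ open, it suffices to treat imaginaries.

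\emph{(i)$\Rightarrow$(ii).} Let $e$ be an imaginary, with a weak code $\bar c$ and $C=\mathrm{ran}(\bar c)$. Suppose $S$ is any finite support of $e$. Then $\mathrm{Stab}_G(e)\ge\Fix_G(S)$, so $C$ has finite orbit under $\Fix_G(S)$, and hence $C\subseteq\mathrm{acl}(S)=S$ by degenerate algebraic closure. Thus $C$ is a support of $e$ contained in every support, so it is least.

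\emph{(ii)$\Rightarrow$(i).} Let $e$ be an imaginary and let $L=\supp(e)$, enumerated as $\bar c$. Then $e\in\mathrm{dcl}^{\mathrm{eq}}(\bar c)$ by definition of support. Lemma~\ref{lem:leastequi} shows that every element of $\mathrm{Stab}_G(e)$ maps $L$ to itself setwise. Hence the $\mathrm{Stab}_G(e)$-orbit of $\bar c$ consists of enumerations of $L$ and has size at most $|L|!$, giving $\bar c\in\mathrm{acl}^{\mathrm{eq}}(e)$. This direction does not even use degenerate acl.

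The main obstacle is making the dictionary precise rather than the group argument itself. One must check that $\omega$-categoricity (via Ryll--Nardzewski and countability) makes $\mathrm{acl}^{\mathrm{eq}}$ and $\mathrm{dcl}^{\mathrm{eq}}$ coincide with finite-orbit and fixed-point conditions for the relevant stabilizers. One must also check that "finitely supported" elements of arbitrary $G$-sets reduce to open-subgroup cosets, i.e.\ to imaginaries; this uses that $G$ is closed and the $\Fix_G(S)$ form a basis, as in Corollary~\ref{cor:cont}. A subtle point arises in (i)$\Rightarrow$(ii), in the step ``$C$ has finite orbit under $\Fix_G(S)$'': there we need $\mathrm{Stab}_G(e)$-orbits to be finite, and a finite orbit under the smaller group $\Fix_G(S)$ then follows automatically. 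We would cite these standard facts (Hodges) rather than reprove them.
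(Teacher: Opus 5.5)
Your proposal is correct. The direction (ii)$\Rightarrow$(i) is exactly the paper's argument: by Lemma~\ref{lem:leastequi}, $\mathrm{Stab}_G(e)$ preserves the least support $L$ of an imaginary $e$ setwise, so an enumeration of $L$ has a finite $\mathrm{Stab}_G(e)$-orbit and is a weak code.

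For (i)$\Rightarrow$(ii) you use a genuinely different decomposition. The paper never proves (ii) directly. It proves the meet law (iii): it takes $R$ definable over $S$ and over $T$ and applies weak elimination to the canonical parameter $\ulcorner R\urcorner$. This forces the code into $\mathrm{acl}(S)\cap\mathrm{acl}(T)=S\cap T$. Theorem~\ref{thm:meetchar} then lifts the meet law from relations to arbitrary $G$-sets. You instead show in one step that the range of a weak code of $e$ is a support of $e$ and lies inside every finite support of $e$, so it is the least support. The core mechanism is the same: a weak code lies in the algebraic closure of any support, and degenerate $\mathrm{acl}$ collapses that closure to the support itself.

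Your route does need one extra dictionary fact that the paper avoids: every open subgroup $U\le G$ is the stabilizer of an imaginary, so that arbitrary finitely supported elements reduce to elements of $M^{\mathrm{eq}}$. You flag this as the main obstacle, but it is a one-liner. If $\Fix_G(\bar a)\le U$, define $g\bar a\sim h\bar a\iff gU=hU$, extended by equality off $G\cdot\bar a$. This is a $G$-invariant, hence $\emptyset$-definable, equivalence relation, and the class $\bar a/{\sim}$ has stabilizer exactly $U$. The encoding $R=V\cdot\bar a$ in the proof of Theorem~\ref{thm:meetchar} does the same work implicitly, since the canonical parameter of $V\cdot\bar a$ has stabilizer exactly $V$.

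In exchange, your argument gives a sharper conclusion: the least support of $e$ is the range of any weak code, so all weak codes of $e$ have the same range. The paper's route keeps the model theory confined to canonical parameters of definable relations. It also states the result literally as a lattice law, which is the form used in the rest of the paper.
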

\begin{proof}
Assume first that $M$ has weak elimination of imaginaries. By
Theorem~\ref{thm:meetchar}, the least-support property is equivalent
to the meet law at every arity, so it suffices to fix $n$ and finite
$S,T$ and show that any $R\subseteq M^n$ invariant under both
$\Fix_G(S)$ and $\Fix_G(T)$ is invariant under $\Fix_G(S\cap T)$.

By the Ryll--Nardzewski theorem (see, e.g., \cite{Hodges93}), over any finite parameter set
$U\subseteq M$ there are finitely many $n$-types, each isolated by a
formula over $U$; a subset of $M^n$ is $\Fix_G(U)$-invariant iff it is
a union of $U$-definable types, i.e.\ iff it is definable over $U$.
Thus our hypotheses say $R$ is definable over $S$ and definable over
$T$.

Let $e=\ulcorner R\urcorner\in M^{\mathrm{eq}}$ be a canonical
parameter of the definable set $R$. The standard canonical-parameter
property says that, for a real or imaginary parameter set $C$, the set
$R$ is definable over $C$ exactly when
$e\in\mathrm{dcl}^{\mathrm{eq}}(C)$.
Definability of $R$ over $S$ gives
$e\in\mathrm{dcl}^{\mathrm{eq}}(S)$, and over $T$ gives
$e\in\mathrm{dcl}^{\mathrm{eq}}(T)$. By weak elimination of
imaginaries there is a finite real tuple
$\bar c\in M^{<\omega}$ with
\[
e\in\mathrm{dcl}^{\mathrm{eq}}(\bar c)
\qquad\text{and}\qquad
\bar c\in\mathrm{acl}^{\mathrm{eq}}(e).
\]
From $e\in\mathrm{dcl}^{\mathrm{eq}}(S)$ and
$\bar c\in\mathrm{acl}^{\mathrm{eq}}(e)$ we get
$\bar c\in\mathrm{acl}^{\mathrm{eq}}(S)$; intersecting with the real
sort, $\bar c\in\mathrm{acl}(S)=S$ by degeneracy of algebraic closure.
Likewise $\bar c\in\mathrm{acl}(T)=T$. Hence the entries of $\bar c$
lie in $S\cap T$, and $e\in\mathrm{dcl}^{\mathrm{eq}}(\bar c)
\subseteq\mathrm{dcl}^{\mathrm{eq}}(S\cap T)$, so $R$ is definable
over $S\cap T$, i.e.\ $\Fix_G(S\cap T)$-invariant. (Only weak
elimination of imaginaries is used: we never need $e$ itself to be
interdefinable with a real tuple, only to be coded up to algebraic
closure, which is what lets the real ``shadow'' $\bar c$ be
forced into $S$ and into $T$ separately.)

Conversely, assume the meet law, hence the least-support property by
Theorem~\ref{thm:meetchar}, and let $e\in M^{\mathrm{eq}}$ be an
imaginary. The orbit $G\cdot e$ is a $G$-set all of whose elements
are finitely supported. Choose a finite real tuple
$\bar a$ with $e\in\mathrm{dcl}^{\mathrm{eq}}(\bar a)$; its entries
support $e$. Let~$S$ be the least support of $e$. Then
$e\in\mathrm{dcl}^{\mathrm{eq}}(S)$. For every
$g\in\mathrm{Stab}_G(e)$, the set $gS$ also supports~$e$; minimality
and finiteness imply $gS=S$. Hence the orbit of each $s\in S$ under
$\mathrm{Stab}_G(e)$ is finite, so, by the standard
stabilizer--orbit characterization of algebraicity in $M^{\mathrm{eq}}$
(see, e.g., \cite{Hodges93}),
$s\in\mathrm{acl}^{\mathrm{eq}}(e)$. Thus,
\[
 e\in\mathrm{dcl}^{\mathrm{eq}}(S)
 \qquad\text{and}\qquad
 S\subseteq\mathrm{acl}^{\mathrm{eq}}(e),
\]
which is weak elimination of imaginaries. This proves
(i)$\Leftrightarrow$(ii); Theorem~\ref{thm:meetchar} and
Corollary~\ref{cor:criterion} supply the equivalences with (iii) and
(iv).
\end{proof}

\begin{rem}[Provenance and role of the correspondence]\label{rem:weakEIcredit}
The least-support criterion is due to Boja\'nczyk, Klin and Lasota
\cite{BKL14}, and the background theory of weak elimination of imaginaries
is standard; see, for example, \cite{CasanovasFarre}. The contribution used
here is the relational packaging supplied by Theorem~\ref{thm:meetchar}: in the $\omega$-categorical no-algebraicity class, the known least-support condition, and hence weak elimination of imaginaries by Proposition~\ref{prop:weakEI}, becomes a \emph{lattice law} of the invariance hierarchy $S\mapsto B^{(n)}_{G,S}$. The independence results (Theorems~\ref{thm:equivfail},~\ref{thm:betwfail}, Proposition~\ref{prop:betwmeetfail}) then locate the meet and join laws in distinct model-theoretic phenomena. Under these correspondences the failure of the meet law for the generic equivalence relation (Theorem~\ref{thm:equivfail}) is exactly its well-known failure of weak elimination of imaginaries, its $E$-classes being the paradigmatic non-eliminable imaginaries.

Two structural points reinforce these correspondences. First, our
stabilizer amalgamation property (Definition~\ref{def:sap}),
$\Fix_G(S\cap T)=\langle\Fix_G(S)\cup\Fix_G(T)\rangle$, is precisely
the permutation-group formulation of weak elimination of imaginaries
used by Paolini \cite{Paolini24}, when finite parameter sets $S,T$ are replaced by finite
algebraically closed parameter sets; for $\Sym(A)$ and the other
no-algebraicity examples the two coincide, and
Theorem~\ref{thm:meetchar} is the assertion that this amalgamation
condition is equivalent to the meet law on supported~relations.
\end{rem}

We close the section by assembling the preceding correspondences into
structural correspondences and tests for the invariance hierarchy.
Their methodological role is the following: three features of the finitely supported universe $\FSS_G(M)$ (two lattice-theoretic, one categorical) each of which is formulated without reference to a chosen syntax for $M$, constrain a different model-theoretic feature of the symmetry.

\begin{rem}[Structural dictionary]\label{prop:dictionary}
Let $M$ be a countable $\omega$-categorical ultrahomogeneous relational
structure with degenerate algebraic closure and $G=\Aut(M)$. Then:
\begin{enumerate}[label=\textup{(\alph*)}]
\item \textup{(meet $\leftrightarrow$ imaginaries)} the following are
equivalent: $M$ has weak elimination of imaginaries; $G$ has the
least-support property; and the level map
$S\mapsto B^{(n)}_{G,S}$ preserves binary meets at every arity
\textup{(Theorem~\ref{thm:meetchar}, Proposition~\ref{prop:weakEI})}.
The generic equivalence relation illustrates their simultaneous
failure: its equivalence classes are non-eliminable imaginaries and
explicit witnesses to the failure of the meet law
\textup{(Theorem~\ref{thm:equivfail})};
\item \textup{(join and relational arity)} if the language of $M$
is binary, the level map preserves binary joins at every arity
\textup{(Theorem~\ref{thm:join})}; genuinely ternary coupling can
obstruct it, as witnessed by the generic $3$-uniform hypergraph and
the betweenness reduct
\textup{(Theorems~\ref{thm:hypergraph} and~\ref{thm:betwfail})};
\item \textup{(orientability $\leftrightarrow$ tournaments)} an atom
$G/U$ of $\FSS_G(M)$ admits an invariant orientation if and only if
its orbit carries a $G$-invariant tournament, equivalently
$gU\cap Ug^{-1}=\emptyset$ for all $g\notin U$
\textup{(Theorem~\ref{thm:orientcrit})}; free amalgamation with least
supports forbids this \textup{(Theorem~\ref{thm:freenonorient})},
while an invariant linear order or tournament produces it
\textup{(Theorem~\ref{thm:orientable})}.
\end{enumerate}
For the pure set, the random graph and the Henson graphs, the three tests give
\[
   (\text{meet},\text{join},\text{orientability})
   =(\checkmark,\checkmark,\times).
\]
These tests therefore do not separate the two free-amalgamation examples from the nominal universe. By contrast, $\Aut(\mathbb Q,<)$ and the automorphism group of the generic tournament are orientable and hence categorically separated from it.
\end{rem}

\begin{rem}[The method, and its ceiling]\label{rem:method}
The structural dictionary above relates structural and categorical
features of $\FSS_G(M)$ to the model theory of $M$. Within the $\omega$-categorical, degenerate-algebraic-closure framework, the meet law characterizes weak elimination of imaginaries.
Beyond that framework, the appropriate replacement by algebraically closed
supports remains open (Question~\ref{q:charmeet}). A separate
limitation is that non-orientability does not separate
the random graph from the pure set (Remark~\ref{rem:rado}). Completing this family of invariants (finding a family of invariants of $\FSS_G(M)$ that recovers $G$ up to the appropriate equivalence) is precisely the reconstruction and spectrum programme of
Questions~\ref{q:reconstruction} and~\ref{q:spectrum}. The contribution of this paper is to show that the finitely supported universe carries genuine structural information about the symmetry, fine enough to separate the ordered world from the nominal one outright.
\end{rem}

\section{Modelling guidance for computing with atoms}\label{sec:modelling}

The preceding sections developed the hierarchy as mathematics. We now
translate the results into modelling choices for computation over atoms, in
the sense of the automata-over-atoms programme \cite{BKL14} and nominal
semantics \cite{Pitts13}. This section introduces no new proof dependency:
each recommendation is a reading of a numbered result above, with its
hypotheses retained.
Throughout, a \emph{data symmetry} is a pair $(A,G)$ with $G\leq\Sym(A)$: the atoms $A$ model the values a program can store and compare, and $G$ models the structure on values that programs may observe.
The categorical finite-support semantics used here is compatible with the nominal-$G$-set viewpoint of \cite{BKL14}: both study sets with actions whose relevant elements have finite support. The FSS perspective used in this paper keeps that action semantics together with the material ZFA presentation and allows the atom cardinality and the symmetry parameter to be varied independently. For ordered, graph-structured, or session-partitioned alphabets, this common finite-support semantics makes the dependence on the observable data structure explicit before a particular automaton model is chosen. Nothing in this section is needed for the proofs elsewhere; conversely, each claim below is backed by a numbered result of the preceding~sections.

\subsection{Registers and canonical contents}
\label{subsec:registers}

In a register machine over a data symmetry, a configuration stores
finitely many values, and the semantics of a state depends only on
those values: semantically, a set of register contents is a
\emph{finite support} of the residual behaviour. The literature routinely refers to ``the'' registers of a configuration, tacitly presupposing a canonical minimal choice, that is a \emph{least} support. Boja\'nczyk, Klin and Lasota treat this least-support property for arbitrary data symmetries \cite[Def.~4.11, Fact~9.2, Thm.~9.3]{BKL14}. In the countable $\omega$-categorical no-algebraicity setting, Proposition~\ref{prop:weakEI} identifies it additionally with weak elimination of imaginaries.

Theorem~\ref{thm:meetchar} does not enlarge the scope of the BKL criterion; it places that criterion inside the invariance hierarchy. Thus canonical register contents exist exactly when the meet law holds for the hierarchy, equivalently when the associated stabilizer-amalgamation law holds. Corollary~\ref{cor:criterion}
presents the concrete group identity
$\Fix_G(S\cap T)=\langle\Fix_G(S)\cup\Fix_G(T)\rangle$.
Thus the criterion is a property of the symmetry alone, checkable
before any machine model is fixed. It licenses canonical register
contents and intersection-based support reasoning. Independence from
the choice of a fresh witness is a separate issue, governed by the
freshness sorts of Section~\ref{subsec:someany}.

\subsection{Session atoms and the failure of canonical registers}
\label{subsec:sessions}

Let $(A,E)$ be the generic equivalence relation, namely, infinitely many classes, all of the same infinite cardinality
(Theorem~\ref{thm:equivfail} and Remark~\ref{rem:whyequinum}), and
$G=\Aut(A,E)$. This data symmetry models
values that carry a \emph{session}, shard, or group identity which
programs can test (``same session?'')\ but not name: database values
under \textsf{group-by}, protocol messages under session
identifiers, keys under consistent hashing.

Theorem~\ref{thm:equivfail} shows that the meet law fails already for
unary relations. The witness is the session itself. If $aEb$, then the
class $[a]_E$ is supported by $\{a\}$ and by $\{b\}$, but not by their
intersection $\emptyset$. A residual behaviour that depends only on the
session therefore has several incomparable minimal supports and no least one.

This failure does not invalidate the abstract Myhill--Nerode theorem or
deterministic minimization: the relevant results of Boja\'nczyk--Klin--Lasota
\cite[Thm.~3.8 and Thm.~5.2]{BKL14} do not assume least supports. What is
lost is a canonical support-based representation of a state or residual
behaviour. An implementation must either retain a representative and prove
independence from that choice, or change the presented symmetry, for example
by adding a sort of session classes. The hierarchy does not prescribe which
remedy to use; it identifies the boundary between the regime with canonical
supports and the regime without them.

\subsection{Ordered and equality atoms have inequivalent semantic universes}
\label{subsec:ordered}

Timestamps, priorities, versions: the data symmetry is
$\Aut(\mathbb Q,<)$. The order itself is an invariant tournament on
the atom orbit, so the universe is orientable
(Theorem~\ref{thm:orientable}), and
Corollary~\ref{cor:noneq} applies: $\FSS_{\Aut(\mathbb Q,<)}(\mathbb
Q)$ is not equivalent to the standard nominal-set category $\Nom$, nor to any
pure equality-atom universe $\FSS_{\Sym(B)}(B)$.

Corollary~\ref{cor:noneq} rules out a lossless and essentially
surjective translation at the level of the entire finite-support semantic
universe: an equivalence would preserve the existence of a properly
orientable atom, while equality-atom universes have none. This does not rule
out faithful encodings of selected objects, simulations of particular
automaton models, or translations that deliberately forget or add structure.
Expressiveness separations between ordered-atom and equality-atom machines
are known at the automata level \cite{BKL14}; the conclusion here is
categorical inequivalence, not non-encodability in every computational
sense.
The converse does not follow: the invariant does not separate the random-graph universe from the nominal one (Remark~\ref{rem:rado}), so graph-structured data might still admit a lossless categorical reduction to pure names; Question~\ref{q:spectrum} makes the needed finer invariants~precise.

The two remaining symmetries of our four-quadrant analysis are settled on their atom orbits by an elementary swap argument, which we record because it informs the model-selection criteria below.

\begin{rem}[Swap obstruction on the atom orbit]
\label{rem:swaporbit}
Let $G\leq\Sym(A)$ act transitively on~$A$, and suppose that for some
$a\neq b$ in $A$ the transposition-like situation occurs: some $g\in
G$ satisfies $g(a)=b$, $g(b)=a$. Then no $G$-invariant tournament
exists on $A$: invariance would give $(a,b)\in O\Leftrightarrow
(b,a)\in O$, contradicting totality and antisymmetry on the pair.
For the session symmetry $\Aut(A,E)$, the transposition of two
same-session values (fixing everything else) is such a $g$; for the
betweenness symmetry $\Aut(\mathbb Q,B)$, the reflection $x\mapsto
a+b-x$ is. Hence neither atom orbit is orientable. (This settles
only the atom orbit; orientability of other atoms of these universes
is not decided here, and the inequivalence Corollary~\ref{cor:noneq}
requires an orientable atom somewhere.)
\end{rem}

\subsection{Generating fresh values and counting freshness sorts}
\label{subsec:freshmodelling}

The step ``pick a fresh name; nothing below depends on which one'' is
the most frequently invoked inference step in nominal reasoning, and it is the one a symbolic implementation must support directly.
Theorem~\ref{thm:someanychar} says that its unrestricted form is
available, among closed data symmetries, exactly for pure names: as soon
as a proper closed symmetry makes additional structure observable, some
finite context admits distinguishable values outside its registers, and
``fresh'' ceases to be a single notion.

What takes its place is stated by Proposition~\ref{prop:someany} and is
already the practice of the field. Over ordered atoms one must generate
a fresh value \emph{in a specified interval} determined by the
registers; over graph atoms, a fresh vertex \emph{with a specified
adjacency pattern} to the registers; over session atoms, a value either
in one of the registered sessions or in a new one. The freshness sorts
of Definition~\ref{def:sorts} are exactly these cases, and
Proposition~\ref{prop:sortcounts} counts them.
At the level of supported unary predicates the sorts are exactly the
atomic cases that a complete symbolic treatment must distinguish, and
this can be said precisely.

\begin{prop}[The algebra of freshness cases]\label{prop:sortalgebra}
Let $G\leq\Sym(A)$ and let $S\subseteq A$ be finite. Write
$\Sigma_S$ for the set of freshness sorts over $S$. Restriction
$\varphi\mapsto\varphi\smallsetminus S$ maps $B^{(1)}_{G,S}$ onto
the Boolean algebra of unions of members of $\Sigma_S$, and the atoms
of that algebra are precisely the sorts in $\Sigma_S$. If $G$ is
locally oligomorphic, then
$m:=|\Sigma_S|=\mathrm{sorts}_G(S)<\infty$ and the algebra has
$2^{m}$ elements.
\end{prop}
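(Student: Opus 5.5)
The plan is to reduce everything to Lemma~\ref{lem:basic}(iv) applied to the coordinatewise action of $\Fix_G(S)$ on $A$, exactly as in the proof of Proposition~\ref{prop:someany}. First, each $s\in S$ is fixed by $\Fix_G(S)$, so $\{s\}$ is its own orbit. Hence the $\Fix_G(S)$-orbits on $A$ are the singletons $\{s\}$ ($s\in S$) together with the members of $\Sigma_S$, which partition $A\smallsetminus S$. By Lemma~\ref{lem:basic}(iv) and its converse (a union of $\Fix_G(S)$-orbits is $\Fix_G(S)$-invariant), $B^{(1)}_{G,S}$ is exactly the set of unions of these orbits. This step is immediate.

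Next I will show that the restriction map $\varphi\mapsto\varphi\smallsetminus S$ has the right image. It is well defined into the unions of sorts: if $\varphi$ is a union of orbits, then removing the singletons from $S$ leaves a union of members of $\Sigma_S$. It is surjective: given any union $U$ of sorts, $U$ is itself a union of $\Fix_G(S)$-orbits, so $U\in B^{(1)}_{G,S}$ and $U\smallsetminus S=U$. The map preserves finite unions and intersections. It also preserves complements relative to $A\smallsetminus S$, since $(A\smallsetminus\varphi)\smallsetminus S=(A\smallsetminus S)\smallsetminus(\varphi\smallsetminus S)$. So it is a surjective Boolean homomorphism onto the algebra of unions of sorts, whose top element is $A\smallsetminus S$. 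I will note that this target is a Boolean algebra because it is closed under arbitrary unions and under relative complement; the latter holds because the sorts partition $A\smallsetminus S$.

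For the atoms: the sorts are pairwise disjoint and nonempty. So a nonempty union of sorts is minimal exactly when it consists of a single sort, and every nonempty element lies above a sort. This makes the algebra atomic, with atoms precisely the members of $\Sigma_S$; indeed it is isomorphic to $\mathcal P(\Sigma_S)$ via $U\mapsto\{\sigma\in\Sigma_S:\sigma\subseteq U\}$. This bijection is well defined because each sort is either contained in $U$ or disjoint from it.

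Finally, under local oligomorphicity $\Fix_G(S)$ has finitely many orbits on $A=A^1$, so $\Sigma_S$ is finite, and $m=\mathrm{sorts}_G(S)$ by Definition~\ref{def:sorts}. The isomorphism with $\mathcal P(\Sigma_S)$ then gives $2^m$ elements. Consistently with Lemma~\ref{lem:atomic}, the kernel of the restriction map is the algebra generated by the $|S|$ fixed singletons. I see no genuine obstacle. The only point needing care is verifying that relative complement is preserved, so that the map is a Boolean homomorphism onto an algebra whose top is $A\smallsetminus S$ rather than $A$. I will make that explicit rather than merely claiming that the map preserves the lattice operations.
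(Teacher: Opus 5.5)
Your proposal is correct and follows essentially the same route as the paper. In both, $S$-supported unary relations are unions of $\Fix_G(S)$-orbits (Lemma~\ref{lem:basic}(iv), as in Proposition~\ref{prop:someany}); unions of sorts are themselves $S$-supported, which gives surjectivity; restriction to $A\smallsetminus S$ is a Boolean homomorphism whose image has the sorts as atoms; and local oligomorphicity makes $\Sigma_S$ finite, giving $2^m$ elements.

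One small slip in your aside: the kernel of the restriction map is the ideal $\mathcal P(S)$ of all subsets of $S$, i.e.\ the ideal generated by the fixed singletons. It is not the subalgebra they generate, which also contains sets such as $A\smallsetminus\{s\}$. This plays no role in the argument.
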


\begin{proof}
By Proposition~\ref{prop:someany} the set $\varphi\smallsetminus S$ is a
union of sorts for every $\varphi\in B^{(1)}_{G,S}$; conversely every
union of sorts is $\Fix_G(S)$-invariant, hence is the restriction of a
member of $B^{(1)}_{G,S}$, so the map is onto. It is a Boolean
homomorphism because restriction to the complement of a fixed set
preserves the operations. Its image has the sorts as atoms, since the
sorts are the $\Fix_G(S)$-orbits on $A\smallsetminus S$ and are
therefore nonempty, pairwise disjoint and minimal among unions of
sorts. If $G$ is locally oligomorphic, then $\Sigma_S$ is finite,
since its members are among the finitely many $\Fix_G(S)$-orbits on
$A^1$; the Boolean algebra of all unions of its $m$ atoms consequently
has $2^m$ elements.
\end{proof}

Proposition~\ref{prop:sortcounts} shows that the number of atoms is $1$ for pure names, $k+1$ for the dense order, $2^{k}$ for the random graph, and
$2^{\binom k2}$ for the generic $3$-hypergraph, at a context of size~$k$. For the generic equivalence relation it is $m+1$ when the
context meets $m$ classes, with maximum $k+1$; the betweenness reduct needs
one case for $k\leq1$ and $k+1$ thereafter. Thus the number of atomic
freshness cases grows linearly with the context over ordered data, as
$2^k$ for graph data, and as
$2^{\binom{k}{2}}=2^{\Theta(k^2)}$ for the generic $3$-hypergraph.
These are semantic counts, not implementation-independent lower bounds on
an algorithm. Within each sort, the classical some/any step remains
sound.

\subsection{State spaces: separating finiteness conditions}
\label{subsec:statespaces}

Automata over atoms take orbit-finite sets as state spaces
\cite{BKL14}. Theorem~\ref{thm:orbitfinite} decomposes that hypothesis
into two independent requirements: uniform finiteness of the context
levels, which says that each register assignment sees only finitely
many states, and a bound on support
size, which says that the number of registers required does not grow
without limit and is the resource constraint imposed by a fixed machine.

The two can be separated. The disjoint union of the sets of injective
$n$-tuples over equality atoms is uniform finite with unbounded
supports: every fixed register assignment sees finitely many states,
but no finite bound on the number of registers suffices for the whole
space. Such a space models computations that may allocate arbitrarily
many names while exposing only finitely many states at each fixed
context.

In the bounded case the proof of Theorem~\ref{thm:orbitfinite} gives a
presentation by finite data, which is the form in which an orbit-finite
state space is normally handled.

\begin{cor}[Presentation by context levels]\label{cor:presentation}
Let $G$ be locally oligomorphic with least supports, let $X$ be a
nonempty orbit-finite object, and put
$k=\max\{|\supp(x)|:x\in X\}$. Let
$T_1,\dots,T_r$ represent the finitely many $G$-orbits of subsets of
$A$ of size at most $k$. Then each~$X^{T_j}$ is finite and
\[
   X=\bigcup_{j\leq r}G\cdot X^{T_j},
   \qquad
   |X/G|\ \leq\ \sum_{j\leq r}\bigl|X^{T_j}\bigr|.
\]
Moreover $k$ is the optimal uniform support bound: every state is
supported by at most $k$ atoms, and no smaller uniform bound suffices.
In register terminology, this is the precise semantic sense in which
$k$ registers suffice.
\end{cor}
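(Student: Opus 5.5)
The statement is essentially a repackaging of Theorem~\ref{thm:orbitfinite}, so I would argue directly from its two parts together with Lemma~\ref{lem:leastequi}. First, since $X$ is nonempty and orbit-finite, Theorem~\ref{thm:orbitfinite}(1) shows that $X$ is uniform finite and that the support sizes are bounded. Hence $k=\max\{|\supp(x)|:x\in X\}$ is a well-defined natural number: the supremum is finite, and it is attained because the values are integers. Uniform finiteness then gives directly that each $X^{T_j}$ is finite.

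Second, I would establish the covering identity exactly as in the proof of Theorem~\ref{thm:orbitfinite}(2). Local oligomorphicity with $S=\emptyset$ gives finitely many $G$-orbits on $A^j$ for $j\le k$. Sending an enumerating tuple to its range, we get finitely many orbits of subsets of size $\le k$, which justifies the representatives $T_1,\dots,T_r$. For $x\in X$, the set $\supp(x)$ has size $\le k$, so $\supp(x)=\pi(T_j)$ for some $\pi\in G$ and some $j$. By Lemma~\ref{lem:leastequi}, $\supp(\pi^{-1}x)=T_j$, so $\pi^{-1}x\in X^{T_j}$ and $x\in G\cdot X^{T_j}$. The reverse inclusion is trivial, since $X$ is $G$-invariant.

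The orbit count follows from this identity. Every orbit meets $\bigcup_j X^{T_j}$, so the map $\bigcup_j X^{T_j}\to X/G$ sending an element to its orbit is surjective. Therefore $|X/G|\le\sum_j|X^{T_j}|$.

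Third, optimality of $k$. By definition every state has least support of size $\le k$, and hence is supported by at most $k$ atoms. If some $k'<k$ were a uniform bound, there would be, for each $x$, a support of size $\le k'$. Choose $x$ attaining $k$. Any support of $x$ contains $\supp(x)$, which has $k$ elements, and this is a contradiction.

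The only delicate point is the phrase ``supported by at most $k$ atoms''. It should be read through least supports, since under the least-support property any support of $x$ contains $\supp(x)$. I expect this minimality step to be the one place requiring care; everything else is already contained in Theorem~\ref{thm:orbitfinite}.
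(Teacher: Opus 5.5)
Your proposal is correct and follows essentially the same route as the paper. Finiteness of the levels and the support bound come from Theorem~\ref{thm:orbitfinite}(1), and the cover comes from the argument of Theorem~\ref{thm:orbitfinite}(2) via Lemma~\ref{lem:leastequi}; the orbit count uses surjectivity of $\bigcup_j X^{T_j}\to X/G$, and optimality uses an element whose least support has exactly $k$ elements, which is how the paper's one-line argument reads ``supported by at most $k$ atoms'' as well.
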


\begin{proof}
Finiteness of the levels is Theorem~\ref{thm:orbitfinite}(1) and the
displayed cover is established in the proof of
Theorem~\ref{thm:orbitfinite}(2); the orbit bound follows because each
element of a level contributes at most one orbit. Every state is
supported by a set of size at most $k$, and by the definition of $k$
some state has least support of size exactly $k$, so no smaller bound
holds.
\end{proof}

The two numbers $r$ and $\max_j|X^{T_j}|$ are the natural measures of
such a presentation: $r$ counts the shapes of register assignment that
must be distinguished and $|X^{T_j}|$ the states visible from one of
them. Both are finite exactly under the hypotheses of
Theorem~\ref{thm:orbitfinite}, which is the sense in which that theorem
separates a structural condition from a resource constraint. The
corollary is not by itself an effective presentation: computability of
the action and of orbit representatives is additional data. Dropping
the support bound raises the following question.

\begin{qu}[Automata over uniform finite state spaces]
\label{q:unifautomata}
Develop the automata theory of uniform finite state spaces of
unbounded support over a locally oligomorphic symmetry. Which of the
decidability results for orbit-finite automata survive, under what
effective presentation of the levels $X^{S}$, and what is the correct
notion of determinization when the support bound is replaced by a
per-context finiteness condition?
\end{qu}

\subsection{The size of the data reservoir}
\label{subsec:reservoir}

For pure equality atoms, changing the infinite cardinality of the atom
reservoir does not change the abstract action topos: by
Corollary~\ref{thm:regime}, every $\FSS_{\Sym(A)}(A)$ is equivalent to the
Schanuel topos. Consequently statements formulated purely in the internal
categorical language transfer between these presentations. This does not
identify their chosen atom objects, underlying-set functors, or external
cardinalities, and it says nothing about replacing an infinite data domain by
a finite implementation domain. The equivalence is structural, not an
effectiveness theorem. Conversely, Remark~\ref{rem:reconfail} shows that
the bare topos need not determine the presenting symmetry group.

The qualification is material (Theorem~\ref{thm:material}). Cardinality
separates the universes as models of set theory with atoms:
$|\Pfs(A)|=|A|<2^{|A|}$. The failure of choice is nevertheless
\emph{uniform}: the set of unordered pairs of atoms admits no finitely
supported choice function over any infinite $A$, by one and the same
argument. For computation this uniformity matters, because the failure of
choice is not merely a foundational curiosity but a working constraint:
orbit-finitely spanned vector spaces need not admit equivariant
bases~\cite{BFKM24}, while orbit-finite linear programming requires finite
representations adapted to the group action~\cite{GHL23}. The
pair-choice obstruction itself is therefore not an artifact of
countability. Transferring particular algorithmic representations to
uncountable reservoirs would require additional effectiveness
assumptions and is not claimed here.

\subsection{The model-selection criteria}\label{subsec:checklist}

Table~\ref{tab:checklist} summarizes model-selection criteria for the canonical atom object and its invariance hierarchy. The LS and Join columns are structural properties of that hierarchy. The Tournament column indicates an invariant tournament on the canonical atom orbit; for the free-amalgamation rows, non-orientability is stronger and holds for every atom of the universe. Thus the table is a guide to selecting a presented data symmetry, while categorical separation uses the existence of an orientable atom anywhere in the universe.

\begin{table}[htbp]
\centering
\caption{LS denotes least supports, Join the join law, Sorts the maximum
number of freshness sorts among contexts of size $k$
(Proposition~\ref{prop:sortcounts}), and Tournament an invariant
tournament on the canonical atom orbit. In the three free-amalgamation
rows, ``no'' holds for every atom of the universe; for $\Aut(A,E)$ and
$\Aut(\mathbb Q,B)$ it is established only for the canonical atom orbit.
A sort count of $1$ at \emph{every} context occurs only in the first
row, by Theorem~\ref{thm:someanychar}; the last row has one sort for
$k\leq1$ only.}
\label{tab:checklist}
\footnotesize
\begin{tabular}{@{}>{\raggedright\arraybackslash}p{2.55cm}>{\raggedright\arraybackslash}p{2.55cm}>{\centering\arraybackslash}p{0.75cm}>{\centering\arraybackslash}p{0.85cm}>{\centering\arraybackslash}p{2.15cm}>{\centering\arraybackslash}p{2.0cm}@{}}
\toprule
Symmetry & Models & LS & Join & Sorts & Tournament\\
\midrule
$\Sym(A)$ & pure names & \checkmark & \checkmark & $1$ & no\\
$\Aut(\mathbb Q,<)$ & timestamps, priorities & \checkmark & \checkmark & $k+1$ & yes\\
$\Aut(\mathbb G)$ & linked data & \checkmark & \checkmark & $2^{k}$ & no\\
$\Aut(A,E)$ & sessions, shards & $\times$ & \checkmark & $k+1$ & no\\
$\Aut(\mathcal H_3)$ & ternary linked data & \checkmark & $\times$ & $2^{\binom k2}$ & no\\
$\Aut(\mathbb Q,B)$ & unoriented order data & $\times$ & $\times$ & $1$ ($k\leq1$); $k+1$ ($k\geq2$) & no\\
\bottomrule
\end{tabular}
\end{table}

LS decides whether canonical register contents and least-support reasoning are available; Join decides whether invariance decomposes across unions of register sets; Sorts counts the kinds of fresh value an implementation must be able to generate, and equals~$1$ at every context only in the nominal setting; and a tournament on the canonical atom orbit certifies categorical separation from every nominal universe. Absence of such a tournament is inconclusive and motivates the finer spectrum of Question~\ref{q:spectrum}. The atom cardinality appears in no column: it affects the material presentation, not this structural profile. The table therefore separates two modelling decisions: choose $|A|$ for the intended scale of the data domain, and choose $G$ for the structure programs may observe.

\section{Open problems and boundaries}\label{sec:problems}

The results highlight some gaps in the current structural framework.

\begin{qu}[Meet law beyond trivial algebraic closure]
\label{q:charmeet}
For closed oligomorphic $G=\Aut(M)$ with degenerate algebraic closure,
Proposition~\ref{prop:weakEI} identifies the meet law with weak
elimination of imaginaries. Develop the corresponding theory when
$\operatorname{acl}$ is nontrivial. In an $\omega$-categorical
structure, $\operatorname{acl}(S)$ is finite for finite $S$; one
natural possibility is therefore to index contexts by finite
algebraically closed sets. Proposition~\ref{prop:bklpathologies}(a), the diagonal symmetry of \cite[Example~9.1]{BKL14}, provides a concrete closed oligomorphic test case with nontrivial stabilizer-definable (hence algebraic) closure in which the ordinary meet law already fails at arity~$1$. Determine whether indexing by finite algebraically closed contexts restores an exact
lattice and stabilizer criterion. In the same framework, characterize
the join law in model-theoretic terms, especially for ternary
ultrahomogeneous structures.
\end{qu}

\begin{qu}[Pointed reconstruction and uncountable relation theory]
\label{q:reconstruction}
For closed oligomorphic $G,H\leq\Sym(\omega)$, does an equivalence
$\FSS_G(\omega)\simeq\FSS_H(\omega)$ of the bare atomic toposes
necessarily transport the canonical point and hence induce a
topological isomorphism $G\cong H$?  For countable
$\omega$-categorical structures, the corresponding pointed
reconstruction is governed by Ahlbrandt--Ziegler bi-interpretability
\cite{AZ86}. The point is essential in the
representation theory of atomic and Galois topoi
\cite{Dubuc04}; coarse-group and weak-elimination methods give
substantial reconstruction information on the group side
\cite{Paolini24}, but do not by themselves show
that an arbitrary topos equivalence preserves the point.

A related cardinality-free problem is to characterize intrinsically the
families of finitary relation algebras that arise as
$\Inv_\bullet(G)$ for closed $G\leq\Sym(A)$ when $A$ is
uncountable. Such a descriptive theory would clarify which parts of
small-index and reconstruction methods survive beyond the Polish
setting and would give an intrinsic account of the invariance hierarchy
at every atom cardinality.
\end{qu}

\begin{qu}[Invariant-structure spectrum]\label{q:spectrum}
Orientability is the first entry in a hierarchy of equivalence
invariants of $\FSS_G(A)$. For a finite relational signature
$\lambda$, ask whether an atom carries an invariant
$\lambda$-structure of a prescribed isomorphism type; this can be
expressed through subobject lattices of finite powers. Develop this
spectrum, compute it for standard homogeneous data symmetries, and
determine how much of the symmetry parameter it reconstructs. A first
target is to separate
$\FSS_{\Aut(\mathbb G)}(\mathbb G)$, for the random graph
$\mathbb G$, from $\Nom$ by an invariant on atoms and their finite
powers.
\end{qu}

\begin{qu}[Effective and quantitative context composition]\label{q:effectivejoin}
Suppose a locally oligomorphic data symmetry is given together with an
effective finite representation of stabilizer orbits. When can the
$S\cup T$-level of the invariance hierarchy be computed compositionally
from the $S$- and $T$-levels, and what is the complexity of deciding such
decomposability for a represented supported relation?  Theorem~\ref{thm:join}
settles the structural equality for binary ultrahomogeneous languages,
whereas Theorems~\ref{thm:hypergraph} and~\ref{thm:betwfail} show that
higher-arity or reduct structure can create genuinely mixed information.
Develop quantitative measures of that mixed-context information and relate
them to the size and complexity of symbolic representations used by
automata over structured data.
\end{qu}

\section{Conclusion}\label{sec:conclusion}

The finite-dependence profile refines the yes/no support condition by
indicating what becomes observable at each finite context. For a finitely supported $G$-set this profile is $S\mapsto X^S$, and for relations on the
atom object it becomes the invariance hierarchy
$S\mapsto B^{(n)}_{G,S}$. Its value is not that it collapses nominal
principles to one condition, but that it places distinct principles in a
common language.

The meet law is the relational form of the established BKL
least-support criterion, whereas level injectivity is
fungibility. The join law is independent and measures whether observations
over a union of contexts are generated by observations over its parts.
Freshness remains uniform within stabilizer orbits, but its unrestricted
some/any form characterizes full equality symmetry among closed groups. For
symmetries with least supports, orbit-finiteness implies finite visibility at
every context and a uniform support bound; under local oligomorphicity these
conditions are sufficient as well. Contextual orientability supplies a
different, pointed kind of information, and its threshold is exactly the
least support size of a choice function on unordered pairs.

The hierarchy also clarifies what categorical semantics can forget.
Replacing the acting group by its pointwise closure does not change the
finite-support action category, and suitable ultrahomogeneous presentations
with the same age yield equivalent action topoi. Such equivalences need not
preserve a chosen atom object or its external cardinality. Material FSS
universes retain that information, so categorical and material comparisons
answer different questions.

These results do not constitute a reconstruction theorem. Some invariants
are pointed, the present tests do not separate all standard homogeneous data
symmetries, and the model-theoretic characterization of the meet law still
requires countable $\omega$-categoricity and degenerate algebraic closure.
The next problems are therefore concrete: identify the appropriate contexts
when algebraic closure is nontrivial, quantify mixed-context information
beyond the binary join theorem, and develop categorical invariants finer than orientability.

\bibliographystyle{alphaurl}

\begin{thebibliography}{BFKM24}

\bibitem[AZ86]{AZ86}
Gisela Ahlbrandt and Martin Ziegler.
\newblock Quasi finitely axiomatizable totally categorical theories.
\newblock {\em Annals of Pure and Applied Logic}, 30(1):63--82, 1986.
\newblock \href {https://doi.org/10.1016/0168-0072(86)90037-0}
  {\path{doi:10.1016/0168-0072(86)90037-0}}.

\bibitem[AC20]{AC20}
Andrei Alexandru and Gabriel Ciobanu.
\newblock {\em Foundations of Finitely Supported Structures: A Set Theoretical
  Viewpoint}.
\newblock Springer, 2020.
\newblock \href {https://doi.org/10.1007/978-3-030-52962-8}
  {\path{doi:10.1007/978-3-030-52962-8}}.

\bibitem[AC22]{AC22}
Andrei Alexandru and Gabriel Ciobanu.
\newblock Various forms of infinity for finitely supported structures.
\newblock {\em Archive for Mathematical Logic}, 61:173--222, 2022.
\newblock \href {https://doi.org/10.1007/s00153-021-00787-2}
  {\path{doi:10.1007/s00153-021-00787-2}}.

\bibitem[BFKM24]{BFKM24}
Miko{\l}aj Boja\'nczyk, Joanna Fijalkow, Bartek Klin, and Joshua Moerman. Orbit-finite-dimensional vector spaces and weighted register
 automata. {\em TheoretiCS}, 3:article 13, 1--41, 2024.
\newblock \href {https://doi.org/10.46298/theoretics.24.13}
  {\path{doi:10.46298/theoretics.24.13}}.

\bibitem[BKL14]{BKL14}
Miko{\l}aj Boja\'nczyk, Bartek Klin, and S{\l}awomir Lasota.
\newblock Automata theory in nominal sets.
\newblock {\em Logical Methods in Computer Science}, 10(3):1--44, 2014.
\newblock \href {https://doi.org/10.2168/LMCS-10(3:4)2014}
  {\path{doi:10.2168/LMCS-10(3:4)2014}}.

\bibitem[Cam76]{Cameron76}
Peter~J. Cameron.
\newblock Transitivity of permutation groups on unordered sets.
\newblock {\em Mathematische Zeitschrift}, 148:127--139, 1976.
\newblock \href {https://doi.org/10.1007/BF01214702}
  {\path{doi:10.1007/BF01214702}}.

\bibitem[Cam99]{Cameron99}
Peter~J. Cameron.
\newblock {\em Permutation Groups}, volume~45 of {\em London Mathematical
  Society Student Texts}.
\newblock Cambridge University Press, 1999.
\newblock \href {https://doi.org/10.1017/CBO9780511623677}
  {\path{doi:10.1017/CBO9780511623677}}.

\bibitem[Car16]{Caramello16}
Olivia Caramello.
\newblock Topological {G}alois theory.
\newblock {\em Advances in Mathematics}, 291:646--695, 2016.
\newblock \href {https://doi.org/10.1016/j.aim.2015.11.050}
  {\path{doi:10.1016/j.aim.2015.11.050}}.

\bibitem[CF04]{CasanovasFarre}
Enrique Casanovas and Rafel Farr\'e.
\newblock Weak forms of elimination of imaginaries.
\newblock {\em Mathematical Logic Quarterly}, 50(2):126--140, 2004.
\newblock \href {https://doi.org/10.1002/malq.200310083}
  {\path{doi:10.1002/malq.200310083}}.

\bibitem[Con17]{Conant17}
Gabriel Conant.
\newblock An axiomatic approach to free amalgamation.
\newblock {\em The Journal of Symbolic Logic}, 82(2):648--671, 2017.
\newblock \href {https://doi.org/10.1017/jsl.2016.42}
  {\path{doi:10.1017/jsl.2016.42}}.

\bibitem[Dub04]{Dubuc04}
Eduardo~J. Dubuc.
\newblock On the representation theory of {G}alois and atomic topoi.
\newblock {\em Journal of Pure and Applied Algebra}, 186:233--275, 2004.
\newblock \href {https://doi.org/10.1016/S0022-4049(03)00141-5}
  {\path{doi:10.1016/S0022-4049(03)00141-5}}.

\bibitem[GHL25]{GHL23}
Arka Ghosh, Piotr Hofman, and S{\l}awomir Lasota.
\newblock Orbit-finite linear programming.
\newblock {\em Journal of the ACM}, 72(1):1:1--1:39, 2025.
\newblock Conference version: Proc.\ 38th Annual ACM/IEEE Symposium on Logic in
  Computer Science (LICS 2023).
\newblock \href {https://doi.org/10.1145/3703909} {\path{doi:10.1145/3703909}}.

\bibitem[Hod93]{Hodges93}
Wilfrid Hodges.
\newblock {\em Model Theory}.
\newblock Cambridge University Press, 1993.

\bibitem[Joh02]{Johnstone02}
Peter~T. Johnstone.
\newblock {\em Sketches of an Elephant: A Topos Theory Compendium}.
\newblock Oxford University Press, 2002.

\bibitem[JZ08]{JunkerZiegler08}
Markus Junker and Martin Ziegler.
\newblock The 116 reducts of $(\mathbb{Q},<,a)$.
\newblock {\em The Journal of Symbolic Logic}, 73(3):861--884, 2008.
\newblock \href {https://doi.org/10.2178/jsl/1230396752}
  {\path{doi:10.2178/jsl/1230396752}}.

\bibitem[Mac11]{Macpherson11}
Dugald Macpherson.
\newblock A survey of homogeneous structures.
\newblock {\em Discrete Mathematics}, 311(15):1599--1634, 2011.
\newblock \href {https://doi.org/10.1016/j.disc.2011.01.024}
  {\path{doi:10.1016/j.disc.2011.01.024}}.

\bibitem[MLM94]{MacLaneMoerdijk}
Saunders Mac~Lane and Ieke Moerdijk.
\newblock {\em Sheaves in Geometry and Logic: A First Introduction to Topos
  Theory}.
\newblock Springer, 1994.

\bibitem[MT11]{MacphersonTent11}
Dugald Macpherson and Katrin Tent.
\newblock Simplicity of some automorphism groups.
\newblock {\em Journal of Algebra}, 342:40--52, 2011.
\newblock \href {https://doi.org/10.1016/j.jalgebra.2011.05.021}
  {\path{doi:10.1016/j.jalgebra.2011.05.021}}.

\bibitem[Pao24]{Paolini24}
Gianluca Paolini.
\newblock The isomorphism problem for oligomorphic groups with weak elimination
  of imaginaries.
\newblock {\em Bulletin of the London Mathematical Society}, 56(8):2597--2614,
  2024.
\newblock \href {https://doi.org/10.1112/blms.13086}
  {\path{doi:10.1112/blms.13086}}.

\bibitem[Pit13]{Pitts13}
Andrew~M. Pitts.
\newblock {\em Nominal Sets: Names and Symmetry in Computer Science}.
\newblock Cambridge University Press, 2013.

\end{thebibliography}

\end{document}